\newtheorem{theorem}{Theorem}
\newtheorem{lemma}{Lemma}
\newtheorem{proposition}{Proposition}
\newtheorem{corollary}{Corollary}
\newtheorem{conjecture}{Conjecture}
\newtheorem{definition}{Definition}
\newtheorem{remark}{Remark}
\newtheorem{property}{Property}
\newtheorem{claim}{Claim}
\newtheorem{observation}{Observation}
\newtheorem{assumption}{Assumption}
\theoremstyle{definition} % Ensures normal (non-italic) text
\newtheorem{example}{Example}
\newcommand{\appnx}[1]{{\ifnum\Includeappendix=1{#1}\else{the appendix}\fi}}
\newcommand{\ifapp}[2]{\ifnum\Includeappendix=1{#1}\else{#2}\fi}
\algrenewcommand\algorithmicrequire{\textbf{Input:}}
\newcommand\abs[1]{\left\lvert#1\right\rvert}
\DeclareMathOperator*{\argmax}{arg\,max}
\newcommand{\ind}[1]{\mathbb{I}_{#1} }
\newcommand{\E}[1]{\mathbb{E}\left[ #1 \right]}
\newcommand{\var}[1]{\mathrm{Var}\left( #1 \right)}
\newcommand{\uni}[1]{\text{Uni}\left(#1\right)}
\newcommand{\prb}[1]{\Pr\left( #1 \right)}
\newcommand{\ber}[1]{\text{Bernoulli}\!\left(#1\right)}
\newcommand{\Del}[2]{\Delta^{#1}_{#2}}
\newcommand{\rt}[1]{r_{#1}^t}
\newcommand{\Rt}[1]{R_{#1}^t}
\newcommand{\dif}{\Delta}
\newcommand{\tdif}{\tilde\Delta}
\newcommand{\dift}{{\dif^t}}
\newcommand{\adif}[2]{\dif_{#1,#2}}
\newcommand{\adift}[2]{\dif^t_{#1,#2}}
\newcommand{\sdif}[2]{\dif_{\left(#1\right),\left(#2\right)}}
\newcommand{\sdift}[2]{\dif^t_{\left(#1\right),\left(#2\right)}}
\newcommand{\env}{\mathcal{E}}
\newcommand{\sw}{\textit{SW}}
\newcommand{\envavg}{\env_{\textnormal{avg}}}
\newcommand{\ordv}{\boldsymbol{\eta}}
\newcommand{\sigv}{\boldsymbol{\sigma}}
\newcommand{\ordr}{\boldsymbol{\eta}}
\newcommand{\ord}{\eta}
\newcommand{\ordname}{\mathcal O}
\newcommand{\uniord}{\ordname_{\textnormal{Uni}}}
\newcommand{\advord}{\ordname_{\textnormal{Adv}}} 
\newcommand{\sugord}{\ordname_{\textnormal{Ndg}}}
\newcommand{\sopt}{2OPT}
\newcommand{\is}{{i^\star}}
\newcommand{\js}{{j^\star}}
\newcommand{\ef}{EF1}
\newcommand{\efc}{EFC}
\begin{document}

\title{Envious Explore and Exploit}

% Anonymized submission.
\author{
Omer Ben{-}Porat%
\thanks{%
    {Technion---Israel Institute of Technology (\url{omerbp@technion.ac.il})}, corresponding author}
\and Yotam Ganfi%
\thanks{%
    {Weizmann Institute (\url{yotam.gafni@gmail.com})}}
\and Or Markovetzki%
\thanks{%
    {Technion---Israel Institute of Technology (\url{ormar@campus.technion.ac.il})}}
}
\maketitle
% Abstract. Note that this must come before \maketitle.
\begin{abstract}
Explore-and-exploit tradeoffs play a key role in recommendation systems (RSs), aiming at serving users better by learning from previous interactions. Despite their commercial success, the societal effects of explore-and-exploit mechanisms are not well understood, especially regarding the utility discrepancy they generate between different users.
In this work, we measure such discrepancy using the economic notion of envy. We present a multi-armed bandit-like model in which every round consists of several sessions, and rewards are realized once per round. We call the latter property \emph{reward consistency}, and show that the RS can leverage this property for better societal outcomes. On the downside, doing so also generates envy, as late-to-arrive users enjoy the information gathered by early-to-arrive users. We examine the generated envy under several arrival order mechanisms and virtually any \emph{anonymous} algorithm, i.e., any algorithm that treats all similar users similarly without leveraging their identities. We provide tight envy bounds on uniform arrival and upper bound the envy for nudged arrival, in which the RS can affect the order of arrival by nudging its users. Furthermore, we study the efficiency-fairness trade-off by devising an algorithm that allows constant envy and approximates the optimal welfare in restricted settings. Finally, we validate our theoretical results empirically using simulations.
\end{abstract}

% Paper body
\section{Introduction}
\label{chap: intro}
Multi-armed bandits (MABs) are a cornerstone of machine learning.
In this framework, a learner makes decisions sequentially, choosing an arm (i.e., an action) and subsequently receiving a reward in each round.
Central to MABs is the dilemma of exploiting known arms with historically high rewards and exploring lesser-known arms that could yield even higher rewards.
The literature includes diverse variants of the problem, including contextual bandits~\cite{chu2011contextual, slivkins2011contextual}, dueling bandits~\cite{yue2012k, cohen2021dueling}, non-stationary bandits~\cite{besbes2014stochastic, levine2017rotting}, and Lipschitz bandits~\cite{kleinberg2008multi}.
The main challenge in these works is to overcome reward uncertainty, compared to ideal scenarios with complete information.

However, with the widespread commercial deployment of explore-and-exploit systems, concerns have expanded beyond mere reward uncertainty to encompass broader societal and fairness issues.
Recent research considers facets of fairness, such as regulatory requirements \cite{patil2021achieving, bahar2020fiduciary,ron2021corporate}, fairness of exposure~\cite{ wang2021fairness, liu2017calibrated}, and meritocratic fairness \cite{joseph2016fairness}. While much of this work has tackled fairness from the perspective of the ``arms,'' user-centric fairness remains relatively underexplored. In particular, the concept of \textit{envy}, a well-established theme in fair division literature~\cite{moulin2004fair,procaccia2013cake}, has been somewhat overlooked in this context (see notable exceptions in Subsection~\ref{sec: rw}). This is especially important because perceived unfairness, as captured by envy, can significantly affect user satisfaction and the overall trustworthiness of the system.

%In this work, we take the first steps toward examining envy within explore-and-exploit systems by modeling two key behaviors that are characteristic of real-world, user-centric applications:
%In this work, we contribute to the growing body of work on envy within explore-and-exploit systems. We model two key behaviors that are characteristic of real-world, user-centric applications:
In this work, we contribute to the research on envy within explore-and-exploit systems by modeling two key behaviors that characterize real-world, user-centric applications:
\begin{enumerate}[wide, labelwidth=0pt, labelindent=0pt]
    \item \textit{Recurring users}: Rather than being one-time visitors, users often return to the system repeatedly. For example, navigation apps experience periodic engagement as users rely on them regularly.
    \item \textit{Reward consistency}: Although rewards may fluctuate stochastically, they tend to remain stable over specific periods. For instance, the quality of a restaurant’s daily special is generally consistent throughout the day, and travel times on a frequently used route remain largely unchanged over short intervals.
\end{enumerate}
These assumptions introduce nuances in the user experience that foster envy. Due to the dynamics of exploration, some users may explore more frequently and consequently receive lower rewards. This discrepancy in user experience can lead to reluctance to engage with the system or follow its recommendations.

\subsection{Our Contribution}
\label{sec: contribution}
Our contribution is two-fold. Conceptually, we advance the understanding of envy dynamics in explore-and-exploit systems using a stylized model. Our framework encompasses $N$ homogeneous agents, $K$ arms and $T$ \emph{rounds}. Each round comprises multiple \emph{sessions}, wherein distinct agents interact with the system. During each session, the algorithm selects an arm on behalf of the corresponding agent. We assume that arm rewards are independently sampled in every round but are \emph{consistent} across all sessions within a given round. Within this framework, algorithms typically aim to optimize an aggregate performance metric---such as maximizing social welfare, enhancing risk-adjusted outcomes, or minimizing cumulative regret. Consequently, each round involves both exploration---unveiling rewards for the current session---and exploitation, wherein past session information is leveraged to benefit subsequent agents. 

In this context, agents participating in earlier sessions explore more frequently, while agents arriving later benefit from the information gathered by their predecessors.
This gives rise \emph{envy}, defined as the maximal disparity in cumulative rewards among agents. As expected, the manner in which agents enter a round plays a pivotal role in influencing envy.
We formalize this arrival process as an \emph{arrival function}, which dictates the sequence of agent arrivals within a round. We propose three arrival functions: Uniform, denoted as $\uniord$, which uniformly determines the order of agent arrival for each round; nudged arrival, denoted as $\sugord$, allowing the system to influence the order of arrival, potentially mitigating envy across successive rounds; and adversarial arrival, denoted as $\advord$, establishing a worst-case scenario for agent arrival order from an envy perspective.

Our second contribution is technical: We characterize the envy of \emph{anonymous} algorithms, i.e., invariant under any permutation of the agent order. Focusing on such algorithms is justified since agents have homogeneous preferences, and their identities are irrelevant for optimizing any aggregated metric. For the uniform arrival $\uniord$, Section~\ref{sec:uniform} establishes an upper bound of $O\Big(\sqrt{\ln{(N)} \sum^{T}_{t=1}{\var{\dift}} }\Big)$, where $\Delta^t$ is the average reward discrepancy in round $t$ and $N$ is the number of agents. Crucially, $\Delta^t$ depends on both the algorithm at hand and the reward distributions. Under mild assumptions, this bound becomes $O( \sqrt{\nicefrac{TK\ln(N)}{N}})$. We also prove an (almost) matching lower bound of $\Omega\Big(\sqrt{\sum^{T}_{t=1}{\var{\dift}} }\Big)$. 

For the nudged arrival $\sugord$, Section~\ref{sec:nudge} derives an upper bound of $O(\nicefrac{N}{\delta})$, where $\delta \in (0,1)$ measures the degree to which the system can influence the arrival order. Surprisingly, this bound only depends on the number of agents but not the number of rounds $T$. We further establish a tight linear bound of $\Theta(T)$ for the adversarial arrival, with constants hidden in the bound depending on the specific instance. Our core techniques heavily rely on martingale and excursion theory, enabling the analysis of (possibly dependent) reward discrepancies. 

While our primary aim is to understand and characterize the envy generated by any arbitrary algorithm, we also take initial steps toward balancing social welfare and envy. In Section~\ref{sec:extensions}, we focus on uniform rewards and introduce an algorithm that can attain any point on the efficiency-fairness Pareto frontier. Formally, it receives a positive number $C \in \mathbb R_{+}$, and ensures that envy would not grow beyond $C$ almost surely. We show that for $C=1$, namely, an envy of at most 1,  our algorithm is able to gain almost half of the additional welfare compared to a no-envy algorithm.

\subsection{Related Work}
\label{sec: rw}
From the economic perspective, there has been extensive research in the realm of fairness~\cite{conitzer2017fair, Kahneman86, shapley_shubik_1954, dubey1975uniqueness,shapley1974cores, shapley1971cores}. Envy-freeness is a central concept in the fair division and allocation literature, for instance, in cake-cutting~\cite{Steven95,aziz2016discrete,cohler2011optimal}. Envy-freeness refers to a situation where no individual envies the allocation of another. 
%One of the primary tensions of envy is that while guaranteeing \textit{ex-ante} fairness through randomness may seem feasible, it might lead to \textit{ex-post} unfairness under any realization~\cite{chen2017ignorance, bestofbothworlds,apportionment2022}. For instance, flipping a fair coin and distributing one million USD to one of two agents seems ex-ante envy-free but never achieves ex-post envy-freeness. Our analysis navigates both these terrains: We investigate both expected envy (referring to ex-ante envy) and high-probability envy (pertaining to ex-post envy).
%A significant tension associated with envy is its relationship with efficiency (see, e.g., the overview of results in \cite{Arnsperger94}).
%There is a growing body of research on \emph{dynamic fair allocation} \cite{hassanzadeh2023sequential, benade2018make,kash2014no,zeng2020tradeoffs,sinclair2022sequential}, exemplified by food bank operations, hones in on this fairness-efficiency balance. Our work addresses this tension as well and suggests solutions to tackle it.
Building on this foundation, a growing body of research on \emph{dynamic fair allocation} \cite{hassanzadeh2023sequential, benade2018make,kash2014no,zeng2020tradeoffs,sinclair2022sequential}, exemplified by applications such as food bank operations, explores ways for allocating resources dynamically. %Our work aligns with this literature, focusing on dynamic allocation mechanisms.

Within this literature, \citet{benade2018make} is the work most related to ours. The authors model the dynamic fair division of $T$ indivisible goods that arrive online and must be immediately allocated to exactly $n$ heterogeneous agents who assign a value in $[0,1]$ to each item. Their goal is to design allocation algorithms that minimize the maximum envy at time $T$, where envy is defined as the largest difference between any agent's overall value for items allocated to another agent and her own. They design an efficient algorithm with an envy of $\tilde O(\sqrt{T})$, and show that this guarantee is asymptotically optimal. Despite the many alignments in the underlying mathematical framework and results, there are significant distinctions between the work of \citet{benade2018make} and our work.  

First and foremost, while envy in \citet{benade2018make} arises from heterogeneous valuations among agents, our model involves homogeneous agents. To emphasize this point, in the model of  \citet{benade2018make}, if agents were homogeneous, then simply allocating each item to the agent with the lowest cumulative reward would yield an envy of at most one and not $\tilde O(\sqrt{T})$. Second, in our work, envy is generated due to arrival order rather than solely by allocation decisions like \citet{benade2018make}. In our setting, the same arm (or resource) can be assigned to several agents, with its utility being revealed only after it is pulled for the first agent; thus, envy emerges from the inherent explore-and-exploit tradeoffs. Third, while \citet{benade2018make} aim \emph{to design} fair algorithms, we focus on \emph{measuring} envy under virtually any algorithm and different arrival order mechanisms. Indeed, in our work, the decision-making algorithm could operate optimistically like UCB~\cite{auer2002finite}, follow Thompson sampling~\cite{agrawal2012analysis}, or leverage deep learning techniques to learn reward distribution over time. %These fundamental differences motivate our investigation into the interplay between exploration, exploitation, and fairness in recommendation systems.

From the machine learning perspective, fairness has attracted increasing interest over the past decade (see, e.g., \cite{Mehrabi19,caton2020fairness,pessach2022review} for recent surveys).
Within the domain of fair classification~\cite{Kleinberg16}, several works apply a fair division lens to evaluate the equity of predicted classes relative to protected attributes~\cite{Hossain20-2, ben2021protecting}. Some works consider fairness in MABs but overlook multi-agent complexities~\cite{wang2021fairness,liu2017calibrated}, and other works consider multi-agent MABs but sidestep fairness considerations~\cite{chakraborty2017coordinated,bargiacchi2018learning}. 

A work that is similar to ours is \citet{Hossain20} and its extensions~\cite{jones2023efficient}.
The authors introduce a multi-agent MABs model, in which a pull of an arm simultaneously serves multiple agents.
The objective of \cite{Hossain20} is minimizing regret with respect to the \textit{Nash social welfare}, a celebrated concept in the economic literature~\cite{kaneko1979nash}. Despite apparent parallels, our work and \cite{Hossain20} differ fundamentally.
For instance, while they assume a single arm is pulled each round to serve all agents, we permit different agents to pull different arms within a round.
Moreover, our model captures intra-round information exploitation due to the consistency of rewards, an angle that they do not model.
Their emphasis is on the Nash product as an objective, known for its envy-freeness properties (EF1)~\cite{caragiannis2019unreasonable}. We, on the other hand, aim to measure the envy induced by any anonymous algorithm.

Another pertinent line of work involves incentivizing exploration~\cite{Kremer14,bahar2020fiduciary,frazier2014incentivizing, Mansour15,simchowitz2024exploration}.
In the framework proposed by~\cite{Kremer14}, algorithms can suggest which arms agents should pull. However, agents are strategic and might opt not to follow the algorithm's recommendation, particularly when it comes to exploration.
This necessitates the design of incentive-compatible mechanisms~\cite{Roughgarden10}. Interestingly, this reluctance to embrace exploration is also central to our work because exploration inherently induces envy. Nevertheless, our approach diverges; we assume agents are non-strategic and the algorithm is in charge of arm-pulling, treating envy as a quantifiable undesired outcome.

Finally, while this work focuses on analyzing envy dynamics for arbitrary algorithms, designing socially optimal algorithms under our model is nontrivial. Because rewards remain constant for extended periods, the problem shares many characteristics with the Pandora’s Box framework \cite{weitzman1978optimal,boodaghians2020pandora,berger2023pandora,amanatidis2024pandora} and prophet inequalities \cite{agrawal2020optimal,brustle2024competition}. We expand on these connections in Section~\ref{sec:discussion}.

% the problem of finding an optimal arm in each round and the lines of research of 
% Pandora's box'' problem~\cite{weitzman1978optimal} and

% . There are different and challenging variations of the ``Pandora's box'' \cite{agrawal2020optimal, boodaghians2020pandora, berger2023pandora} problem for which different solutions have been proposed.
% These works might assist in finding an optimal algorithm.
% For example, \citet{berger2023pandora} offers a variation of the classical Pandora's box problem with combinatorial opening costs. That is, each set of boxes has a total cost, and the cost of opening each box depends on the number of boxes that have been opened so far. As we show later in the work, tools that have been developed for Pandora's box are largely futile for our problem, besides some degenerate cases. 

% \cite{boodaghians2020pandora}
% \cite{amanatidis2024pandora} - pandora's box over time
% \cite{fu2023pandora} without inspection %WIP
\section{Model}
\label{sec:model}
Let $[N] = \{1, 2, \dots, N \}$ be a set of $N$ agents.
We examine an environment in which a system interacts with the agents over $T$ rounds.
Every round $t\leq T$ comprises $N$ \emph{sessions}, each session represents an encounter of the system with exactly one agent, and each agent interacts exactly once with the system every round.
I.e., in each round $t$ the agents arrive sequentially.

\paragraph{Arrival order} The \emph{arrival order} of round $t$, denoted as $\ordv_t=(\ord_t(1),\dots, \ord_t(N))$, is an element from set of all permutations of $[N]$. Each entry $q$ in $\ordv_t$ is the index of the agent that arrives in the $q^{\text{th}}$ session of round $t$.
For example, if $\ord_t(1) = 2$, then agent $2$ arrives in the first session of round $t$.
Correspondingly, $\ord_t^{-1}(i)=q$ implies that agent $i$ arrives in the $q^{\text{th}}$ session of round $t$. 

As we demonstrate later, the arrival order has an immediate impact on agent rewards. We call the mechanism by which the arrival order is set \emph{arrival function} and denote it by $\ordname$. Throughout the paper, we consider several arrival functions such as the \emph{uniform arrival} function, denoted by $\uniord$, and the \emph{nudged arrival} $\sugord$; we introduce those formally in Sections~\ref{sec:uniform} and~\ref{sec:nudge}, respectively.

%We elaborate more on this concept in Section~\ref{sec: arrival}.

\paragraph{Arms} We consider a set of $K \geq 2$ arms, $A = \{a_1, \ldots, a_K\}$. The reward of arm $a_i$ in round $t$ is a random variable $X_i^t \sim \mathcal{D}^t_i$, where the rewards $(X_i^t)_{i,t}$ are mutually independent and bounded within the interval $[0,1]$. The reward distribution $\mathcal{D}^t_i$ of arm $a_i$, $i\in [K]$ at round $t\in T$ is assumed to be non-stationary but independent across arms and rounds. We denote the realized reward of arm $a_i$ in round $t$ by $x_i^t$. We assume \emph{reward consistency}, meaning that rewards may vary between rounds but remain constant within the sessions of a single round. Specifically, if an arm $a_i$ is selected multiple times during round~$t$, each selection yields the same reward $x_i^t$, where the superscript $t$ indicates its dependence on the round rather than the session. This consistency enables the system to leverage information obtained from earlier sessions to make more informed decisions in later sessions within the same round. We provide further details on this principle in Subsection~\ref{subsec:information}.

\paragraph{Algorithms} An algorithm is a mapping from histories to actions. We typically expect algorithms to maximize some aggregated agent metric like social welfare. Let $\mathcal H^{t,q}$ denote the information observed during all sessions of rounds $1$ to $t-1$ and sessions $1$ to $q-1$ in round $t$.  The history $\mathcal H^{t,q}$ is an element from $(A \times [0,1])^{(t-1) \cdot N +q-1}$, consisting of pairs of the form (pulled arm, realized reward). Notice that we restrict our attention to \emph{anonymous} algorithms, i.e., algorithms that do not distinguish between agents based on their identities. Instead, they only respond to the history of arms pulled and rewards observed, without conditioning on which specific agent performed each action.
%In the most general case, algorithms make decisions at session $q$ of round $t$  based on the entire history $\mathcal H^{t,q}$ and the index of the arriving agent $\ord_t(q)$. %Furthermore, we sometimes assume that algorithms have Bayesian information, i.e., algorithms are aware of the distributions $(\mathcal D_i)^K_{i=1}$. 
Furthermore, we sometimes assume that algorithms have Bayesian information, meaning they are aware of the reward distributions $(\mathcal{D}^t_i)_{i,t}$. If such an assumption is required to derive a result, we make it explicit. %Otherwise, we do not assume any additional knowledge about the algorithm’s information. %This distinction allows us to analyze both general algorithms without prior distributional knowledge and specialized algorithms that leverage Bayesian information.

\paragraph{Rewards} Let $\rt{i}$ denote the reward received by agent $i \in [N]$ at round $t$, and let $\Rt{i}$ denote her cumulative reward at the end of round $t$, i.e., $\Rt{i} = \sum_{\tau=1}^{t}{r^{\tau}_{i}}$. We further denote the \emph{social welfare} as the sum of the rewards all agents receive after $T$ rounds. Formally, $\sw=\sum^{N}_{i=1}{R^T_i}$. We emphasize that social welfare is independent of the arrival order.

\paragraph{Envy}
We denote by $\adift{i}{j}$ the reward discrepancy of agents $i$ and $j$ in round $t$; namely, $\adift{i}{j}= \rt{i} - \rt{j}$. %We call this term \omer{name??} reward discrepancy in round $t$. 
The (cumulative) \emph{envy} between two agents at round $t$ is the difference in their cumulative rewards. Formally, $\env_{i,j}^t= \Rt{i} - \Rt{j}$ is the envy after $t$ rounds between agent $i$ and $j$. We can also formulate envy as the sum of reward discrepancies, $\env_{i,j}^t= \sum^{t}_{\tau=1}{\adif{i}{j}^\tau}$. Notice that envy is a signed quantity and can be either positive or negative. Specifically, if $\env_{i,j}^t < 0$, we say that agent $i$ envies agent $j$, and if $\env_{i,j}^t > 0$, agent $j$ envies agent $i$. The main goal of this paper is to investigate the behavior of the \emph{maximal envy}, defined as
\[
\env^t = \max_{i,j \in [N]} \env^t_{i,j}.
\]
For clarity, the term \emph{envy} will refer to the maximal envy.\footnote{ We address alternative definitions of envy in Section~\ref{sec:discussion}.} % Envy can also be defined in alternative ways, such as by averaging pairwise envy across all agents. We address average envy in Section~\ref{sec:avg_envy}.}
Note that $\env_{i,j}^t$ are random variables that depend on the decision-making algorithm, realized rewards, and the arrival order, and therefore, so is $\env^t$. If a result we obtain regarding envy depends on the arrival order $\ordname$, we write $\env^t(\ordname)$. Similarly, to ease notation, if $\ordname$ can be understood from the context, it is omitted.

\paragraph{Further Notation} We use the subscript $(q)$ to address elements of the $q^{\text{th}}$ session, for $q\in [N]$.
That is, we use the notation $\rt{(q)}$ to denote the reward granted to the agent that arrives in the $q^{\text{th}}$ session of round $t$ and $\Rt{(q)}$ to denote her cumulative reward. %Additionally, we introduce the notation $\at{(q)}$ to denote the arm pulled in that session.
Correspondingly, $\sdift{q}{w} = \rt{(q)} - \rt{(w)}$ is the reward discrepancy of the agents arriving in the $q^{\text{th}}$ and $w^{\text{th}}$ sessions of round $t$, respectively. 
To distinguish agents, arms, sessions and rounds, we use the letters $i,j$ to mark agents and arms, $q,w$ for sessions, and $t,\tau$ for rounds.

\subsection{Example}
\label{sec: example}
To illustrate the proposed setting and notation, we present the following example, which serves as a running example throughout the paper.

\begin{table}[t]
\centering
\begin{tabular}{|c|c|c|c|}
\hline
$t$ (round) & $\ordv_t$ (arrival order) & $x_1^t$ & $x_2^t$ \\ \hline
1           & 2, 1                     & 0.6     & 0.92    \\ \hline
2           & 1, 2                     & 0.48    & 0.1     \\ \hline
3           & 2, 1                     & 0.15    & 0.8     \\ \hline
\end{tabular}
\caption{
    Data for Example~\ref{example 1}.
}
\label{tbl: example}
\end{table}

\begin{algorithm}[t]
\caption{Algorithm for Example~\ref{example 1}}
\label{alguni}
\DontPrintSemicolon 
\For{round $t = 1$ to $T$}{
    pull $a_{1}$ in the first session\label{alguniexample: first}\\
    \lIf{$x^t_1 \geq \frac{1}{2}$}{pull $a_{1}$ again in second session \label{alguniexample: pulling a again}}
    \lElse{pull $a_{2}$ in second session \label{alguniexample: sopt else}}
}
\end{algorithm}

\begin{example}\label{example 1}
We consider $K=2$ uniform arms, $X_1,X_2 \sim \uni{0,1}$, and $N=2$ for some $T\geq 3$. We shall assume arm decision are made by Algorithm~\ref{alguni}: In the first session, the algorithm pulls $a_1$; if it yields a reward greater than $\nicefrac{1}{2}$, the algorithm pulls it again in the second session (the ``if'' clause). Otherwise, it pulls $a_2$.

We further assume that the arrival orders and rewards are as specified in Table~\ref{tbl: example}. Specifically, agent 2 arrives in the first session of round $t=1$, and pulling arm $a_2$ in this round would yield a reward of $x^1_2 = 0.92$. Importantly, \emph{this information is not available to the decision-making algorithm in advance} and is only revealed when or if the corresponding arms are pulled.

In the first round, $\boldsymbol{\eta}^1 = \left(2,1\right)$; thus, agent 2 arrives in the first session.
The algorithm pulls arm $a_1$, which means, $a^1_{(1)} = a_1$, and the agent receives $r_{2}^1=r_{(1)}^1=x_1^1=0.6$.
Later that round, in the second session, agent 1 arrives, and the algorithm pulls the same arm again since $x^1_1 = 0.6 \geq \nicefrac{1}{2}$ due to the ``if'' clause.
I.e., $a^1_{(2)} = a_1$ and $r_{1}^1 = r_{(2)}^1 = x_1^1 = 0.6$.
Even though the realized reward of arm $a_2$ in that round is higher ($0.92$), the algorithm is not aware of that value.
At the end of the first round, $R^1_1 = R^1_{(2)} = R^1_2 = R^1_{(1)} = 0.6$. The reward discrepancy is thus $\adif{1}{2}^1 = \adif{2}{1}^1= \sdif{2}{1}^1 = 0.6 - 0.6 =0$.

In the second round, agent 1 arrives first, followed by agent 2.
Firstly, the algorithm pulls arm $a_1$ and agent 1 receives a reward of $r_{1}^2 = r_{(1)}^2 = x_1^2 = 0.48$.
Because the reward is lower than $\nicefrac{1}{2}$, in the second session the algorithm pulls the other arm ($a^2_{(2)} = a_2$), granting agent 2 a reward of $r_{2}^2 = r_{(2)}^2 = x_2^2 = 0.1$.
At the end of the second round, $R^2_1 = R^2_{(1)} = 0.6 + 0.48 = 1.08$ and $R^2_2 = R^2_{(2)} = 0.6 + 0.1 = 0.7$. Furthermore, $\sdif{2}{1}^2 = \adif{2}{1}^2 = r^2_{2} - r^2_{1} = 0.1 - 0.48 = -0.38$.

In the third and final round, agent 2 arrives first again, and receives a reward  of $0.15$ from $a_1$. When agent 1 arrives in the second session, the algorithm pulls arm $a_2$, and she receives a reward of $0.8$. As for the reward discrepancy, $\sdif{2}{1}^3 = \adif{2}{1}^3 = r^3_{2} - r^3_{1} = 0.15 - 0.8 = -0.75$. 

Finally, agent 1 has a cumulative reward of $R^3_1 = R^3_{(2)} = 0.6 + 0.48 + 0.8 = 1.88$, whereas agent~2 has a cumulative reward of $R^3_2 = R^3_{(1)} = 0.6 + 0.1 + 0.15 = 0.85$. In terms of envy, $\env^1_{1,2}= \adif{1}{2}^1 =0$, $\env^2_{1,2}=\adif{1}{2}^1+\adif{1}{2}^2= 0.38$, and $\env^3_{1,2} = -\env^3_{2,1} = R^3_1-R^3_2 = 1.88-0.85 = 1.03$, and consequently the envy in round 3 is $\env^3 = 1.03$.
\end{example}

\subsection{Information Exploitation}
\label{subsec:information}

In this subsection, we explain how algorithms can exploit intra-round information.
Since rewards are consistent in the sessions of each round, acquiring information in each session can be used to increase the reward of the following sessions.
In other words, the earlier sessions can be used for exploration, and we generally expect agents arriving in later sessions to receive higher rewards.
Taken to the extreme, an agent that arrives after all arms have been pulled could potentially obtain the highest reward of that round, depending on how the algorithm operates.

To further demonstrate the advantage of late arrival, we reconsider Example~\ref{example 1} and Algorithm~\ref{alguni}. 
The expected reward for the agent in the first session of round $t$ is $\E{\rt{(1)}}=\mu_1=\frac{1}{2}$, yet the expected reward of the agent in the second session is
\begin{align*}
\E{\rt{(2)}}=\E{\rt{(2)}\mid X^t_1 \geq \frac{1}{2} }\prb{X^t_1 \geq \frac{1}{2}} + \E{\rt{(2)}\mid X^t_1 < \frac{1}{2} }\prb{X^t_1 < \frac{1}{2}};
\end{align*}
thus, $\E{\rt{(2)}} =\E{X^t_1\mid X^t_1 \geq \frac{1}{2} }\cdot \frac{1}{2} + \mu_2\cdot\frac{1}{2} = \frac{5}{8}$.
Consequently, the expected welfare per round is $\E{\rt{(1)}+\rt{(2)}}=1+\frac{1}{8}$, and the benefit of arriving in the second session of any round $t$ is $\E{\rt{(2)} - \rt{(1)}} = \frac{1}{8}$. This gap creates envy over time, which we aim to measure and understand.
%This discrepancy generates envy over time, and our paper aims to better understand it.
\subsection{Socially Optimal Algorithms}
\label{sec: sw}
Since our model is novel, particularly in its focus on the reward consistency element, studying social welfare maximizing algorithms represents an important extension of our work. While the primary focus of this paper is to analyze envy under minimal assumptions about algorithmic operations, we also make progress in the direction of social welfare optimization. See more details in Section~\ref{sec:discussion}.%Due to space limitations, we defer the discussion on socially optimal algorithms to  \ifnum\Includeappendix=0{the appendix}\else{Section~\ref{appendix:sociallyopt}}\fi.

% Since our model is novel and specifically the reward consistency element, it might be interesting to study social welfare optimization. While the main focus of our paper is to study envy under minimal assumptions on how the algorithm operates, we take steps toward this direction as well. Due to space limitations, we defer the discussion on socially optimal algorithms to  \ifnum\Includeappendix=0{the appendix}\else{Section~\ref{appendix:sociallyopt}}\fi.  We devise a socially optimal algorithm for the two-agent case, offer efficient and optimal algorithms for special cases of $N>2$ agents, and an inefficient and approximately optimal algorithm for any instance with $N>2$. Moreover, we address the welfare-envy tradeoff in Section~\ref{sec:extensions}.

% Social welfare, unlike envy, is entirely independent of the arrival order. While the main focus of our paper is to study envy under minimal assumptions on how the algorithm operates, socially optimal algorithms might also be of interest. Due to space limitations, we defer the discussion on socially optimal algorithms to  \ifnum\Includeappendix=0{the appendix}\else{Section~\ref{appendix:sociallyopt}}\fi. We devise a socially optimal algorithm for the two-agent case, offer efficient and optimal algorithms for special cases of $N>2$ agents, and an inefficient and approximately optimal algorithm for any instance with $N>2$. %Furthermore, we treat the welfare-envy tradeoff of the special case of Example~\ref{example 1}.

\section{Uniform Arrival}
\label{sec:uniform}
In this section, we assume agents arrive uniformly; that is, the uniform arrival function, $\uniord$, picks in every round $t$ a permutation $\ordv_t$ uniformly at random from the set of all distributions $\textnormal{Perm}([N])$. We start with an insight into reward discrepancies for uniform arrival. Then, in Subsections~\ref{subsec:uni upper} and~\ref{subsec:uni lower}, we provide upper and lower bounds on the expected envy, respectively. %and high probability envy.

Recall that $\dif^t_{i,j}$ is the signed reward discrepancy between agents~$i$ and~$j$ at round~$t$. This quantity depends on both the algorithm, the reward distribution, and the arrival function. 
For example, if an algorithm always pulls $a_1$, then $\dif^t_{i,j} = 0$ almost surely for every $i,j,t$, since all agents receive the same reward. By contrast, for more general algorithms, $\dif^t_{i,j}$ can vary almost arbitrarily, reflecting different approaches to exploration and exploitation over time. Under uniform arrival, each round's agent order is chosen uniformly at random from all permutations of $[N]$, ensuring identical treatment of all agents in expectation.  Consequently, the reward discrepancies exhibit a symmetry:
\begin{remark}\label{remark: symmetric dif}
Under $\uniord$ and any algorithm, the random variables $\dif^t_{i,j}$ in round $t$ are identically distributed for all pairs $(i,j)\in [N]^2$.
\end{remark}

Due to Remark~\ref{remark: symmetric dif}, we simplify the notation in this section and use $\dif^t$ to denote the reward discrepancy distribution between any two agents. Clearly, $\E{\dif^t}=0$. Note that the random variables $\adift{i_1}{j_1}$ and $\adift{i_2}{j_2}$ are still correlated for different pairs of agents $(i_1,j_1)$ and $(i_2,j_2)$. For example, $\adift{i}{j}= -\adift{j}{i}$ for all $i,j$. 
Moreover, $\dif^t$ might depend on the rewards obtained in previous rounds, reflecting temporal correlations that are typical in multi-armed bandit settings (e.g., stationary rewards).  As a result, a high reward discrepancy in earlier rounds might suggest a more thorough exploration, leading to a low reward discrepancy in later rounds. Consequently, we cannot assume that $(\dif^t)_{t=1}^T$ are independent.

\subsection{Upper Bound}\label{subsec:uni upper}
Our first result is an upper bound on the expected envy.
\begin{theorem}\label{thm: uni upper-bound}
When executing any algorithm, it holds that
\[\E{\max_{1\leq t \leq T} \env^t (\uniord)} \leq 2\sqrt{\ln{(N)} \sum^{T}_{t=1}{\var{\dift}} }.\]
\end{theorem}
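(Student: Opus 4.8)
The plan is to express the maximal envy $\env^t$ in terms of the pairwise envies $\env^t_{i,j} = \sum_{\tau=1}^t \dift_{i,j}$ and control the maximum over $t$ and over pairs $(i,j)$ simultaneously via a maximal inequality for martingales combined with a union bound over the $\binom{N}{2}$ pairs. The key structural observation is that, for each fixed pair $(i,j)$, the sequence $M^t_{i,j} := \env^t_{i,j} = \sum_{\tau=1}^t \dift_{i,j}$ is a martingale with respect to the natural filtration: since the algorithm is anonymous and the arrival order in round $t$ is an independent uniform permutation, $\E{\dift_{i,j} \mid \mathcal H^{t,1}} = 0$ regardless of the history (the conditional law of $\dift_{i,j}$ is symmetric because swapping the roles of $i$ and $j$ in the round-$t$ permutation has no effect on the algorithm's behaviour). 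Hence $(M^t_{i,j})_t$ is a zero-mean martingale, and $\var{\dift_{i,j}} = \var{\dift}$ is the same for all pairs by Remark~\ref{remark: symmetric dif}.

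The main tool I would invoke is a maximal exponential (Bernstein/Freedman-type) inequality, or more simply the Doob-type bound obtained from the exponential supermartingale $\exp(\lambda M^t_{i,j} - \tfrac{\lambda^2}{2} V^t)$ where $V^t$ is an upper bound on the predictable quadratic variation; since $|\dift_{i,j}| \le 1$ we have $V^T \le \sum_{\tau=1}^T \var{\dift}$ up to the usual sub-Gaussian constant, though here the cleaner route is to use that the $\dift_{i,j}$ are bounded in $[-1,1]$ with conditional variance summing to $\sum_\tau \var{\dift}$ (in expectation). Concretely, I would first symmetrize: $\max_{i,j}\env^t_{i,j} = \max_{i,j}|\env^t_{i,j}|$ because if $j$ envies $i$ then $\env^t_{i,j}>0$ and the max is attained at an ordered pair. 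Then
\[
\E{\max_{1\le t\le T}\max_{i,j}\env^t_{i,j}}
\le \E{\max_{i<j}\max_{1\le t\le T}|M^t_{i,j}|},
\]
and I would bound the right-hand side by a standard maximal-inequality-plus-union-bound argument: for a single pair, Doob's maximal inequality applied to the exponential supermartingale gives $\E{\max_t |M^t_{i,j}|} \lesssim \sqrt{\ln(\text{number of pairs})\cdot \sum_\tau \var{\dift}}$ after optimizing $\lambda$, and taking the max over $\binom N2 < N^2$ pairs contributes a $\sqrt{\ln N^2} = \sqrt{2\ln N}$ factor, which together with the constants is what produces the stated bound $2\sqrt{\ln(N)\sum_{t=1}^T \var{\dift}}$.

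The delicate points, and where I expect to spend the most care, are twofold. First, $\var{\dift}$ is an unconditional variance, whereas the martingale concentration machinery naturally wants the \emph{conditional} variances $\E{(\dift_{i,j})^2 \mid \mathcal H^{t,1}}$, which are random and need not sum to $\sum_\tau \var{\dift}$ pathwise; I would handle this by taking expectations inside the exponential-supermartingale argument (working with $\E{\exp(\lambda M^T_{i,j})}$ directly and using $\E{e^{\lambda \dift_{i,j}} \mid \mathcal H} \le \exp(c\lambda^2 \E{(\dift_{i,j})^2\mid\mathcal H})$ for bounded increments, then iterating expectations via the tower rule) so that only the total unconditional variance $\sum_\tau \var{\dift}$ appears. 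Second, getting the constant down to exactly $2$ requires being somewhat careful in the optimization over $\lambda$ and in the passage from a tail bound to an expectation bound (integrating the tail, or using the standard fact that $\E{\max}$ of sub-Gaussian-type variables with parameter $\sigma^2$ over $m$ items is at most $\sigma\sqrt{2\ln m}$); I would choose the sub-Gaussian constant bookkeeping to match, noting $|\dift_{i,j}|\le 1$ gives a clean per-step bound. Everything else — the martingale property, Remark~\ref{remark: symmetric dif}, and the union bound — is routine once these two points are pinned down.
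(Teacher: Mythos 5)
Your overall architecture matches the paper's: you treat each pairwise envy $\env^t_{i,j}=\sum_{\tau\le t}\Delta^\tau_{i,j}$ as a zero-mean martingale (justified by the symmetry of the uniform arrival), you control $\max_{t}\abs{\env^t_{i,j}}$ by an exponential-moment argument with subgaussian parameter $\sqrt{\sum_t\var{\dift}}$, and you finish with a max-of-subgaussians bound over the $N^2$ ordered pairs, which produces exactly the $\sqrt{2\ln(N^2)}=2\sqrt{\ln N}$ factor. This is precisely the paper's decomposition (its Proposition~\ref{prop:envy is good SG} plus Claim~\ref{claim: sg max}).

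However, there is a genuine gap in the one step you yourself flag as delicate: passing from conditional to unconditional variance. You propose to bound $\E{e^{\lambda \Delta^t_{i,j}}\mid\mathcal H}\le\exp\bigl(c\lambda^2\,\E{(\Delta^t_{i,j})^2\mid\mathcal H}\bigr)$ and then ``iterate the tower rule so that only the total unconditional variance appears.'' That iteration does not do what you need: it yields $\E{e^{\lambda M^T_{i,j}}}\le\E{\exp\bigl(c\lambda^2\sum_t V_t\bigr)}$ with $V_t=\E{(\Delta^t_{i,j})^2\mid\mathcal H^{t,1}}$ random, and Jensen's inequality runs in the wrong direction to replace $\sum_t V_t$ by $\sum_t\E{V_t}=\sum_t\var{\dift}$ inside the exponential. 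The generic bounded-increment MGF bound, on the other hand, only gives a deterministic per-step parameter of order $1$ and hence $O(\sqrt{T\ln N})$, not the claimed variance-dependent bound. The paper closes this gap differently: it exploits that the conditional law of $\dift$ is \emph{symmetric about zero} (again by the uniform permutation), so a third-order Taylor expansion with Lagrange remainder kills the first and third moments and yields $\E{e^{\lambda\dift}}\le\exp(\lambda^2\var{\dift}/2)$ with the \emph{deterministic} parameter $\sqrt{\var{\dift}}$ (its Proposition~\ref{thm: symmetric bounded sg}); these deterministic per-step parameters are then composed by induction into $\sqrt{\sum_t\var{\dift}}$-subgaussianity of the running maximum (Lemma~\ref{sg of sup of martingale}). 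Your proof plan is missing this symmetry-based MGF bound, and without it the constant-in-front-of-the-variance route you describe does not go through.
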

\begin{proof}[Proof of Theorem~\ref{thm: uni upper-bound}]
The proof leverages properties of martingales and subgaussian random variables. Since the rewards are bounded, the resulting envy is also bounded and therefore subgaussian with some parameter. However, because envy is a sum of \emph{possibly dependent} random variables, additional arguments are needed to obtain a sharper subgaussian parameter. The following proposition is the primary technical ingredient in achieving this refinement.
\begin{proposition}\label{prop:envy is good SG}
For any two arbitrary agents $i,j\in [N]$, $\max_{1\leq t \leq T} \env^t_{i,j}$ is $\sqrt{\sum^T_{t=1}{\var{\dift}}}$-subgaussian.
\end{proposition}
We prove this proposition in \ifapp{Section~\ref{appendix:uni}}{the appendix}. Equipped with Proposition~\ref{prop:envy is good SG}, we can use the following well-known property of the maximum of subgaussian random variables.
\begin{claim}\label{claim: sg max}
    Let $Y_1, \ldots, Y_n$ be (possibly dependent) $\sigma$-subgaussian random variables. It holds that
    \[
    \E{\max_{i \in [n]}{\{Y_i\}} } \leq \sqrt{2\sigma^2 \ln{(n)} }.
    \]
\end{claim}
Claim~\ref{claim: sg max} is folklore, but we include its proof in \ifapp{Section~\ref{appendix:uni}}{the appendix} for completeness. Using this claim along with Proposition~\ref{prop:envy is good SG}, we obtain 
\[
\E{\max_{1\leq t \leq T} \env^t} =
\E{\max_{i,j \in [N]}{\left\{ {\max_{1\leq t \leq T} \env^t_{i,j}} \right\}}} \leq
\sqrt{2 \ln{(N^2)} \sum^{T}_{t=1}{\var{\dift}} } =
2\sqrt{\ln{(N)} \sum^{T}_{t=1}{\var{\dift}} }.
\]
This concludes the proof of Theorem~\ref{thm: uni upper-bound}.
\end{proof}
\subsubsection{Refining the Variance of Reward Discrepancy}
Although Theorem~\ref{thm: uni upper-bound} contains a factor of $\ln N$, it does not explicitly capture the influence of $N$ or $K$, as these parameters are embedded within $\var{\dift}$. To clarify this further, we refine our analysis by focusing on the subclass of algorithms known as \emph{explore-first} algorithms. 

We say that an algorithm satisfies the explore-first property if, for every round $t$, once it repeats an arm selection (i.e., chooses $a_i$ in two sessions), it commits to $a_i$ for all subsequent sessions in that round. More formally, 
\begin{definition}[Explore-first]\label{def:explore-first}
An algorithm is \emph{explore-first} if for every $t$ there exists a session $q(t)$ such that:
\begin{itemize}
    \item Before session $q(t)$, the algorithm pulls only unobserved arms.
    \item From session $q(t)$ onward, the algorithm exclusively pulls a single already-observed arm.
\end{itemize} 
\end{definition}
Observe that an explore-first algorithm commits to a single arm in each round, but this arm may vary across rounds and depend on the observed rewards. The explore-first property is natural and applies, e.g., to algorithms that use a threshold to start exploitation (like Algorithm~\ref{alguni}). Relying on Definition~\ref{def:explore-first}, we can express $\var{\dift}$ in terms of $N$ and $K$.
\begin{proposition}\label{thm: threshold var}
    When executing any explore-first algorithm with $K$ arms and $N$ agents, it holds that $\var{\dift} \leq \min\left\{1, \frac{(2N-K)(K-1)}{N(N-1)} \right\}$.  
\end{proposition}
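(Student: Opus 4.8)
The plan is to fix a round $t$ and study the vector of realized session rewards $\mathbf s=(\rt{(1)},\dots,\rt{(N)})$. The key structural fact is that, since the algorithm is anonymous, the reward $\rt{(q)}$ delivered in session $q$ is a function only of the arms pulled and rewards observed so far (never of which agent occupies a session); hence, across all rounds, $\mathbf s$ is a deterministic function of the exogenous arm rewards $(x_i^\tau)_{i,\tau\le t}$ and is independent of the arrival permutation $\ordv_t$, which under $\uniord$ is uniform over all permutations of $[N]$. By Remark~\ref{remark: symmetric dif} it suffices to compute $\var{\adift{i}{j}}$ for one fixed pair $i\neq j$, and there $\adift{i}{j}=\rt{(Q)}-\rt{(W)}$ where $(Q,W)$ is a uniformly random ordered pair of distinct indices in $[N]$, drawn independently of $\mathbf s$.

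First I would condition on $\mathbf s$. Since $Q$ and $W$ are each marginally uniform on $[N]$, $\E{\rt{(Q)}-\rt{(W)}\mid\mathbf s}=0$, so by the law of total variance $\var{\dift}=\E\big[\,\var{\rt{(Q)}-\rt{(W)}\mid\mathbf s}\,\big]$, and an elementary computation gives $\var{\rt{(Q)}-\rt{(W)}\mid\mathbf s}=\frac{2}{N(N-1)}\sum_{1\le q<w\le N}\big(\rt{(q)}-\rt{(w)}\big)^2$. Since also $\var{\dift}\le1$ trivially ($\dift\in[-1,1]$), it remains to prove the pointwise estimate $\sum_{q<w}\big(\rt{(q)}-\rt{(w)}\big)^2\le\frac{(2N-K)(K-1)}{2}$, uniformly over all realizations consistent with the explore‑first property.

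Next I would invoke that structure. In round $t$ the algorithm pulls $m=q(t)-1$ distinct, previously unobserved arms in sessions $1,\dots,m$, with $1\le m\le\min\{K,N-1\}$, and in each of sessions $m+1,\dots,N$ it pulls a single arm that was already observed in round $t$; hence $\rt{(m+1)}=\dots=\rt{(N)}$ equals $\rt{(k)}$ for some $k\le m$. Viewed as a function of the free coordinates $(\rt{(1)},\dots,\rt{(m)})\in[0,1]^m$ (with $\rt{(k)}$ substituted into the repeated ones), $\sum_{q<w}(\rt{(q)}-\rt{(w)})^2$ is a sum of squares of affine functions, hence convex, and a convex function on $[0,1]^m$ attains its maximum at a vertex; so we may assume every session reward lies in $\{0,1\}$. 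At such a vertex the sum equals $ab$, where $a$ and $b$ count sessions with reward $1$ and $0$. Crucially, the committed value occupies sessions $m+1,\dots,N$ together with the exploration session $k$ in which its arm was seen — at least $N-m+1$ sessions — so the other value occupies at most $m-1\le K-1$ sessions; that is, $c:=\min\{a,b\}\le K-1$ (and $c\le\lfloor N/2\rfloor$), with $ab=c(N-c)$.

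Finally, restricting to the regime $K\le N$ (the only one in which the claimed bound is informative; otherwise one just uses $\var{\dift}\le1$), $c$ ranges over $\{0,\dots,K-1\}$, and since $c\le N/2$ the map $c\mapsto c(N-c)$ is nondecreasing there; a short calculation using $K\ge2$ then shows $c(N-c)\le\frac{(2N-K)(K-1)}{2}$ — in the case $K-1\le N/2$ the maximum is $(K-1)(N-K+1)$ and one checks $(K-1)(N-K+1)\le\frac{(2N-K)(K-1)}{2}$, and in the case $K-1>N/2$ it is at most $\lfloor N/2\rfloor\lceil N/2\rceil\le N^2/4\le\frac{(2N-K)(K-1)}{2}$. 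Substituting back, $\var{\dift}\le\frac{2}{N(N-1)}\cdot\frac{(2N-K)(K-1)}{2}=\frac{(2N-K)(K-1)}{N(N-1)}$, which together with $\var{\dift}\le1$ gives the claim. I expect the combinatorial estimate of $\sum_{q<w}(\rt{(q)}-\rt{(w)})^2$ to be the heart of the argument — in particular the observation that, because the committed arm was already observed within the round, one reward value is shared by a large majority of sessions and this caps how spread out the session rewards can be; the decoupling from the arrival order via anonymity and the remaining algebra are routine.
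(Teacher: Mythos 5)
Your proof is correct and rests on the same combinatorial core as the paper's: because the committed arm's reward is shared by at least $N-K+1$ sessions in every round, at most $\binom{N}{2}-\binom{N-K+1}{2}=\frac{(2N-K)(K-1)}{2}$ session pairs can have a nonzero discrepancy, and each such pair contributes at most $1$ to the squared gap. The paper reaches this more directly via $\var{\dift}=\E{(\dift)^2}\le 1\cdot\prb{\dift\neq 0}$ for a uniformly random pair, whereas your conditioning on the session-reward vector and the convexity reduction to $\{0,1\}$-valued rewards are a longer but equivalent way of making the same count.
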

The intuition behind this proposition is as follows. Since at most $K$ agents are exploring, the remaining $N-K$ agents are exploiting and receive identical rewards, so no discrepancy occurs between any two exploiting agents. As $N$ grows relative to $K$, most agent pairs consist of two exploiting agents, leaving only a small fraction of pairs---those involving at least one exploring agent---that can contribute to the variance. This decrease in the number of discrepancy-generating pairs results in a tighter overall bound on $\var{\dift}$.

Combining Proposition~\ref{thm: threshold var} with Theorem~\ref{thm: uni upper-bound}, we get the following corollary.
\begin{corollary} \label{thm: sqrt TK N}
    When executing any explore-first algorithm with $K$ arms and $N$ agents, the expected envy is $\E{\max_{1\leq t \leq T} \env^t  (\uniord)}=O\left( \sqrt{\frac{TK\ln(N)}{N}} \right)$.
\end{corollary}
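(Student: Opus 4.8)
The plan is to obtain the corollary by simply composing the two results already in hand, since Theorem~\ref{thm: uni upper-bound} reduces everything to controlling $\sum_{t=1}^{T}\var{\dift}$. That theorem applies to \emph{any} algorithm and gives
\[
\E{\max_{1\leq t \leq T} \env^t (\uniord)} \leq 2\sqrt{\ln{(N)} \sum^{T}_{t=1}{\var{\dift}} },
\]
so the only task is to insert the explore-first variance estimate and simplify the resulting expression into the stated asymptotic form.

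For the substitution step, I would invoke Proposition~\ref{thm: threshold var}, which says that for an explore-first algorithm $\var{\dift} \leq \frac{(2N-K)(K-1)}{N(N-1)}$ in every round $t$ (this holds identically across rounds, since the bound does not reference $t$). Summing over the $T$ rounds gives $\sum_{t=1}^{T}\var{\dift} \leq T\cdot\frac{(2N-K)(K-1)}{N(N-1)}$. Then a crude estimate suffices: bounding $2N-K\leq 2N$ and $K-1\leq K$ yields a per-round bound of at most $\frac{2NK}{N(N-1)} = \frac{2K}{N-1}$, which is $O(K/N)$ for $N\geq 2$. Hence $\sum_{t=1}^{T}\var{\dift} = O\!\left(\nicefrac{TK}{N}\right)$, and plugging this back into the theorem gives $\E{\max_{1\leq t\leq T}\env^t(\uniord)} \leq 2\sqrt{\ln(N)\cdot O(\nicefrac{TK}{N})} = O\!\left(\sqrt{\nicefrac{TK\ln(N)}{N}}\right)$, as claimed.

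I do not expect any genuine obstacle here, as this is a direct corollary; the only things worth a sentence of care are the edge regimes. When $K$ is comparable to $N$ (or larger than $N$), the factor $\frac{(2N-K)(K-1)}{N(N-1)}$ may exceed $1$, but Proposition~\ref{thm: threshold var} also supplies $\var{\dift}\leq 1$, so the bound $\sum_t \var{\dift}\leq T$ still holds and is consistent with $O(\nicefrac{TK}{N})$ in that range; thus the statement is valid for all admissible $2\leq K\leq N$. It is also worth stating explicitly that the hidden constant is absolute (it comes only from the $2$ in Theorem~\ref{thm: uni upper-bound} and the factor $2$ in the estimate $\frac{2K}{N-1}$), so the $O(\cdot)$ is uniform in $T,N,K$.
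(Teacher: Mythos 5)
Your proposal is correct and matches the paper's proof essentially verbatim: the paper also chains Theorem~\ref{thm: uni upper-bound} with Proposition~\ref{thm: threshold var}, uses the same crude bound $\frac{(2N-K)(K-1)}{N(N-1)}\leq \frac{2(K-1)}{N-1}=O(K/N)$, and keeps the $\min\{1,\cdot\}$ form to cover the regime where the fraction exceeds $1$. No gaps.
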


\subsection{Lower Bound}\label{subsec:uni lower}
\label{sec: lb}

Next, we move to develop a lower bound on the expected envy. As is apparent, not all algorithms generate envy. To demonstrate, consider the following complementary cases:
\begin{enumerate}
    \item An instance in which all arms have the same \emph{deterministic} reward. In this case, any algorithm produces zero envy among all agents as all receive the same reward.
    \item An algorithm that pulls the same arm in all sessions. Even if rewards are stochastic, the algorithm does not generate any envy due to reward consistency.
\end{enumerate}
These examples show that envy will not accumulate if the instance or the algorithm are degenerate, motivating to focus on \emph{executions}: A pair of algorithm and reward distributions. In what follows, we characterize a class of non-degenerate executions for which envy always accumulates. we call such compositions \emph{sufficiently random executions}.
Formally,
\begin{definition}[Sufficiently Random]\label{def: sufficiently random}
    An execution is called \emph{sufficiently random} if it holds that
\begin{equation}\label{eq:def suff}
\sum_{t=1}^T{\var{\dift}} \geq \sqrt{T}.   
\end{equation}
%\omer{I dont need $c_0$, I belive. The old version is still here.}
    % An execution is called \emph{sufficiently random} if it there exists $c_0 > 0$ and $T_0 \in \mathbb N$ such that for all $T > T_0$, it holds that
    % \[
    % \sum_{t=1}^T{\var{\dift}} \geq c_0\sqrt{T}.
    % \]    
\end{definition}
In other words, an execution is sufficiently random if the average variance of the reward discrepancy between two agents in a round is greater or equal to $\frac{1}{\sqrt{T}}$. To further illustrate, we provide the following example. 
\begin{example}\label{example: uni suff}
Recall the execution in Example~\ref{example 1}, with $K=2$ arms with  $\uni{0,1}$ rewards, $N=2$ agents, and Algorithm~\ref{alguni}. As we formally show in \ifnum\Includeappendix=0{the appendix}\else{Claim~\ref{claim: uni suff}}\fi, it holds that $\var{\dift} = \frac{1}{12}$; thus, as long as $T \geq 144$, we have $\sum_{t=1}^T{\var{\dift}} = \frac{T}{12}  {\geq } \sqrt{T}
$; hence, Inequality~\eqref{eq:def suff} holds and this execution is sufficiently random. 
\end{example}
Intuitively speaking, any ``reasonable'' algorithm and reward distributions form a sufficiently random execution. As long as the reward distributions are constant w.r.t. the horizon $T$ and the algorithm conducts enough exploration, the execution will be sufficiently random. To demonstrate cases that are insufficiently random, consider the following example.
\begin{example}\label{example: ber suff}
Assume $K$ i.i.d. arms with rewards distributed $Bernoulli(p)$ for some $p \in (0,1)$, and $N$ agents for $N\geq 2$. We focus on the socially optimal algorithm: It picks fresh arms until a reward of $1$ is obtained and exploits it afterward for all the remaining agents. We show in \ifnum\Includeappendix=0{the appendix}\else{Claim~\ref{claim:example ber suff}}\fi~ that $\var{\Delta^t} \geq \frac{2p(1 - p)}{N}$ and $\var{\Delta^t} \leq  2pK$.

Consequently, as long as  $\frac{2p(1 - p)}{N} \geq \frac{1}{\sqrt T}$ holds, i.e., for $T \geq \frac{N^2}{4p^2(1-p)^2}$, the execution is sufficiently random. However, if $2pK <\frac{1}{\sqrt T}$, which is the case if $p<\frac{1}{2K\sqrt T}$, the execution is insufficiently random.
\end{example}
Next, we derive a lower bound on the envy for sufficiently random executions.
%%%%%%%%%%%%%%%%%%%%%%%%%%%%%%%%%%%%%%%%%%%%%%%%%%%%%%
\begin{theorem}\label{thm: uni lower-bound}
Any algorithm as part of sufficiently random execution generates an envy of
\[\E{\max_{1\leq t \leq T} \env^t  (\uniord)} \geq c\sqrt{ \sum^{T}_{t=1}{\var{\dif^t}}},\]
where $c>0$ is a global constant that does not depend on the instance. 
\end{theorem}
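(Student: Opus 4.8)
The plan is to prove a matching lower bound by establishing that the running maximum of the envy process between a \emph{fixed} pair of agents $i,j$ is, up to a constant, at least $\sqrt{\sum_{t}\var{\dift}}$ in expectation, and then noting $\env^t \geq \env^t_{i,j}$ pointwise. So it suffices to lower bound $\E{\max_{1\le t\le T}\env^t_{i,j}}$ where $\env^t_{i,j}=\sum_{\tau=1}^t \adif{i}{j}^\tau$ is a martingale (each $\adift{i}{j}$ has conditional mean zero under $\uniord$ by the symmetry in Remark~\ref{remark: symmetric dif}, since conditioned on the round's realized rewards and pull pattern, the two agents are equally likely to occupy any pair of sessions). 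The quadratic variation of this martingale, $\sum_\tau \E{(\adif{i}{j}^\tau)^2\mid \mathcal H}$, has expectation $\sum_\tau \var{\Delta^\tau}$. The sufficiently-random hypothesis $\sum_t \var{\dift}\ge\sqrt T$ is exactly what rules out the degenerate cases and guarantees the quadratic variation is large enough to force the maximum to be large.

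The key steps, in order: (1) Set $S_t=\env^t_{i,j}$, a bounded martingale with increments in $[-1,1]$, and let $V=\sum_{\tau=1}^T\var{\Delta^\tau}$ so that $\E{[S]_T}=V$ where $[S]_T=\sum_\tau(\adif{i}{j}^\tau)^2$. (2) Lower bound $\E{\max_t S_t}$ in terms of $\E{|S_T|}$ via a stopping-time / reflection-type argument: for instance $\E{\max_t S_t}\ge \tfrac12\E{|S_T|}$ is immediate by symmetry of $S$ (replacing the sequence of increments by its negation is measure-preserving under $\uniord$, so $S_T$ is symmetric; then $\E{\max_t S_t}\ge \E{\max_t S_t \mathbb{I}_{S_T\ge 0}}\ge \E{S_T\mathbb{I}_{S_T\ge0}}=\tfrac12\E{|S_T|}$). (3) Lower bound $\E{|S_T|}$ in terms of $\sqrt{V}$. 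This is the crux. One cannot use the Paley--Zygmund / second-moment trick naively because the fourth moment of a martingale is not controlled by $V^2$ in general; instead I would use the bounded-increments structure. A clean route: the Burkholder--Davis--Gundy inequality gives $\E{|S_T|}\gtrsim \E{\sqrt{[S]_T}}$, so it remains to show $\E{\sqrt{[S]_T}}\gtrsim\sqrt{V}$, i.e. that $[S]_T$ is not too spread out / small with too high probability. Since increments lie in $[-1,1]$ we have $[S]_T\le T$ deterministically, and $\E{[S]_T}=V\ge\sqrt T$; by reverse Markov (Paley--Zygmund applied to the bounded nonnegative variable $[S]_T$, or simply $\E{\sqrt{[S]_T}}\ge \E{[S]_T}/\sqrt{\max [S]_T}\ge V/\sqrt T\ge 1$... which is too weak) — so the honest argument must be more careful. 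Better: write $\E{\sqrt{[S]_T}}\ge \E{\sqrt{[S]_T}\,\mathbb{I}_{[S]_T\ge V/2}}\ge\sqrt{V/2}\,\Pr([S]_T\ge V/2)$, and bound $\Pr([S]_T\ge V/2)$ from below by reverse Markov: $V=\E{[S]_T}\le V/2\cdot\Pr([S]_T< V/2)+T\cdot\Pr([S]_T\ge V/2)$, giving $\Pr([S]_T\ge V/2)\ge \tfrac{V/2}{T}\ge \tfrac{1}{2\sqrt T}$ — again this only yields $\E{\sqrt{[S]_T}}\gtrsim \sqrt V/\sqrt T$, losing a factor. So the right move is to iterate or to instead exploit that $[S]_T\le T$ together with $\var{[S]_T}$ being controlled, or — cleanest — to bypass $[S]_T$ and directly lower bound $\E{S_T^2}=V$ and $\E{S_T^4}$.

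Concretely, for step (3) I would bound the fourth moment: $\E{S_T^4}=\E{(\sum_\tau\adif{i}{j}^\tau)^4}$ expands, and using the martingale property the cross terms vanish except those of the form $\E{(\adift[\tau_1]{})^2(\adift[\tau_2]{})^2}$; since each increment is bounded by $1$ in absolute value, $\E{(\adif{i}{j}^{\tau_1})^2(\adif{i}{j}^{\tau_2})^2}\le \E{(\adif{i}{j}^{\tau_1})^2}$-type bounds let one show $\E{S_T^4}\le C\,(\E{[S]_T})\cdot T + \ldots$; with care this gives $\E{S_T^4}=O(VT)$, hence $\E{S_T^4}=O(V^2)$ precisely when $V\ge\sqrt T$ — this is where the sufficiently-random hypothesis pays off. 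Then Paley--Zygmund gives $\Pr(S_T^2\ge V/2)\ge c'$ for an absolute $c'>0$, whence $\E{|S_T|}\ge \sqrt{V/2}\cdot c'$, and combined with steps (1)--(2), $\E{\max_t\env^t}\ge\E{\max_t S_t}\ge \tfrac12\E{|S_T|}\ge c\sqrt{\sum_\tau\var{\Delta^\tau}}$. The main obstacle is exactly this control of the fourth moment $\E{S_T^4}$ by $O(V^2)$: it is false for general $L$-bounded martingales without the lower bound on $V$, so the argument must carefully thread the deterministic bound $[S]_T\le T$, the hypothesis $V\ge\sqrt T$, and the martingale cross-term cancellations together. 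Everything else (the symmetrization in step (2), the reduction to a single pair, the passage from $|S_T|$ to $\sqrt V$ via Paley--Zygmund) is routine once the moment comparison is in hand.
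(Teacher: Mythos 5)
Your skeleton matches the paper's: reduce to a single pair of agents, observe that $\env^t_{1,N}$ is a martingale, and invoke the Burkholder--Davis--Gundy lower bound to reduce everything to showing $\E{\sqrt{[S]_T}}\geq c\sqrt{V}$, where $[S]_T=\sum_{t=1}^T(\dift)^2$ and $V=\sum_{t=1}^T\var{\dift}$. (The paper applies BDG directly to $\max_t\abs{\env_{1,N}^t}$, so your symmetrization detour through $\E{\abs{S_T}}$ is unnecessary but harmless.) The genuine gap is in your resolution of the crux step. You claim $\E{S_T^4}=O(VT)$ and that this is $O(V^2)$ ``precisely when $V\ge\sqrt T$.'' That implication is arithmetically false: $VT=O(V^2)$ requires $V=\Omega(T)$, whereas the sufficiently-random hypothesis gives only $V\ge\sqrt T$, i.e.\ $T\le V^2$, hence $VT\le V^3$. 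Feeding $\E{S_T^4}\le CVT$ into Paley--Zygmund yields $\prb{S_T^2\ge V/2}\gtrsim V/T$, which at $V=\sqrt T$ is $1/\sqrt T$ and loses exactly the same polynomial factor as the reverse-Markov attempt you already (correctly) discarded. So as written, the proposal does not close.

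What the paper does instead is apply a reverse-Jensen inequality directly to $Y=[S]_T$: for any non-negative $Y$ with finite second moment,
\[
\E{\sqrt Y}\;\ge\;\sqrt{\E{Y}}\left(1-\frac{\var{Y}}{2\,\E{Y}^2}\right).
\]
The quantity that must be controlled is therefore the \emph{centered} second moment $\var{[S]_T}$, not $\E{[S]_T^2}$; the paper bounds $\var{[S]_T}\le T$ (each $(\dift)^2$ lies in $[0,1]$) and then $2\E{[S]_T}^2=2V^2\ge 2T$ by sufficient randomness, so the correction factor is at least $\nicefrac{1}{2}$. This is exactly where the threshold $\sqrt T$ in Definition~\ref{def: sufficiently random} comes from. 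Your fourth-moment route can be repaired by the same substitution: Burkholder gives $\E{S_T^4}\le C\,\E{[S]_T^2}=C\left(\var{[S]_T}+V^2\right)$, so the bound $\var{[S]_T}\le T\le V^2$ yields $\E{S_T^4}=O(V^2)$ and Paley--Zygmund then closes the argument. The missing ingredient in both of your attempts is precisely this control of the centered quantity $\var{[S]_T}$ rather than of the uncentered moment $\E{[S]_T^2}\le TV$.
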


\begin{proof}[Proof of Theorem~\ref{thm: uni lower-bound}]
    Noticeably, we can bound $\env^t$ using the envy between two specific agents.
    That is,
    \begin{align}\label{thm: mp uni lower-bound 1}
        \E{\max_{1\leq t \leq T} \env^t} =
        \E{\max_{i,j \in [N]}{\left\{ \max_{1\leq t \leq T} \env^t_{i,j} \right\}}} \geq
        \E{\max_{1\leq t \leq T} \abs{\env^t_{1,N}}} =
        %\E{ \max_{1\leq t \leq T}\abs{\sum^T_{t=1}{\adift{1}{N}}} } =
        \E{\max_{1\leq t \leq T}\abs{\sum^T_{t=1}{\dift}} },
    \end{align}
    where the last inequality is due to  Remark~\ref{remark: symmetric dif}. To express $\E{\max_{1\leq t \leq T}\abs{\sum^T_{t=1}{\dift}} }$ as a function of $\sum_t\var{\dift}$, we can lower bound its quadratic variation as we have done in the proof of the upper bound. Next, 
    we use the Burkholder-Davis-Gundy inequality~\cite{davis1970intergrability}. We present a simplified version of the theorem, as we are only interested in the special case of a discrete-time martingale with $L_1$ norm. %\footnote{The reader may wonder why we do not use the more intuitive and popular Marcinkiewicz–Zygmund (MZ) inequality~\cite{chow2003probability, ibragimov1999analogues}. The MZ inequality assumes independent random variables, while the sequence $({\dift})_t$ exhibits dependence, as learning algorithms can base actions on past rounds.}
    \begin{theorem}\label{thm:BDG}[Burkholder-Davis-Gundy inequality]
    Let $\{M^t\}_{t \geq 0}$ be a discrete-time martingale with $M_0 = 0$. There exist positive constants $A_1$ and $B_1$ such that 
    \[ 
    A_1 \, \E{\sqrt{\sum_{t=1}^T (M^t - M^{t-1})^2}}\leq \E{\max_{0 \leq t \leq T} \abs{M^t}}\leq B_1 \, \E{\sqrt{\sum_{t=1}^T (M^t - M^{t-1})^2}}.
    \]    
    \end{theorem}
    As we formally show in \ifnum\Includeappendix=0{the appendix}\else{Observation~\ref{envy is martingale}}\fi, the sequence $(\env^t_{1,N})_t$ forms a martingale; hence, an immediate corollary from Theorem~\ref{thm:BDG} and Inequality~\eqref{thm: mp uni lower-bound 1} is that\footnote{The reader might be tempted to use this theorem for the upper bound as well, obtaining a straightforward bound without additional intricate arguments. However, the maximal envy $\env^T$ is not a martingale, so we can only apply it to the envy between two agents. Replacing the $\max$ operator with a summation results in an upper bound of $O\left(N^2 \sqrt{\sum_{l=1}^{T} \var{\dift}}\right)$, which includes an  additional multiplicative factor of $\nicefrac{N^2}{\ln^2 N}$ to the bound of Theorem~\ref{thm: uni upper-bound}.}
    \begin{align}\label{eq: gdgggsds}
    \E{\max_{1\leq t \leq T} \env^t} \geq A_1 \E{ \sqrt{\sum^T_{t=1}{ (\dift)^2}} }.
    \end{align}
    
    %%%%%%%%%%%%%%%%%%%%%%%%%%%%%%%%%%%%%%%%%%%%%%%%%%%%%%%%%%%%%%%%%%%%%
    % The proof uses heavily the following theorem.
    % \begin{theorem}[Marcinkiewicz–Zygmund Inequality \cite{chow2003probability, ibragimov1999analogues}]\label{thm: MZ}
    % Let $Y_1, \dots, Y_n$ be independent random variables with $\E{Y_i} = 0$ and $\E{|Y_i|^p} <\infty$ for $1\leq p < \infty$. Then there exist global constants $A_p, B_p$ such that
    % \[
    %     {\displaystyle A_{p}\mathbb E\left[\left(\sum _{i=1}^{n}\left\vert Y_{i}\right\vert ^{2}\right)_{}^{p/2}\right]\leq \mathbb E\left[\left\vert \sum _{i=1}^{n}Y_{i}\right\vert ^{p}\right]\leq B_{p}\mathbb E\left[\left(\sum _{i=1}^{n}\left\vert Y_{i}\right\vert ^{2}\right)_{}^{p/2}\right]}.
    % \]
    % \end{theorem}
    % Due to the algorithm being symmetric, the series $\dift$ is a series of centered random variables (\autoref{thm: sum dif sg}) with finite moments (as rewards are bounded).
    % Additionally, the memory-freeness property implies the random variables are independent.
    % Thus, we can apply \autoref{thm: MZ} on the series $\dift$. Picking $p=1$ and plugging the theorem into Inequity~\eqref{thm: mp uni lower-bound 1}, we get
    % \begin{align*}
    %     \E{\env^T} \geq
    %     \E{ \abs{\sum^T_{t=1}{\dift}} } \geq
    %     A_1 \E{ \sqrt{\sum^T_{t=1}{ \dift^2}} }.
    % \end{align*}
    Next, we use the following auxiliary proposition, which provides a reverse Jensen's-like inequality of the square root function.
    \begin{proposition}\label{thm: board}
        Let $Y$ be a non-negative random variable with a finite second moment. It holds
        \begin{align*}
            \E{\sqrt{Y}} \geq \sqrt{\E{Y}}\left(1- \frac{\var{Y}}{2\E{Y}^2}\right).
            \end{align*}
    \end{proposition}
    We prove Proposition~\ref{thm: board} in \ifnum\Includeappendix=0{the appendix}\else{Appendix~\ref{appendix:uni}}\fi. Applying Proposition~\ref{thm: board} with $Y = \sum^T_{t=1}{ \dift^2}$ to Inequality~\eqref{eq: gdgggsds},
    \begin{align}\label{thm: mp uni lower-bound 2}
        \E{\max_{1\leq t \leq T} \env^t} \geq
        A_1 \E{ \sqrt{\sum^T_{t=1}{ \dift^2}} }\geq
        A_1 \sqrt{\E{ \sum^T_{t=1}{ \dift^2}} }
        \left(
        1 - \frac{\var{ \sum^T_{t=1}{ \dift^2} } }{2 \E{ \sum^T_{t=1}{ \dift^2} }^2 }
        \right) .
    \end{align}
    Next, due to our assumption that the execution is sufficiently random, we get  
    \begin{align}\label{thm: mp uni lower-bound 3}
        2 \left(\E{ \sum^T_{t=1}{ \dift^2} }\right)^2 =2 \left( \sum^T_{t=1}{ \E{ \dift^2} - \E{ \dift}^2}\right)^2 = 2 \left( \sum^T_{t=1}{\var{  \dif^t } }\right)^2 \geq 2T.
    \end{align}
    Furthermore, notice that $\var{ \sum^T_{t=1}{ \dift^2} } \leq T$ as the discrepancies $(\dift)_t$ are supported in the $[-1, 1]$ segment. Combining this fact with Inequality~\eqref{thm: mp uni lower-bound 3}, 
    \begin{align}\label{eq:m,bnhjikw}
    \frac{\var{ \sum^T_{t=1}{ \dift^2} } }{2 \E{ \sum^T_{t=1}{ \dift^2} }^2 }\leq \frac{T}{2T}=\frac{1}{2}.
    \end{align}
    Plugging Inequality~\eqref{eq:m,bnhjikw} to Inequality~\eqref{thm: mp uni lower-bound 2}, we conclude that 
    % \begin{align*}
    % \E{\env^T}
    % \geq
    % A_1 \sqrt{\E{ \sum^T_{t=1}{ \dift^2}} }
    % \left(1 - \frac{ T }{2 \sqrt{T}^2 }\right),
    % \end{align*}
    % where the last inequity holds due to the fact that 
    
    % and due to the execution being sufficiently random, i.e., there exists some $T_0$ s.t. if $T > T_0$ then $\sum^T_{t=1}{\var{\dift}}\geq1\cdot\sqrt{T}$ (\autoref{def: sufficiently random}).
    % hence,
    \[
    \E{\max_{1\leq t \leq T} \env^t} \geq
    A_1 \sqrt{\E{ \sum^T_{t=1}{ \dift^2}} }
    \left(1 -\frac{1}{2}\right)
    =
    \frac{A_1}{2}\sqrt{\sum^T_{t=1}{ \var{ \dift }}},
    \]
    where the last inequality is again due to $\E{\dift}=0$. This concludes the proof of Theorem~\ref{thm: uni lower-bound}.
\end{proof}
Theorems~\ref{thm: uni lower-bound} and~\ref{thm: uni upper-bound} imply the following corollary.
\begin{corollary}\label{cor: uni-envy}
Any algorithm as part of sufficiently random execution generates an envy of $\E{\max_{1\leq t \leq T} \env^t  (\uniord)} = \Theta\left(\sqrt{ \sum^{T}_{t=1}{\var{\dif^t}} }\right)$.  
\end{corollary}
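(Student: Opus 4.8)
The plan is short: Corollary~\ref{cor: uni-envy} is nothing more than the conjunction of the upper and lower bounds already established in this subsection. On one side, Theorem~\ref{thm: uni upper-bound} applies to \emph{every} algorithm under $\uniord$ --- in particular to any that forms part of a sufficiently random execution --- and yields $\E{\max_{1\le t\le T}\env^t(\uniord)} \le 2\sqrt{\ln(N)\sum_{t=1}^T \var{\dift}}$. On the other side, Theorem~\ref{thm: uni lower-bound} states that whenever the execution is sufficiently random, $\E{\max_{1\le t\le T}\env^t(\uniord)} \ge c\sqrt{\sum_{t=1}^T \var{\dift}}$ for a universal constant $c>0$. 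The hypothesis of the corollary is exactly the precondition of the lower bound, and it does not interfere with the (unconditional) upper bound, so both inequalities hold simultaneously. Since each side equals a constant-or-logarithmic factor times $\sqrt{\sum_{t=1}^T \var{\dift}}$, one concludes $\E{\max_{1\le t\le T}\env^t(\uniord)} = \Theta\big(\sqrt{\sum_{t=1}^T \var{\dift}}\big)$.

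The only point worth flagging is that the two bounds are not \emph{literally} off by a constant: the upper bound carries an extra $\sqrt{\ln N}$ factor. The $\Theta(\cdot)$ in the corollary should therefore be read as hiding this polylogarithmic dependence on the number of agents (equivalently, treating $N$ as fixed); the proof proposal simply makes this explicit. No further technical content is needed --- all the heavy lifting has already been carried out, namely the martingale/subgaussian estimate of Proposition~\ref{prop:envy is good SG} and Claim~\ref{claim: sg max} on the upper side, and the Burkholder-Davis-Gundy inequality together with the reverse-Jensen estimate of Proposition~\ref{thm: board} on the lower side. Consequently, the ``hard part'' here is not a technical obstacle at all but merely the honest bookkeeping that the matching holds up to the $\ln N$ gap and that the sufficiently random hypothesis is precisely what activates Theorem~\ref{thm: uni lower-bound}.
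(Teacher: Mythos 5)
Your proposal matches the paper exactly: Corollary~\ref{cor: uni-envy} is stated there as an immediate consequence of Theorems~\ref{thm: uni upper-bound} and~\ref{thm: uni lower-bound}, with no further argument. Your remark about the residual $\sqrt{\ln N}$ factor is consistent with the paper's own framing of the lower bound as only ``(almost) matching,'' so nothing is missing.
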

Before we complete the section, we remark that high envy may still arise under insufficiently random executions. Indeed, although Definition~\ref{def: sufficiently random} gives a sufficient condition, it is not a necessary one. From a technical perspective, Theorem~\ref{thm: uni lower-bound} applies whenever the left-hand side of Inequality~\eqref{eq:m,bnhjikw} is constant. In certain cases, this allows us to relax the requirement for sufficiently random executions. The following proposition shows that this relaxation holds in the instance described in Example~\ref{example: ber suff}. Recall that the execution in this example is insufficiently random for $p < \frac{1}{2K\sqrt T}$. However, the next proposition shows that the tight bound in Corollary~\ref{cor: uni-envy} can still hold if $p$ is significantly smaller.
\begin{proposition}\label{prop:insufficient}
    For in the execution in Example~\ref{example: ber suff}, as long as    
    $p\in \left[\frac{N}{cT}, 1-\frac{N}{cT}\right]$ for a constant $c \geq 2$ and $T \geq N$, the envy satisfies $\E{\max_{1\leq t \leq T} \env^t  (\uniord)} = \Theta\left(\sqrt{ \sum^{T}_{t=1}{\var{\dif^t}} }\right)$.
\end{proposition}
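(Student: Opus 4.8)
\textbf{Proof plan for Proposition~\ref{prop:insufficient}.}

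The plan is to re-use the machinery of Theorem~\ref{thm: uni lower-bound} verbatim up to Inequality~\eqref{eq: gdgggsds}, which holds for \emph{any} algorithm and in particular for the socially optimal algorithm of Example~\ref{example: ber suff}, giving $\E{\max_{1\le t\le T}\env^t}\ge A_1\,\E{\sqrt{\sum_t (\dift)^2}}$. The only place where ``sufficiently random'' was used is to control the reverse-Jensen correction term, i.e.\ to show that the left-hand side of Inequality~\eqref{eq:m,bnhjikw}, namely $\var{\sum_t\dift^2}/\big(2\,\E{\sum_t\dift^2}^2\big)$, is bounded by a constant strictly below $1$. So the whole task reduces to establishing that bound under the weaker hypothesis $p\in[\nicefrac{N}{cT},\,1-\nicefrac{N}{cT}]$, $T\ge N$, and then applying Proposition~\ref{thm: board} exactly as before.

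Concretely, I would proceed as follows. First, invoke the bounds from Example~\ref{example: ber suff}: $\var{\dift}\ge \frac{2p(1-p)}{N}$ and $\var{\dift}\le 2pK$ (the latter is not even needed here). Since $\E{\dift}=0$ we have $\E{\dift^2}=\var{\dift}$, so $\E{\sum_t\dift^2}=\sum_t\var{\dift}\ge \frac{2p(1-p)T}{N}$. Under the hypothesis $p(1-p)\ge \frac{N}{cT}\big(1-\frac{N}{cT}\big)\ge \frac{N}{cT}\cdot\frac{1}{2}$ (using $T\ge N$ and $c\ge 2$, so $\frac{N}{cT}\le\frac12$), hence $\E{\sum_t\dift^2}\ge \frac{2T}{N}\cdot\frac{N}{2cT}=\frac{1}{c}$. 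That is too weak on its own; the point is that the \emph{numerator} $\var{\sum_t\dift^2}$ is at most $T$ (the summands lie in $[0,1]$), but more usefully it is at most $\sum_t\E{\dift^2}=\sum_t\var{\dift}$ up to the cross-terms --- here is where I must be careful. I would instead bound $\var{\sum_t\dift^2}\le \E{(\sum_t\dift^2)^2}$ crudely, or better, note that the per-round terms $\dift^2\in[0,1]$ and use $\var{\sum_t\dift^2}\le T\cdot\max_t\var{\dift^2}\le T$ as in the original proof, but then compare against $\E{\sum_t\dift^2}^2\ge(\frac{2p(1-p)T}{N})^2$. This gives
\[
\frac{\var{\sum_t\dift^2}}{2\,\E{\sum_t\dift^2}^2}\le \frac{T}{2\big(\tfrac{2p(1-p)T}{N}\big)^2}=\frac{N^2}{8\,p^2(1-p)^2\,T}.
\]
Now substitute the hypothesis $p(1-p)\ge \frac{N}{cT}\cdot\frac12$ — wait, more precisely $p,1-p\ge\frac{N}{cT}$ gives $p(1-p)\ge(\frac{N}{cT})(1-\frac{N}{cT})$, and since $c\ge2,T\ge N$ we get $1-\frac{N}{cT}\ge\frac12$, so $p(1-p)\ge\frac{N}{2cT}$. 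Hence $p^2(1-p)^2\ge\frac{N^2}{4c^2T^2}$, and the displayed ratio is at most $\frac{N^2}{8T}\cdot\frac{4c^2T^2}{N^2}=\frac{c^2T}{2}$ --- which is \emph{not} bounded, so the crude $\var{\sum_t\dift^2}\le T$ bound is actually too lossy here. The fix is to bound $\var{\sum_t\dift^2}$ by $\sum_t\var{\dift^2}\le \sum_t\E{\dift^4}\le\sum_t\E{\dift^2}=\sum_t\var{\dift}$, using that the rounds are \emph{not} independent but $\dift^2\le|\dift|\le1$ so $\dift^4\le\dift^2$; combined with the martingale-difference structure (or at worst a factor-$T$ Cauchy–Schwarz that cancels), one gets $\var{\sum_t\dift^2}\le \big(\sum_t\sqrt{\var{\dift^2}}\big)^2\le T\sum_t\var{\dift}$. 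Then the ratio becomes at most $\frac{T\sum_t\var{\dift}}{2(\sum_t\var{\dift})^2}=\frac{T}{2\sum_t\var{\dift}}\le\frac{T}{2\cdot\frac{2p(1-p)T}{N}}=\frac{N}{4p(1-p)}\le\frac{N}{4}\cdot\frac{2cT}{N}=\frac{cT}{2}$ --- still unbounded.

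\textbf{Where the real argument must live.} The two crude bounds above both fail because $\sum_t\var{\dift}$ can be as small as $\Theta(\frac{1}{\sqrt T}\cdot\text{something})$, which is exactly the insufficiently-random regime. So the genuine content of the proposition is that, \emph{for this specific Bernoulli instance and the specific socially optimal algorithm}, the quantity $\sum_t\dift^2$ concentrates well enough that $\var{\sum_t\dift^2}=O\big((\E{\sum_t\dift^2})^2\big)$ with a small enough constant. I expect the main obstacle — and the step I would spend the most effort on — to be this concentration estimate. The natural route is: (i) observe that in each round $t$ the socially optimal algorithm's behavior depends only on the round-$t$ reward vector, so $\dift^2$ is a bounded function of independent-across-rounds randomness (the within-round rewards) \emph{together with} the uniform arrival permutation; hence $(\dift^2)_{t=1}^T$ are actually \emph{independent} across $t$ once we condition appropriately (the permutations and reward vectors of distinct rounds are independent, and the optimal algorithm carries no cross-round state). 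Therefore $\var{\sum_t\dift^2}=\sum_t\var{\dift^2}\le\sum_t\E{\dift^2}=\sum_t\var{\dift}$. (ii) Then $\frac{\var{\sum_t\dift^2}}{2(\E{\sum_t\dift^2})^2}\le\frac{\sum_t\var{\dift}}{2(\sum_t\var{\dift})^2}=\frac{1}{2\sum_t\var{\dift}}$, and using $\sum_t\var{\dift}\ge\frac{2p(1-p)T}{N}\ge\frac{2}{c}\cdot\frac{1-N/(cT)}{1}\ge\frac{1}{c}$ — wait, we need this $\ge$ some absolute constant, and indeed with $p(1-p)\ge\frac{N}{2cT}$ we get $\sum_t\var{\dift}\ge\frac{2T}{N}\cdot\frac{N}{2cT}=\frac1c$; to make the correction factor $\big(1-\frac{1}{2\sum_t\var\dift}\big)$ bounded away from $0$ we would want $\sum_t\var\dift\ge1$, which forces $p(1-p)\ge\frac{N}{2T}$, i.e.\ essentially the hypothesis with $c=2$. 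This matches the statement ``$c\ge2$'' exactly, confirming the constant is chosen precisely so that the correction factor is $\ge\frac12$. (iii) Conclude via Proposition~\ref{thm: board}: $\E{\sqrt{\sum_t\dift^2}}\ge\frac12\sqrt{\sum_t\var{\dift}}$, hence $\E{\max_t\env^t}\ge\frac{A_1}{2}\sqrt{\sum_t\var{\dift}}$, giving the lower bound; the matching upper bound is immediate from Theorem~\ref{thm: uni upper-bound} since that bound is unconditional. So the crux is step~(i) — carefully justifying that for the socially optimal algorithm the per-round discrepancies are independent across rounds — after which everything collapses to the computation already done in Theorem~\ref{thm: uni lower-bound} with the ``$\ge\sqrt T$'' hypothesis replaced by ``$\sum_t\var\dift\ge1$,'' which $p\in[\frac{N}{cT},1-\frac{N}{cT}]$ guarantees.
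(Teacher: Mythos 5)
Your final argument (steps (i)--(iii)) is essentially the paper's proof: the paper likewise exploits that the socially optimal algorithm is memoryless across rounds so that $\var{\sum_t (\dift)^2}=\sum_t\var{(\dift)^2}$, uses $(\dift)^4=(\dift)^2$ for the $\{-1,0,1\}$-valued discrepancies to get $\var{(\dift)^2}\le\var{\dift}$, bounds the correction ratio by $\frac{1}{2T\var{\dift}}\le\frac{N}{4Tp(1-p)}\le\frac{c}{2}$ via the hypothesis on $p$, and then applies the reverse-Jensen proposition. The exploratory detours in the middle are harmless since you self-correct to the right bound, and the residual looseness you flag (the ratio bound $c/2$ not being strictly below $1$ for $c\ge 2$) is present in the paper's own proof as well.
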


\section{Nudged and Adversarial Arrival}
\label{sec:nudge}
In this section, we address nudged arrival: We assume an exogenous arrival mechanism can influence the order in which agents arrive. Practically, this captures scenarios where the system sends push notifications or otherwise encourages some agents to arrive earlier or later. Our goal is to analyze the envy of arbitrary algorithms without changing the way they select arms. We stress that our analysis still assumes anonymous algorithms: The decision-making process is not affected by agent identities.

% \omer{USE THIS SOMEHOW
% In this section, we examine the envy dynamics of algorithms that \emph{ignore} agent identities \emph{during} the sessions. That is, \omer{we try to minimize envy by making high rewraded agents arrive ealier} the algorithm can affect the arrival order at the beginning of the round, but does not condition the decision on which arm to choose in each session on the arriving agent. \omer{this is good but not perfect}}
Subsection~\ref{subsec:nudged prot} introduces the nudged arrival protocol, $\sugord$, and the accompanying assumptions. Later,  Subsection~\ref{Envy Analysis} presents our main result of the section: An upper bound of $\env^T({\sugord})$ for a broad class of algorithms. Interestingly, we show that the bound depends on the instance parameters but not on the horizon $T$. Finally, we complete this section by adopting a complementary approach, where an adversary can pick the worst arrival order in terms of envy, and show that an $\Omega(T)$ regret is inevitable. For clarity, we remind the reader that for $q,i\in [N]$, $\ordv_t (q)=i$ implies that agent $i$ has arrive in the $q$'th session in round $t$ (similarly for $\ordv_t^{-1}(i)=q$). 
\subsection{Nudged Arrival Protocol}\label{subsec:nudged prot}
\begin{algorithm}[t]
\caption{Nudged Arrival}
\label{alg: sugg arr}
\SetAlgoLined
\LinesNumbered
\KwIn{horizon $T$, nudge parameter $\delta$} \label{line:input}
\For{round $t = 1$ to $T$}{ \label{line:for_loop} 
    let $\sigv_t$ be an $[N]\rightarrow [N]$ mapping such that 
    \[
    R^{t-1}_{\sigma_t(N)} \leq R^{t-1}_{\sigma_t(N-1)} \leq \dots \leq R^{t-1}_{\sigma_t(2)} \leq R^{t-1}_{\sigma_t(1)}.    
    \]
    \label{line:mapping}\\
    sample an arrival order $\ordv_t \sim \sugord(\sigma_t,\delta)$ \label{line:request}
}
\end{algorithm}
We describe the arrival protocol in Algorithm~\ref{alg: sugg arr}. It receives the horizon $T$ and a nudge parameter~$\delta$ as input, and interacts with any recommender algorithm through the horizon. In each round in the for loop of Line~\ref{line:for_loop}, we pick an \textit{ideal permutation} $\sigma_t$ that orders the agents according to their cumulative rewards.  The mapping $\sigma_t$ prioritizes agents according to their cumulative rewards in previous rounds, from the most rewarded one to the least rewarded one. In our notation, $\sigma_t : \{1,\dots,n\} \to [N]$, so that $\sigma_t(i)$ is the name of the agent in position $i$. Hence, $\sigma_t(1)$ is the agent with the highest cumulative reward so far, and $\sigma_t(N)$ is the agent with the lowest. While we assume the agents can be nudged toward this ordering $\sigma_t$, we do not claim that it is implemented or enforced in its exact form. Instead, in Line~\ref{line:request}, we sample an arrival order $\ordv$ from   $\sugord(\sigma_t,\delta)$, which is a distribution over permutations of $N$. Particularly, we assume that $\sugord$ satisfies the \emph{nudged arrival} property.

\begin{property}[Nudged Arrival]\label{prop:nudge}
Given a scalar \( \delta \in (0,1) \) and a mapping \( \sigv : [N] \rightarrow [N] \),  for every two agents \( i,j \) with \( \sigma^{-1}(i) < \sigma^{-1}(j) \) the distribution $\sugord(\sigv,\delta)$ satisfies
\begin{equation}\label{def: nudged} 
\Pr_{\ordv \sim \sugord(\sigma,\delta)}\left(\ordv^{-1}(i) < \ordv^{-1}(j)\right) \geq \frac{1+\delta}{2}. 
\end{equation}  
\end{property}
In other words, Algorithm~\ref{alg: sugg arr} introduces a structured form of randomness in the sequence of agent arrivals. Each round allows for the selection of a permutation $\sigma_t$, representing a preferred (but not guaranteed) ordering of agents. For any two agents $i$ and $j$, if the permutation prioritizes agent $i$'s arrival over agent $j$'s arrival, i.e., $\sigma_t^{-1}(i) < \sigma_t^{-1}(j)$, then agent $i$ is more likely to arrive before agent $j$. Particularly, the probability of agent $i$ preceding agent $j$ in the realized arrival order $\ordv_t$ is at least $\frac{1}{2} + \frac{\delta}{2}$, ensuring a consistent bias $\delta$ toward the preferred ordering. Property~\ref{prop:nudge} is inspired by several well-known models of stochastic ranking, like Mallows model~\cite{mallows1957non}, Plackett-Luce~\cite{marden1996analyzing} and also noisy comparison models~\cite{braverman2007noisy}. In \ifnum\Includeappendix=0{the appendix}\else{Section~\ref{appendix:nudge-models}}\fi, we describe how to derive $\delta$ from each of these models. By specifying Property~\ref{prop:nudge} rather than the full underlying ordering model, we preserve flexibility in how global orderings are derived.  
Next, we illustrate nudged arrival and the role it plays in envy dynamics.
\begin{example}[Nudged Arrival and Envy Dynamics]  \label{example: envy with sugg}
We reconsider the setting of Example~\ref{example 1}, with $K=2$ arms with  $\uni{0,1}$ rewards, $N=2$ agents, Algorithm~\ref{alguni}, but with the nudged arrival ${\sugord}$. To ease readability, we keep using $\env^T$ to denote $\env^T({\sugord})$, omitting the dependence on  ${\sugord}$. Suppose that after $t-1$ rounds, for some arbitrary $t \in [T]$, agent rewards satisfy $\env^{t-1} =\env^{t-1}_{1,2} = R^{t-1}_1 - R^{t-1}_2 > 1$, indicating that agent 2 envies agent 1.

In this scenario, the expected envy after round $t$ is given by:  
{  
\thinmuskip=2mu  
\medmuskip=3mu plus 2mu minus 3mu  
\thickmuskip=4mu plus 5mu minus 2mu  
\begin{align}\label{eq:env_update}  
\mathbb{E}[\env^t \mid \env^{t-1}_{1,2} > 1]  
&= \mathbb{E}\left[\left| \env^{t-1}_{1,2} + \Delta^t_{1,2} \right| \mid \env^{t-1}_{1,2} > 1 \right] = \env^{t-1}_{1,2} + \mathbb{E}\left[\Delta^t_{1,2} \mid\env^{t-1}_{1,2} > 1 \right]. 
\end{align}  
}

The term $\mathbb{E}[\Delta^t_{1,2} \mid \env^{t-1}_{1,2} > 1 ]$ can be expressed as:  
{  
\thinmuskip=2mu  
\medmuskip=3mu plus 2mu minus 3mu  
\thickmuskip=4mu plus 5mu minus 2mu  
\begin{align}\label{eq:delta_expression}  
\E{\Delta^t_{1,2} \mid \env^{t-1}_{1,2} > 1} &= \E{r_{(2)}^t - r_{(1)}^t \mid \env^{t-1}_{1,2} > 1, \ordv_t = (2,1)}\prb{\ordv_t = (2,1) \mid \env^{t-1}_{1,2} > 1} \nonumber \\  
& \qquad + \E{r_{(1)}^t - r_{(2)}^t \mid \env^{t-1}_{1,2} > 1, \ordv_t  = (1,2)}\prb{\ordv_t = (1,2) \mid \env^{t-1}_{1,2} > 1} \nonumber\\
&= \frac{1}{8} \left[\prb{\ordv_t = (2,1) \mid \env^{t-1}_{1,2} > 1} - \prb{\ordv_t = (1,2) \mid \env^{t-1}_{1,2} > 1} \right],  
\end{align}  
}%
where we have used the fact that $\E{\rt{(2)} - \rt{(1)}} = \frac{1}{8}$ as Subsection~\ref{subsec:information} suggests and the fact that the rewards $r_{(i)}^t$ for $i\in \{1,2 \}$ are independent of the arriving agent's identity.

Under nudged arrival, the ideal permutation is $\sigma_t = (1,2)$, prioritizing agent $1$'s arrival over agent $2$'s arrival;therefore, the permutation $(1,2)$ is more likely than $(2,1)$, resulting in
\[
\prb{\ordv_t = (2,1) \mid \env^{t-1}_{1,2} > 1} - \prb{\ordv_t = (1,2) \mid \env^{t-1}_{1,2} > 1} \leq -\delta.
\]  
Thus, rewriting Equation~\eqref{eq:delta_expression}, we have $\E{\Delta^t_{1,2} \mid \env^{t-1}_{1,2} > 1}  \leq -\frac{\delta}{8}$. To conclude this example, we plug this result into Equation~\eqref{eq:env_update} and obtain
\[
\mathbb{E}[\env^t \mid \env^{t-1}_{1,2} > 1] \leq \env^{t-1}_{1,2}-\frac{\delta}{8},
\]
suggesting that the cumulative envy $\env^t$ is likely to decrease in round $t$ by a non-negligible value.
\end{example}
To be able to analyze envy dynamics and show that it cannot grow too much, we need to have some regularity assumptions on the way algorithms we analyze operate. Indeed, the envy reduction in Example~\ref{example: envy with sugg} relies heavily on the fact that $\E{\rt{(2)} - \rt{(1)}} = \frac{1}{8} > 0$. This inequality ensures that, in expectation, arriving second leads to a higher reward. While we aim to analyze any arbitrary algorithm, we need to ensure that the ordering $\sigma_t$ in Line~\ref{line:mapping} reduces envy in expectation.
The following natural assumption generalizes the behavior of Algorithm~\ref{alguni} in Example~\ref{example: envy with sugg}.\begin{assumption}\label{assumption: nudge alg ref}
In every round $t \in [T]$, the algorithm picks arms so that
\[
    \E{r_{(1)}^t} \leq \E{r_{(2)}^t}\leq \cdots \leq \E{r_{(N)}^t}.
\]
\end{assumption}
To satisfy this assumption,\footnote{In fact, all of our results holds for the much broader case where there exists a permutation $\sigma:N \rightarrow N$ and the algorithm picks arms so that 
$\E{r_{(\sigma(1))}^t} \leq \E{r_{(\sigma(2))}^t}\leq \cdots \leq \E{r_{(\sigma(N))}^t}$. In such a case, we would pick $\sigma_t$ in Line~\ref{line:mapping} of Algorithm~\ref{alg: sugg arr} so that $\left(R^{t-1}_{\sigma_t(\sigma(i))}\right)_i$ is an increasing series.} the algorithm at hand must have some information about the expected rewards; Bayesian information is a sufficient condition, although not necessary.

Furthermore, without loss of generality, we shall assume that $\prb{\Delta^t_{(N),(1)} \neq 0}>0$ in every round $t$. This is indeed without loss of generality, as Assumption~\ref{assumption: nudge alg ref} already guarantees that $\E{\Delta^t_{(N),(1)}}=\E{r^t_{(N)}-r^t_{(1)}} \geq 0 $, and any round in which $\prb{\Delta^t_{(N),(1)} \neq 0}=0$ does not affect envy and can be disregarded. Additionally, to simplify our analysis, we introduce the new notation $\tilde{\dif}$, denoting
\begin{equation}\label{eq def tdif}
\tdif = \min_{1\leq i <j \leq N, t\in [T]:  \prb{\Delta^t_{(j),(i)} \neq 0}>0 }\left\{ \E{\Delta^t_{(j),(i)} \mid \Delta^t_{(j),(i)} \neq 0} \right\}    
\end{equation}
The quantity $\tdif >0$ %, which is well defined as long as since we assume that $\prb{\Delta^t_{(N),(1)} \neq 0}>0$ for every $t$, 
represents a lower bound on our ability to decrease envy from round to round. Recall that Assumption~\ref{assumption: nudge alg ref} implies that $\E{\Delta^t_{(j),(i)}}=\E{r^t_{(j)}-r^t_{(i)}} \geq 0 $ for $j>i$. However, $\Delta^t_{(j),(i)}$ can take 0 sometimes,\footnote{For instance, if $N > K+1$, any explore-first algorithm would have $\Delta^t_{(N-1),(N)}=0$ as the algorithm will pick the same arm for both sessions.} which means there is no scope for nudged arrival to further reduce envy. As long as Assumption~\ref{assumption: nudge alg ref} holds and $ \prb{\Delta^t_{(j),(i)} \neq 0}>0$, we know that
\[
\E{\Delta^t_{(j),(i)} \mid \Delta^t_{(j),(i)} \neq 0}=\frac{ \E{\Delta^t_{(j),(i)}}}{\prb{\Delta^t_{(j),(i)} \neq 0}};
\] thus, we expect $\tdif$ to be significant. To illustrate, recall that in Example~\ref{example: envy with sugg} it holds that $\E{\Delta^t_{(2),(1)}}=\frac{1}{8}$, whereas $\E{\Delta^t_{(2),(1)} \mid \Delta^t_{(2),(1)} \neq 0}=\frac{1}{4}$.
\subsection{Envy Analysis}\label{Envy Analysis}
We are ready to present the main result of the section: Upper bounding the envy under nudged arrival.
\begin{theorem}\label{thm: sugg-envy}
    When executing any algorithm that satisfies Assumption~\ref{assumption: nudge alg ref} with nudged arrival, the expected envy is
    \[\E{\env^T({\sugord})}\leq (N-1)\left(2+\frac{128}{15\delta \tdif}\right) .\]
\end{theorem}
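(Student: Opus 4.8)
The plan is to bound the expected envy by tracking a single pairwise envy at a time and using a drift/excursion argument, then taking a union bound over the $N-1$ consecutive pairs in the ideal ordering.

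\textbf{Step 1: Reduce to consecutive pairs.} First I would argue that it suffices to control $\env^T_{\sigma(i),\sigma(i+1)}$ for each $i$, where $\sigma$ is the (fixed, by Assumption~\ref{assumption: nudge alg ref}) ordering in which expected per-round rewards are increasing. Since cumulative envy between the extreme agents telescopes, $\env^T \le \sum_{i=1}^{N-1} |\env^T_{\sigma(i),\sigma(i+1)}|$ (or some variant accounting for which agent is ``ahead''), so a bound of $2 + \frac{128}{15\delta\tdif}$ on each term gives the claimed $(N-1)(2 + \frac{128}{15\delta\tdif})$.

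\textbf{Step 2: Drift analysis for a single consecutive pair.} Fix a pair and let $Z^t$ denote the (signed or absolute) envy between them after round $t$. The key observation, generalizing Example~\ref{example: envy with sugg}, is a conditional negative drift: whenever $|Z^{t-1}|$ is large enough that the nudged arrival reliably orders these two agents in the ``correct'' (envy-reducing) direction, Property~\ref{prop:nudge} gives probability at least $\frac{1+\delta}{2}$ of the helpful order, and conditioning on a nonzero discrepancy contributes at least $\tdif$ in expectation; combining, $\E{Z^t - Z^{t-1} \mid \mathcal H^{t-1}} \le -\delta\tdif \cdot (\text{something like } \tfrac12 \prb{\Delta \ne 0})$ once $|Z^{t-1}|$ exceeds $1$ (the per-round discrepancy is bounded by $1$, so above threshold $1$ the sign of $Z$ cannot flip within a round, which is what lets us drop the absolute value cleanly). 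Below the threshold, $Z$ can only move by at most $1$ per round, so it cannot escape far. This is a supermartingale-with-reflecting-barrier / positive-recurrent-random-walk situation.

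\textbf{Step 3: Turn the drift into a bound on $\E{\max_t |Z^t|}$.} I would set up a potential function — e.g.\ an exponential potential $e^{\lambda Z^t}$ or a piecewise-linear one — chosen so that the negative drift above the threshold dominates the bounded upward fluctuations, making the potential a supermartingale. Optional stopping (or a standard Lyapunov/Foster argument for the stationary tail) then yields a geometric-type tail bound $\prb{\max_{t\le T} |Z^t| \ge 1 + s} \le$ (decaying in $s$ with rate governed by $\delta\tdif$), and integrating the tail gives $\E{\max_{t\le T}|Z^t|} \le 2 + \frac{128}{15\delta\tdif}$. The precise constants $128/15$ presumably come from optimizing the potential's parameter against the worst-case bounded-increment variance — I would not chase them in the sketch, but the excursion-theory machinery the authors mention is exactly what controls the expected maximum of a reflected process with constant negative drift.

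\textbf{Main obstacle.} The delicate point is handling the regime where $|Z^{t-1}| \le 1$: there the nudge may point the ``wrong'' way (the ideal permutation $\sigma_t$ is based on raw cumulative rewards, and when the two agents are nearly tied the helpful-order guarantee can fail or even reverse), so the drift argument breaks down precisely near the barrier. The fix is to treat $[-1,1]$ (or a slightly larger window) as an absorbing/neutral zone that contributes only an additive $O(1)$, and show that once the walk exits this zone on the ``too much envy'' side it is pushed back with rate $\delta\tdif$; making the bookkeeping of ``which agent currently envies which'' consistent across rounds — since the identity of the envied agent can change while $|Z^t|\le 1$ — is the part that needs care, and is why the final bound carries the ``$+2$'' slack per pair.
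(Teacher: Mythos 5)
Your high-level strategy matches the paper's in shape: decompose the maximal envy into $N-1$ consecutive gaps, show each gap has negative drift of order $\delta\tdif$ once it exceeds $1$, convert that drift into a geometric tail, and integrate to get $2+\frac{128}{15\delta\tdif}$ per gap. The ``$+2$'' intuition (the gap is at most $1$ when an excursion begins and can grow by at most $1$ in one round) is also exactly what the paper uses. However, there is a genuine gap in your Step 2, and it stems from your choice of decomposition. You fix a single permutation $\sigma$ of the \emph{agents} up front (and you misattribute this to Assumption~\ref{assumption: nudge alg ref}, which orders \emph{sessions}, not agents --- the agents are anonymous and have no fixed expected-reward order) and then try to run a drift argument on $\env^t_{\sigma(i),\sigma(i+1)}$ for a fixed pair. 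For a fixed pair $(i,j)$, Property~\ref{prop:nudge} only biases their \emph{relative} order; it says nothing about which sessions they land in. So even when $R^{t-1}_i-R^{t-1}_j>1$, both agents can be placed (with probability one, under a legitimate nudged-arrival distribution) into two exploitation sessions whose rewards are identical --- e.g., any explore-first algorithm with $N>K+1$ has $\Delta^t_{(q),(w)}=0$ for all late session pairs. In that regime the conditional drift of the fixed-pair envy is exactly zero above the threshold, so the supermartingale/Lyapunov step you rely on does not go through from the stated hypotheses; the quantity $\tdif$ enters only through the specific session pair the two agents occupy, which your argument leaves uncontrolled.

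The paper avoids this by decomposing into the \emph{rank} gaps $G_i^t=R^t_{\sigma_t(i)}-R^t_{\sigma_t(i+1)}$, where $\sigma_t$ is the round-$t$ ordering by cumulative reward, so that $\env^t=\sum_{i=1}^{N-1}G_i^t$ exactly with all terms nonnegative. Since the agents occupying ranks $i$ and $i+1$ change over time (your ``identity of the envied agent can change'' concern, which is real but which you propose to absorb near the barrier rather than globally), the paper introduces an auxiliary process $M^\tau$ initialized at $2$ at the start of each excursion of $G_i$ above $1$, with increments $r^\tau_{(i)}-r^\tau_{(i+1)}$ indexed by \emph{session}; Proposition~\ref{prop G less than M} shows rank swaps only decrease the true gap, so $G_i^\tau\le M^\tau$ throughout the excursion, and Proposition~\ref{prop: sugg-m concentration} applies Azuma--Hoeffding restricted to the rounds where the increment is nonzero (which is where $\tdif$ legitimately enters, and which also handles the zero-increment rounds that break your version). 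To repair your proposal you would need to replace the fixed pairing with this rank-based pairing (or otherwise prove a uniform negative drift for fixed pairs, which does not hold in general), at which point you are essentially reconstructing the paper's argument.
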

Notice that this upper bound does not depend on the horizon $T$. Intuitively, under nudged arrival, envy behaves like a random walk with a drift toward zero. Although each round may introduce a discrepancy (akin to a random fluctuation), the nudging mechanism consistently pushes the cumulative difference back toward zero. Furthermore, the bound is inversely proportional to $\delta$ and $\tdif$: As $\delta$ decreases, the nudging effect weakens and nudged arrival increasingly resembles uniform arrival. We suspect the terms in the bound are not tight; we discuss it in Section~\ref{sec:discussion}.
\begin{proof}[Proof of Theorem~\ref{thm: sugg-envy}]
Fix any arbitrary algorithm satisfying  Assumption~\ref{assumption: nudge alg ref} and any arbitrary~$t \in [T]$. The proof is outlined as follows:
\begin{enumerate}
    \item Step 1 introduces envy gaps as stochastic processes and the concept of envy excursions.
    \item Demonstrating that envy gaps are nontrivial to analyze, Step 2 presents a more friendly stochastic process that we prove to upper bound the envy gap almost surely.
    \item Step 3 leverages Property~\ref{prop:nudge} and concentration inequalities to upper bound large deviations of the friendly stochastic process.
    \item Lastly, Step 4 uses the tail formula and the concentration from Step 3 to bound to cumulative envy.
\end{enumerate}
\textbf{Step 1: Envy Gap and Excursion}
For every $t$, let $\sigma_t: [N] \rightarrow [N]$ be the ideal permutation from Line~\ref{line:mapping}, i.e.,  
\[
R^{t-1}_{\sigma_t(N)} \leq  R^{t-1}_{\sigma_t(N-1)} \leq \cdots \leq R^{t-1}_{\sigma_t(2)} \leq R^{t-1}_{\sigma_t(1)} .
\]
For every $i, 1\leq i \leq N-1$, we define the \emph{envy gap} $G_i^t = R^t_{\sigma_t(i)} - R^t_{\sigma_t(i+1)}$, representing the envy between the agent with the $i$-th highest reward and the agent with the $(i+1)$-th highest reward.
Consequently, we can define $\env^t$ using the envy gap sequence $(G^t_i)_i$ by
\begin{equation}\label{eq:jknbvfbjj}
\env^t=R^t_{\sigma_t(1)}-R^t_{\sigma_t(N)}=\sum_{i=1}^{N-1} R^t_{\sigma_t(i)}-R^t_{\sigma_t(i+1)} = \sum_{i=1}^{N-1} G_i^t.   
\end{equation}

Next, fix any arbitrary $i$ in the range. We continue by analyzing \emph{excursions} from low envy to high envy and showing they are relatively short, meaning that the expected envy $\E{G_i^t}$ is low. We define an excursion as a sequence of consecutive rounds during which the gap $G_i^\cdot$ exceeds $1$. 
Let $\underline{t} = \max{\left\{ \tau \mid 1 \leq \tau \leq t, G_i^{\tau} \leq 1 \right\}}$ denote last round $\tau$ before $t$ that $G_i^{\tau}$ was less than 1. Similarly,  let $\bar{t} = \min{\left\{ \tau \mid t\leq \tau \leq T, G_i^{\tau} \leq 1 \right\}}$ be the first round $\tau$ after $t$ where $G_i^{\tau}$ is less than 1. For the extreme case where $\bar{t}$ is undefined, we set $\bar{t} = T + 1$. Furthermore, let $D(t)$ denote $t$'s excursion, namely, the set of all the consecutive rounds $\tau$ that includes $t$ during which $G_i^\tau \geq 1$. Formally, $D(t) = \{\tau | \underline{t} < \tau < \bar{t}\}$. Notice that $D(t)$ is an empty set if and only if $G_i^t \leq 1$.

\textbf{Step 2: Auxiliary Stochastic Process}
The sequence $\{G_i^\tau\}_{\tau \in D(t)}$ is challenging to work with because the agents occupying the $i$-th and $(i+1)$-th highest reward position may change from round to round.  We refer to these changes as \emph{rank swaps}, which cause increments like  $G_i^{\tau+1} - G_i^\tau$ to lack a consistent structure. To address this complexity, we introduce the stochastic process $(M^\tau)_\tau$, which is easier to analyze.
%$G_i^\tau$ for $\tau \in D(t)$ is challenging to work with. To demonstrate, notice that the identities of the agent with the $i$-th highest reward and the agent with the $(i+1)$-th highest reward can change from round to round; thus, the increment $G_i^{\tau+1}-G_i^{\tau}$ has no clear structure. 
%To that end, we introduce the stochastic process $(M^\tau)_\tau$, which is easier to analyze. We define it as follows:
\begin{align*}
    M^\tau =
    \begin{cases}
        2 & \text{if } \tau = \underline{t}+1 \\
        M^{\tau-1}+r^\tau_{(i)}-r^\tau_{(i+1)} & \text{else}
        \end{cases}.
\end{align*}
Unlike $G_i^{\tau+1}-G_i^{\tau}$, the increments $M^{\tau+1}-M^{\tau} = r^\tau_{(i)}-r^\tau_{(i+1)}$ are more straightforward and negative in expectation due to Assumption~\ref{assumption: nudge alg ref}. The next proposition demonstrates that $M^\tau_i$ can assist when analyzing $G^\tau_i$.
\begin{proposition}\label{prop G less than M}
For every $\tau \in D(t)$, it holds that $G^\tau_i \leq M^\tau_i$ almost surely.
\end{proposition}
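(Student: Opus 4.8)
The plan is to prove the claim by induction on $\tau$ over the excursion $D(t)$, starting from the left endpoint $\underline t + 1$. For the base case $\tau = \underline t + 1$: at round $\underline t$ we have $G_i^{\underline t} \le 1$ by definition of $\underline t$, and a single round can increase the gap between any two agents by at most $1$ (rewards lie in $[0,1]$), so $G_i^{\underline t + 1} \le 2 = M^{\underline t + 1}$. For the inductive step, suppose $\tau, \tau+1 \in D(t)$ and $G_i^\tau \le M^\tau_i$; I want $G_i^{\tau+1} \le M^{\tau+1}_i = M^\tau_i + r^{\tau+1}_{(i)} - r^{\tau+1}_{(i+1)}$. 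The key observation is that the agents occupying ranks $i$ and $i+1$ in round $\tau+1$ (under $\sigma_{\tau+1}$) need not be the same as in round $\tau$, but whatever they are, their round-$(\tau+1)$ rewards are $r^{\tau+1}_{(i)}$ and $r^{\tau+1}_{(i+1)}$ respectively (in fact whichever agents they are, the \emph{pair of reward values} received in that round's sessions at positions $i, i+1$ matches, since the algorithm is anonymous and rewards depend only on the session, not the agent's identity).

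The crux is a ``rank-swap'' comparison: I claim $R^{\tau+1}_{\sigma_{\tau+1}(i)} - R^{\tau+1}_{\sigma_{\tau+1}(i+1)} \le \bigl(R^\tau_{\sigma_\tau(i)} - R^\tau_{\sigma_\tau(i+1)}\bigr) + r^{\tau+1}_{(i)} - r^{\tau+1}_{(i+1)}$. To see this, note that $R^{\tau+1}_{\sigma_{\tau+1}(i)}$ is the $i$-th largest value among $\{R^{\tau+1}_m\}_{m\in[N]}$, and similarly $R^{\tau+1}_{\sigma_{\tau+1}(i+1)}$ is the $(i+1)$-th largest; the difference of consecutive order statistics can be bounded by writing $R^{\tau+1}_m = R^\tau_m + (\text{round-}(\tau+1)\text{ reward of agent }m)$, where the round-$(\tau+1)$ reward of the agent at session-position $q$ is some value $v_q$ with the multiset $\{v_1,\dots,v_N\}$ fixed. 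The top-$i$ set of cumulative rewards after round $\tau+1$ can only gain or lose members relative to the top-$i$ set after round $\tau$ in a controlled way; formally, $R^{\tau+1}_{\sigma_{\tau+1}(i)} \le R^\tau_{\sigma_\tau(i)} + \max_q v_q = R^\tau_{\sigma_\tau(i)} + r^{\tau+1}_{(1)}$-type bounds are too loose, so I would instead argue directly: whichever agent $a$ sits at rank $i$ in round $\tau+1$, it satisfies $R^{\tau+1}_a \le R^\tau_{\sigma_\tau(i)} + r^{\tau+1}_{(i)}$ does not hold in general either — the right move is to use that the agent nudged into an earlier session (smaller $q$) is, by Line~\ref{line:mapping}, one with higher cumulative reward, so higher-ranked agents receive \emph{smaller} increments $r^{\tau+1}_{(q)}$ with small $q$. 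Combined with Assumption~\ref{assumption: nudge alg ref} ($\E{r^\tau_{(1)}}\le\cdots\le\E{r^\tau_{(N)}}$) and the ordering of $\sigma_{\tau+1}$ relative to $\sigma_\tau$, a telescoping/exchange argument shows the gap at ranks $(i,i+1)$ after round $\tau+1$ is dominated by the gap at ranks $(i,i+1)$ after round $\tau$ plus the increment $r^{\tau+1}_{(i)} - r^{\tau+1}_{(i+1)}$.

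I expect the rank-swap step to be the main obstacle: making precise why a permutation of which agents occupy ranks $i$ and $i+1$ cannot increase the consecutive-rank gap by more than the ``natural'' increment $r^{\tau+1}_{(i)}-r^{\tau+1}_{(i+1)}$. The cleanest route is probably to show that for \emph{any} two agents $a,b$ with $R^\tau_a \ge R^\tau_b$, after round $\tau+1$ the gap between the agent who ends up ranked $i$ and the one ranked $i+1$ is no larger than $G_i^\tau$ plus the difference of the $i$-th and $(i+1)$-th largest round-$(\tau+1)$ rewards — which, because the arrival order places higher-cumulative-reward agents into earlier (lower-reward-in-expectation) sessions, is exactly $r^{\tau+1}_{(i)} - r^{\tau+1}_{(i+1)}$ in the relevant coupling. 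Once that inequality is in hand, the induction closes immediately, since $M^{\tau+1}_i = M^\tau_i + r^{\tau+1}_{(i)} - r^{\tau+1}_{(i+1)} \ge G_i^\tau + r^{\tau+1}_{(i)} - r^{\tau+1}_{(i+1)} \ge G_i^{\tau+1}$. I would also double-check the edge behavior at the right endpoint of $D(t)$ (where $G_i$ drops back to $\le 1$) to confirm no off-by-one issue in the index set $\{\tau \mid \underline t < \tau < \bar t\}$.
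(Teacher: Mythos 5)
Your overall skeleton matches the paper's: induction over $\tau\in D(t)$, with the same base case ($G_i^{\underline t}\le 1$ plus rewards in $[0,1]$ gives $G_i^{\underline t+1}\le 2=M^{\underline t+1}$), and you correctly identify the rank-swap phenomenon as the crux of the inductive step. But you never actually close that step, and the idea you are missing is precisely the one the paper's proof hinges on: \emph{inside an excursion we have $G_i^\tau>1$, while every round-$(\tau+1)$ reward lies in $[0,1]$, so no agent in the bottom group $[N]\setminus[i]$ (under $\sigma_\tau$) can overtake any agent in the top group $[i]$ in a single round}. Hence the set of agents occupying the top $i$ ranks is unchanged from $\tau$ to $\tau+1$; rank swaps can occur within each group but never across the $i/(i{+}1)$ boundary. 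This yields the two deterministic inequalities $R^{\tau+1}_{\sigma_{\tau+1}(i)}=\min_{j\in[i]}\{R^\tau_j+r^{\tau+1}_j\}\le R^\tau_{\sigma_\tau(i)}+r^{\tau+1}_{\sigma_\tau(i)}$ and $R^{\tau+1}_{\sigma_{\tau+1}(i+1)}=\max_{j\notin[i]}\{R^\tau_j+r^{\tau+1}_j\}\ge R^\tau_{\sigma_\tau(i+1)}+r^{\tau+1}_{\sigma_\tau(i+1)}$, which subtract to give $G_i^{\tau+1}\le G_i^\tau+r^{\tau+1}_{(i)}-r^{\tau+1}_{(i+1)}\le M_i^{\tau+1}$ and close the induction.

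The routes you sketch in place of this cannot work as stated. Bounding via $\max_q v_q$ you already recognize as too loose, and the fallback you lean on --- Assumption~\ref{assumption: nudge alg ref} together with ``higher-ranked agents are nudged into earlier, lower-reward-in-expectation sessions'' --- is an argument about \emph{expectations}, whereas the proposition is an \emph{almost-sure}, pathwise inequality; no expectation-level ordering of $\E{r^\tau_{(q)}}$ can control a realization-by-realization comparison of order statistics. The ``telescoping/exchange argument'' is left entirely unspecified, and without the excursion condition $G_i^\tau>1$ there is in fact no valid pathwise bound of the claimed form (an agent from below could jump above the rank-$i$ agent and the consecutive-rank gap could behave badly). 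So the proposal has the right frame but a genuine gap at its central step.
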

The proof of Proposition~\ref{prop G less than M} appears in \ifapp{Section~\ref{appendix:nudge}}{the appendix}. The main argument enabling this statement is that rank swaps can only decrease the increments $G_i^{\tau+1}-G_i^{\tau}$, but do not affect the increments $M^{\tau+1}-M^{\tau}$.

\textbf{Step 3: Concentration}
The recursive definition of $M^\tau$ implies that for every $\tau \in D(t)$, $M^\tau = 2+ \sum_{n= \underline{t}+2 }^{\tau} r^n_{(i)}-r^n_{(i+1)}$. The next proposition bounds large deviations of $M^\tau$.
\begin{proposition}\label{prop: sugg-m concentration}
    For any $n \in \mathbb{N}$ and $\tau \in D(t)$, it holds that
    \begin{align*}
    \prb{M^\tau > n}\leq \exp\left\{-\frac{(n-2)(\delta \tdif)}{8}\right\}.
    \end{align*}
\end{proposition}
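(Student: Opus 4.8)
The plan is a Chernoff-type bound for the additive form of the process. Unrolling the recursion gives $M^\tau = 2 + \sum_{n=\underline t + 2}^{\tau}\zeta_n$, where the increments $\zeta_n$ are bounded in $[-1,1]$ (rewards live in $[0,1]$). I would show that $e^{\lambda(M^\tau-2)}=\prod_{n=\underline t+2}^{\tau}e^{\lambda\zeta_n}$ is a supermartingale for $\lambda:=\tfrac{\delta\tdif}{8}$ and then apply Markov's inequality. Because the excursion start $\underline t$ is history-measurable but not a stopping time, I would run this argument on the freshly restarted process — it resets to value $2$ after every round in which $G_i$ is at most $1$ — equivalently condition on the value of $\underline t$; the tail bound obtained will not depend on the restart point, so it passes to the unconditional statement. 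This bookkeeping is routine but must be done carefully.

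The heart of the proof is the one-step estimate $\E{e^{\lambda\zeta_n}\mid\mathcal F_{n-1}}\le 1$. Rounds in which the relevant per-round discrepancy is $0$ have $\zeta_n=0$ and contribute the factor $1$. On the complementary event I combine three ingredients: the agent who is ahead in cumulative reward is ranked first by the ideal permutation $\sigma_n$, so by Property~\ref{prop:nudge} it arrives in an earlier session with probability at least $\tfrac{1+\delta}{2}$; by Assumption~\ref{assumption: nudge alg ref}, arriving earlier yields a weakly smaller expected reward; and by the definition of $\tdif$, the expected discrepancy conditioned on being nonzero is at least $\tdif$. A short coupling that swaps the two agents' session slots converts these into the conditional drift bound $\E{\zeta_n\mid \zeta_n\neq 0,\mathcal F_{n-1}}\le-\delta\tdif$. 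Plugging this into the elementary inequality $\E{e^{\lambda Z}}\le\cosh\lambda+\E{Z}\sinh\lambda$ for $Z\in[-1,1]$ (convexity of $x\mapsto e^{\lambda x}$ on $[-1,1]$), and accounting for the trivial factor on the complementary event, gives $\E{e^{\lambda\zeta_n}\mid\mathcal F_{n-1}}\le\cosh\lambda-\delta\tdif\sinh\lambda\le 1$, the last step because $\cosh\lambda-1\le\tfrac{\lambda}{2}\sinh\lambda\le\delta\tdif\sinh\lambda$ when $\lambda=\tfrac{\delta\tdif}{8}$. I expect this step to be the main obstacle: one must set things up so that the nudging bias $\delta$ multiplies the \emph{conditional} gap $\tdif$, not the possibly tiny unconditional mean of the discrepancy, and so that rank swaps inside the excursion do not break the monotonicity used above — which is exactly what Proposition~\ref{prop G less than M} handles, letting bounds on $M$ carry over to the true envy gap $G_i$.

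With the one-step estimate in hand, $(e^{\lambda(M^\tau-2)})_\tau$ is a nonnegative supermartingale of initial value $1$, so $\E{e^{\lambda(M^\tau-2)}}\le 1$, and Markov's inequality gives, for $n>2$, $\prb{M^\tau>n}=\prb{e^{\lambda(M^\tau-2)}>e^{\lambda(n-2)}}\le e^{-\lambda(n-2)}=\exp\{-\tfrac{(n-2)\delta\tdif}{8}\}$; for $n\le 2$ the right-hand side is at least $1$ and the bound is vacuous. This yields the claimed concentration.
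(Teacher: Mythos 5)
Your proof is correct and reaches the same bound, but the concentration machinery is genuinely different from the paper's. Both arguments rest on the same key drift inequality, namely $\E{r^l_{(i)}-r^l_{(i+1)} \mid B^l}\leq -\delta\tdif$ (the paper's Equation~\eqref{eq:sgdjfndb}), which the paper asserts from Property~\ref{prop:nudge}, Assumption~\ref{assumption: nudge alg ref} and the definition of $\tdif$, and which you derive via a session-swap coupling at essentially the same level of rigor. From there the paper conditions on the realized set $b=B(\tau)$ of rounds with a nonzero increment, applies Azuma--Hoeffding to the centered sum over $b$, and then uses $\abs{b}\geq n-2$ together with the expansion of $(n-2+\delta\tdif\abs{b})^2$ to extract the linear exponent $\tfrac{(n-2)\delta\tdif}{8}$ from a Gaussian-type tail. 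You instead run a Chernoff/exponential-supermartingale argument with the fixed parameter $\lambda=\tfrac{\delta\tdif}{8}$, bounding the one-step moment generating function by $1$ via $e^{\lambda z}\leq\cosh\lambda+z\sinh\lambda$ and $\tanh(\lambda/2)\leq\lambda/2\leq\delta\tdif$; zero-increment rounds contribute a factor of $1$ automatically, so no conditioning on the active set is needed, and the linear exponent comes out directly. Your route is more self-contained and sidesteps the paper's conditioning on the future-measurable set $B(\tau)$; you also explicitly flag the (shared) subtlety that the excursion start $\underline{t}$ is not a stopping time, which the paper glosses over. Two small imprecisions: the displayed one-step bound should read $\E{e^{\lambda\zeta_n}\mid\mathcal F_{n-1}}\leq \prb{\zeta_n=0\mid\mathcal F_{n-1}}+\prb{\zeta_n\neq 0\mid\mathcal F_{n-1}}\left(\cosh\lambda-\delta\tdif\sinh\lambda\right)$, a convex combination with $1$ rather than the bare quantity $\cosh\lambda-\delta\tdif\sinh\lambda$ (the conclusion $\leq 1$ is unaffected); and the coupling step implicitly needs the pairwise bias of Property~\ref{prop:nudge} to survive conditioning on which pair of sessions the two agents occupy, a point the paper's own one-line justification of Equation~\eqref{eq:sgdjfndb} also leaves implicit.
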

%Intuitively, the bound in Proposition~\ref{prop: sugg-m concentration}  
The proof of Proposition~\ref{prop: sugg-m concentration} appears in \ifapp{Section~\ref{appendix:nudge}}{the appendix}. It leverages the Azuma-Hoeffding inequality and several algebraic tricks to obtain the bound. As we expect, as $\delta$ and $\tdif$ increase, the right-hand side becomes more significant. Alternatively, if the term $\delta \tdif$ approach zero, this bound become irrelevant as $\exp\{0\}=1$.

\textbf{Step 4: Tail Sum}
To finalize the proof, we use the tail-sum formula. Since $G_i^t$ is non-negative,
\begin{align*}
\E{G_i^t} &=
    \int_{x=0}^{\infty} \prb{G_i^t > x}dx \leq
    \sum_{n=0}^{\infty}{\int_{x=n}^{n+1} \prb{G_i^t > n} \,dx}=
    \sum_{n=0}^{\infty}{\prb{G_i^t > n}}.
\end{align*}    
Next, by applying Propositions~\ref{prop G less than M} and~\ref{prop: sugg-m concentration}, we conclude that
\begin{align}\label{eq:tail formula}
    \sum_{n=0}^{\infty}{\prb{G_i^t > n}}& \stackrel{\textnormal{Prop. \ref{prop G less than M}}}{\leq} \sum_{n=0}^{\infty}{\prb{M_i^t > n}} \leq 2+ \sum_{n=2}^{\infty}{\prb{M_i^t > n}}   \stackrel{\textnormal{Prop. \ref{prop: sugg-m concentration}}}{\leq} 2+ \sum_{n=2}^{\infty}\exp{\left\{-\frac{(n-2)\delta \tdif}{8} \right\}}\nonumber \\
    &= 2+ \sum_{n=0}^{\infty}\left(e^{-\frac{\delta \tdif}{8}} \right)^n = 2+\frac{1}{1-\exp{\left(\frac{-\delta \tdif}{8} \right)}} \stackrel{e^{-x}\leq 1-x+\frac{x^2}{2}}{\leq}
    2+\frac{1}{\frac{\delta \tdif}{8}-\frac{(\delta \tdif)^2}{128}}
 \nonumber \\
    & \stackrel{\delta \tdif \leq 1}{\leq} 2+\frac{1}{\frac{\delta \tdif}{8}-\frac{\delta \tdif}{128}} = 2+\frac{128}{15\delta \tdif}.
\end{align}
Ultimately, recall that Inequality~\eqref{eq:tail formula} applies to $\E{G_i^t}$ for every $i$; therefore, Equation~\eqref{eq:jknbvfbjj} ensures that $\E{\env^t}=\E{\sum_{i=1}^{N-1} G_i^t} \leq (N-1)\left(2+\frac{128}{15\delta \tdif}\right) $. This concludes the proof of Theorem~\ref{thm: sugg-envy}.
\end{proof}

{\color{green}

}
\subsection{Adversarial Arrival}
\label{sec: advord}
We end this section by focusing on the adversarial arrival order $\advord$. Intuitively, an adversary seeking to maximize envy would reverse the ideal permutation, placing agents in descending order of their current cumulative rewards. That is, sets the order $\ordv_t$ in round $t$ such that
\[
R^{t-1}_{\ordv_t(N)} \geq R^{t-1}_{\ordv_t(N-1)} \geq \dots \geq R^{t-1}_{\ordv_t(2)} \geq R^{t-1}_{\ordv_t(1)}.
\]
Indeed, it is easy to see that:
\begin{proposition}\label{thm: adv-envy}
When executing any algorithm that satisfies Assumption~\ref{assumption: nudge alg ref} with $\advord$, the expected envy is
    \[\E{\env^T ({\advord})}\geq \tilde{\dif}T.\]
\end{proposition}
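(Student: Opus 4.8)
The plan is to show that under the adversarial arrival $\advord$, the expected envy grows by at least $\tdif$ each round, so that after $T$ rounds it is at least $\tdif T$. I would argue by tracking a single ``envy gap'' between the currently poorest and currently richest agent and showing that the adversary's reversal of the ideal permutation forces this gap to increase in expectation by exactly the right amount every round.

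First I would set up notation mirroring Step 1 of the proof of Theorem~\ref{thm: sugg-envy}: in round $t$, let $\sigv_t$ order agents by decreasing cumulative reward at the end of round $t-1$, so that the adversary sends $\sigma_t(N)$ (the poorest) first and $\sigma_t(1)$ (the richest) last. Then the agent arriving in session $q$ is exactly $\sigma_t(N-q+1)$, i.e.\ the adversary aligns the ideal-permutation rank with the arrival session in the worst possible way. The key observation is that before round $t$ the envy is $\env^{t-1} = R^{t-1}_{\sigma_t(1)} - R^{t-1}_{\sigma_t(N)} \geq 0$, and after round $t$ the richest-so-far agent $\sigma_t(1)$ received $r^t_{(N)}$ (she arrived last) while the poorest-so-far agent $\sigma_t(N)$ received $r^t_{(1)}$ (she arrived first). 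Hence
\[
R^t_{\sigma_t(1)} - R^t_{\sigma_t(N)} = \env^{t-1} + \bigl(r^t_{(N)} - r^t_{(1)}\bigr) = \env^{t-1} + \Delta^t_{(N),(1)},
\]
and since $\env^t = \max_{i,j}\env^t_{i,j} \geq R^t_{\sigma_t(1)} - R^t_{\sigma_t(N)}$, we get $\env^t \geq \env^{t-1} + \Delta^t_{(N),(1)}$ pointwise.

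Next I would take expectations and iterate. Conditioned on the history up to round $t-1$ (which determines $\sigv_t$ and hence the adversarial order), Assumption~\ref{assumption: nudge alg ref} gives $\E{\Delta^t_{(N),(1)} \mid \mathcal H^{t,1}} = \E{r^t_{(N)} - r^t_{(1)} \mid \mathcal H^{t,1}} \geq 0$; more precisely, writing $\E{\Delta^t_{(N),(1)}} = \prb{\Delta^t_{(N),(1)}\neq 0}\cdot \E{\Delta^t_{(N),(1)}\mid \Delta^t_{(N),(1)}\neq 0} \geq \prb{\Delta^t_{(N),(1)}\neq 0}\cdot \tdif$. Using the assumption (without loss of generality) that $\prb{\Delta^t_{(N),(1)}\neq 0}>0$ in every round, and by the definition of $\tdif$ in Equation~\eqref{eq def tdif} as a minimum over pairs and rounds, the cleanest route is to note the statement is really claiming $\tdif T$ where each round contributes at least $\tdif$; I would either (a) restrict to the natural reading where every round has $\prb{\Delta^t_{(N),(1)}\neq 0}=1$ under the optimal-ish algorithms considered, giving $\E{\Delta^t_{(N),(1)}} \geq \tdif$ directly, or (b) observe that rounds with zero discrepancy can be discarded as the paper already notes, leaving $\E{\Delta^t_{(N),(1)}\mid \text{nonzero}} \geq \tdif$ and a conditional-expectation telescoping. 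Summing the pointwise inequality $\env^t \geq \env^{t-1} + \Delta^t_{(N),(1)}$ from $t=1$ to $T$ (with $\env^0 = 0$) and taking expectations yields $\E{\env^T} \geq \sum_{t=1}^T \E{\Delta^t_{(N),(1)}} \geq \tdif T$.

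The main obstacle, such as it is, is bookkeeping rather than mathematics: making sure the adversary's order is \emph{adapted} (depends only on $\mathcal H^{t,1}$, the rewards through round $t-1$), so that taking the conditional expectation of $\Delta^t_{(N),(1)}$ is legitimate and Assumption~\ref{assumption: nudge alg ref}'s expected-reward ordering applies to the \emph{sessions} regardless of which agent sits in each session (this is exactly the anonymity of the algorithm, which the paper already invokes). One subtle point to state carefully: $\env^t$ is the maximum over all pairs, so the inequality $\env^t \geq R^t_{\sigma_t(1)} - R^t_{\sigma_t(N)}$ is immediate, but one must confirm $R^t_{\sigma_t(1)} - R^t_{\sigma_t(N)} \geq 0$ so that the chain $\env^t \geq \env^{t-1} + \Delta^t_{(N),(1)}$ doesn't accidentally lose sign — and indeed $\env^{t-1}\geq 0$ plus $\E{\Delta^t_{(N),(1)}}\geq 0$ keeps everything nonnegative in expectation, which is all we need for the lower bound. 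This is why the proof is ``easy to see'' as the paper says.
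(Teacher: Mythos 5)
Your proof is correct and reaches the same final inequality, $\E{\env^T}\geq\sum_{t=1}^T\E{\Delta^t_{(N),(1)}}$, but by a somewhat different route. The paper sidesteps the adaptive reversal adversary entirely: it fixes a \emph{non-adaptive} order with agent $1$ in session $1$ and agent $N$ in session $N$ in every round, so that the cumulative envy between this fixed pair is literally $\sum_{t}\Delta^t_{(N),(1)}$ and no telescoping argument is needed; it then lower-bounds the maximum over pairs by this single pair. You instead work with the adaptive adversary as actually defined in the text (poorest-first, richest-last each round), which requires your extra observation that the top-to-bottom gap satisfies $\env^t\geq\env^{t-1}+\Delta^t_{(N),(1)}$ pointwise before telescoping. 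Your version is slightly more work but arguably more faithful to the definition of $\advord$, and your remarks on adaptedness and anonymity are exactly the right bookkeeping. One shared caveat: the final step $\E{\Delta^t_{(N),(1)}}\geq\tdif$ requires $\prb{\Delta^t_{(N),(1)}\neq 0}=1$, since in general $\E{\Delta^t_{(N),(1)}}=\prb{\Delta^t_{(N),(1)}\neq 0}\cdot\E{\Delta^t_{(N),(1)}\mid\Delta^t_{(N),(1)}\neq 0}\geq\prb{\Delta^t_{(N),(1)}\neq 0}\cdot\tdif$ only. You flag this honestly; the paper's own proof glosses over it (its step $\E{\sum_t\Delta^t_{(N),(1)}}\geq T\,\E{\min_t\Delta^t_{(N),(1)}}\geq\tdif T$ is looser still, since the minimum of the realized discrepancies can be negative), so this is a looseness inherited from the statement rather than a gap you introduced.
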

\begin{proof}[Proof of Proposition~\ref{thm: adv-envy}]
Assume that the adversary picks agent 1 to be the first and agent $N$ to be the last, i.e., $\ordv_t(1)=1$ and $\ordv_t(N)=N$ for all $t\in [T]$. In such a case,
\begin{align*}
\E{\env^T({\advord})} &=
\E{\max_{i,j\in [N]}{\left\{ \sum^T_{t=1}{\adift{i}{j}} \right\} }} \geq \E{\sum^T_{t=1}{\dif_{N,1}^t}} = \E{\sum^T_{t=1}{\dif^t_{(N),(1)}}}  \geq T\E{\min_{1 \leq t \leq T}{\left\{\sdift{N}{1} \right\} }} \\
&\geq \tdif T.    
\end{align*}
This concludes the proof of Proposition~\ref{thm: adv-envy}.
\end{proof}

\section{Extension: Trading Envy and Welfare}
\label{sec:extensions}

In the previous section, we have shown that coordinating agents' arrival order alone can significantly reduce envy, without affecting the algorithm's core decision-making process. In this section, we take an initial step toward understanding the efficiency-fairness tradeoff, a well-established concept in the literature on fair allocation~\cite{varian1973equity,bertsimas2012efficiency}  and fair classification~\cite{menon2018cost,zafar2017fairness}. 
Specifically, we extend the definition of algorithms from Section~\ref{sec:model} to allow agent‐specific treatment. In other words, algorithms can now observe agent identities and maintain agents accumulated rewards in their memory. Formally, the relevant histories contain triples of the form (agent index, pulled arm, realized reward). We hope to leverage this additional capability to balance social welfare and envy.

%Misusing agent-specific information can trivially drive envy to $\Theta(T)$. For example, an algorithm that discriminates against agent~$N$ by always assigning it the arm with the lowest expected reward accrues a constant reward gap each round, leading to $\Theta(T)$ envy. Hence, we must proceed cautiously, 

We focus on the special case of our running example (Example~\ref{example 1}): $N=2$ agents, $K=2$ arms with rewards drawn from the uniform distribution, $X_1, X_2 \sim \uni{0,1}$, and the uniform arrival $\uniord$. Furthermore, we assume Bayesian information, i.e., the prior distributions are known. We stress that our results are preliminary, albeit non-trivial. In Subsection~\ref{subsec:ext welfare}, we analyze the socially optimal algorithm from a welfare and envy perspective. Later, in Subsection~\ref{subsec:ext efc}, we develop $\efc$, our welfare-envy balancing algorithm. 
\subsection{Optimal Welfare and Optimal Envy}\label{subsec:ext welfare}
We first analyze the maximal social welfare for this setting. As it turns out, Algorithm~\ref{alguni} is a special case of the optimal two-agent algorithm, as we prove in \ifapp{Section~\ref{appendix:sociallyopt}}{the appendix}. Along with our results from Section~\ref{sec:uniform}, we conclude that:
\begin{observation}\label{obs:opt for tradeoff}
When executing Algorithm~\ref{alguni} on the instance of Example~\ref{example 1} and $\uniord$, it achieves an expected social welfare of (1+$\frac{1}{8}) T$ and induces an expected envy of $\env^T(\uniord)=\tilde \Theta(\sqrt T)$. Furthermore, this is the optimal welfare.
\end{observation}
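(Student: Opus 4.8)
The plan is to establish three claims: (i) Algorithm~\ref{alguni} is welfare-optimal for the two-agent, two-uniform-arm instance with $\uniord$; (ii) its expected social welfare equals $(1+\tfrac18)T$; and (iii) its expected envy is $\tilde\Theta(\sqrt T)$. Claims (ii) and (iii) are essentially already in hand. The welfare computation was done in Subsection~\ref{subsec:information}: the expected per-round welfare is $\E{\rt{(1)}+\rt{(2)}}=1+\tfrac18$, and since social welfare is independent of the arrival order and additive across rounds, $\E{\sw}=(1+\tfrac18)T$. For the envy, Algorithm~\ref{alguni} is explore-first (it commits to a single arm after the first repetition), so Corollary~\ref{thm: sqrt TK N} gives the upper bound $\E{\max_t \env^t(\uniord)}=O(\sqrt{TK\ln N/N})=O(\sqrt{T\ln T})$; for the matching lower bound, Example~\ref{example: uni suff} shows $\var{\dift}=\tfrac1{12}$, so the execution is sufficiently random once $T\ge 144$, and Theorem~\ref{thm: uni lower-bound} yields $\E{\max_t \env^t(\uniord)}\ge c\sqrt{\sum_t \var{\dift}}=c\sqrt{T/12}=\Omega(\sqrt T)$. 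Together these give $\env^T(\uniord)=\tilde\Theta(\sqrt T)$.

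The remaining substantive step is (i), the optimality of Algorithm~\ref{alguni}. I would argue by backward induction within a single round, exploiting the fact that rewards are i.i.d.\ across rounds so the socially optimal policy decouples round-by-round (there is no inter-round state that welfare depends on). Within a round there are only two sessions. In the second session the optimal action is myopic: pull the arm with the highest conditional expected reward given what was observed in the first session. In the first session, pulling $a_1$ reveals $x_1^t$; the second-session value is then $\E{\max(x_1^t,\mu_2)}=\E{\max(x_1^t,\tfrac12)}$, and by symmetry of the two uniform arms pulling $a_2$ first is equivalent. One then checks that the threshold rule ``keep $a_1$ iff $x_1^t\ge\tfrac12$'' is exactly the myopic second-session choice, so Algorithm~\ref{alguni} coincides with the optimal policy; the one remaining check is that no policy can do better by \emph{not} exploring a fresh arm in the first session, which is immediate here since with $K=2$ the only alternative first-session action is to pull an arm with unconditional mean $\tfrac12$ and then be unable to improve, giving per-round welfare exactly $1<1+\tfrac18$. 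This reduces (i) to routine case analysis.

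I expect the main obstacle to be stating (i) cleanly rather than proving it: one must be careful that ``optimal welfare'' is quantified over the \emph{extended} class of (possibly agent-aware, possibly adaptive) algorithms introduced in Section~\ref{sec:extensions}, and argue that agent identities and cross-round history are irrelevant for welfare maximization (welfare is order-independent and rewards are independent across rounds), so that the optimum is achieved by a within-round, anonymous, adaptive policy — which is precisely the regime in which the backward-induction argument above applies. Once that reduction is made explicit, the rest is the short Pandora's-box-style computation sketched above, and the envy bounds follow by directly invoking Corollary~\ref{thm: sqrt TK N} and Theorem~\ref{thm: uni lower-bound} as detailed in the first paragraph.
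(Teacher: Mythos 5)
Your proposal is correct and follows essentially the same route as the paper: the welfare value comes from the per-round computation in Subsection~\ref{subsec:information}, the envy bounds come from Corollary~\ref{thm: sqrt TK N} and Theorem~\ref{thm: uni lower-bound} (with $\var{\dift}=\tfrac{1}{12}$ from Example~\ref{example: uni suff} certifying sufficient randomness), and optimality follows from per-round decoupling plus optimality of the myopic threshold rule in the second session. The only differences are cosmetic: the paper establishes optimality by exhibiting Algorithm~\ref{alguni} as a special case of the general two-agent optimal algorithm $\sopt$ (Theorem~\ref{thm:sopt is opt} in the appendix, proved via the same round-by-round reduction and a revelation-principle/threshold argument) rather than arguing the instance directly, and your upper bound should read $O(\sqrt{T\ln N})=O(\sqrt{T})$ for $N=K=2$ rather than $O(\sqrt{T\ln T})$, though either suffices for $\tilde\Theta(\sqrt{T})$.
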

%The observation suggests that Algorithm~\ref{alguni} is positioned on an extreme point of the Pareto frontier of welfare-envy XXXX, with maximal welfare and high envy. 
This observation indicates that Algorithm~\ref{alguni} occupies an extreme point on the Pareto frontier of the welfare-envy tradeoff: It achieves maximum welfare but also incurs high envy. Another point on the frontier is the algorithm we call $NE$ (\textbf{N}o \textbf{E}nvy), guaranteeing $\env^t=0$ in every round $t$ almost surely. $NE$ draws the same arm in both sessions of every round, as this is essential to maintain zero envy (since the rewards are uniformly distributed and stochastic by nature). Of course, $NE$ has an expected social welfare of $\sw= T$.

A compelling way to address the social welfare of any algorithm is by examining its ability to exploit the information obtained in the earlier sessions. For example, comparing the performance of $NE$ and Algorithm~\ref{alguni} highlights this difference: $NE$ does not utilize information from the first session, whereas Algorithm~\ref{alguni} leverages it to secure a better reward in the second round. This strategic use of information by Algorithm~\ref{alguni} results in a welfare increase of $\frac{1}{8}$ each round, but also generates envy. This is a key element in the algorithm we propose next.
\subsection{Envy-freeness up to $C$}\label{subsec:ext efc}
%The algorithmic2e version!
% \begin{algorithm}[t]
% \caption{Envy-freeness up to 1 ($\ef$)}
% \label{alg: ef1}
% \begin{algorithmic}[1]
% \Require {horizon $T$}
% \For{round $t = 1 \ldots T$}\label{alg: ef1 1}
%     \State{pull $a_{1}$}\label{alg: ef1 2}
%     \If{$x^t_1 > \frac{1}{2}$}\label{alg: ef1 3}
%         \State{pull $a_{1}$}\label{alg: ef1 4}
%     \Else\label{alg: ef1 5}
%         \If{$\abs{R^{t-1}_{(1)} + \rt{(1)} - R^{t-1}_{(2)}} \leq 1$ and $\abs{R^{t-1}_{(1)} + \rt{(1)} - R^{t-1}_{(2)} - 1} \leq 1$}\label{alg: ef1 6}
%             \State{pull $a_{2}$}\label{alg: ef1 7}
%         \Else\label{alg: ef1 8}
%             \State{pull $a_{1}$}\label{alg: ef1 9}
%         \EndIf
%     \EndIf
% \EndFor
% \end{algorithmic}
% \end{algorithm}
\begin{algorithm}[t]
\caption{Envy-freeness up to $C$ ($\efc$)}
\label{alg:efc}
\SetAlgoLined
\DontPrintSemicolon
\LinesNumbered
\KwIn{horizon $T$, envy bound $C$}
\For{round $t = 1$ to $T$\label{efcline:for}}{
    pull $a_{1}$\label{efclin:pull_a1}\\
    \lIf{$x^t_1 > \frac{1}{2}$}{pull $a_{1}$\label{efclin:pull_a1_again}}
    \Else{\label{efclin:else}
        \lIf{\textnormal{there exists $r\in [0,1]$ such that $\abs{R^{t-1}_{(1)} + r^t_{(1)} - R^{t-1}_{(2)}-r} > C$}}{pull $a_{1}$\label{efclin:pull_a1_cond}}
        %\lIf{$\abs{R^{t-1}_{(1)} + \rt{(1)} - R^{t-1}_{(2)}} \leq C$ and $\abs{R^{t-1}_{(1)} + \rt{(1)} - R^{t-1}_{(2)} - 1} \leq C$}{Pull $a_{2}$\label{efclin:pull_a2}}\textbf{}
        \lElse{pull $a_{2}$\label{efclin:pull_a2_cond_else}}
    }
}
\end{algorithm}

In what follows, we introduce $\efc$, which is an abbreviation of \textbf{E}nvy-\textbf{F}reeness up to $\textbf{C}$, and is
implemented in Algorithm~\ref{alg:efc}. $\efc$ operates by selectively limiting the exploitation of information when the gap between agents' rewards could potentially exceed a predefined envy threshold $C$. This mechanism enforces envy-freeness up to $C$, allowing for better welfare compared to $NE$ while maintaining low envy.

We now describe how $\efc$ works. It interacts with agents for $T$ rounds (Line~\ref{efcline:for}). In every round~$t$, $\efc$ pulls arm $a_1$ for the agent arriving in the first session (Line~\ref{efclin:pull_a1}). The decision to pull $a_1$ first is arbitrary since both arms are identically distributed. If $a_1$ realizes a high reward, i.e., $r^t_{(1)}=x^t_1 > \frac{1}{2}$, $\efc$ pulls it again for the agent in the second session (Line~\ref{efclin:pull_a1_again}). Otherwise, we enter the ``else'' clause in Line~\ref{efclin:else}.

If $a_1$ yields a low reward, the welfare-wise correct action is to pull $a_2$; however, recall that $\efc$ aims to keep the envy lower than $C$. As a result, it ensures that the envy $\abs{R^t_{(1)}-R^t_{(2)}}$ at the end of round $t$ is lower or equal to $C$. Specifically, the ``if'' clause in Line~\ref{efclin:pull_a1_cond} asks whether there exists a realization $r^t_{(2)}=r$ for which the envy would exceed $C$ by the end of the round. If such a realization exists, it pulls $a_1$. Otherwise, it pulls $a_2$ in Line~\ref{efclin:pull_a2_cond_else}. We term $\ef$ the special case of $\efc$ for $C=1$. 
\begin{theorem}\label{thm:ef1evny+sw}
When executing $\efc$ on the instance of Example~\ref{example 1} and $\uniord$, the following hold:
\begin{enumerate}
    \item For all $t$, $\env^t \leq C$ almost surely.
    \item For $C=1$, the social welfare is $\sw \geq \left(1 + \frac{1}{16}\right)T$.
\end{enumerate}
\end{theorem}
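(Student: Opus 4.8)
The plan is to analyze $\efc$ (with $C=1$) as a variant of Algorithm~\ref{alguni} in which, on the ``bad'' event $\{x^t_1 \le \tfrac12\}$, the algorithm is sometimes forced to repeat $a_1$ instead of pulling the fresh arm $a_2$, purely to respect the envy cap. Part~1 is the easy half: I would argue inductively on $t$ that $|R^t_{(1)} - R^t_{(2)}| \le C$ always holds. When $x^t_1 > \tfrac12$ both agents in round $t$ get the same reward $x_1^t$, so the round-$t$ discrepancy is $0$ and the bound is preserved. When $x^t_1 \le \tfrac12$, the first-session agent gets $r^t_{(1)} = x_1^t$; the condition in Line~\ref{efclin:pull_a1_cond} checks precisely whether \emph{some} realization $r$ of $a_2$ could push $|R^{t-1}_{(1)} + r^t_{(1)} - R^{t-1}_{(2)} - r|$ above $C$, and if so the algorithm instead pulls $a_1$ again, giving both round-$t$ agents the same reward and hence the same cumulative-reward change as each other. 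Since before round $t$ the gap was $\le C$ by induction, and in the ``pull $a_1$ again'' branch the gap only shrinks or stays (it changes by $r^t_{(1)} - r^t_{(1)} = 0$ in the round, but the \emph{identities} in sessions $1$ and $2$ may differ, so I need to be slightly careful: the agent in session $1$ gains $r^t_{(1)}$ and the agent in session $2$ gains $r^t_{(2)} = r^t_{(1)}$, so actually the signed discrepancy $R^t_{(1)} - R^t_{(2)}$ is unchanged from $R^{t-1}_{(1)} - R^{t-1}_{(2)}$ — wait, $(1)$ and $(2)$ index sessions of the \emph{same} round, so I should track the two fixed agents $1,2$ and note the arrival order only relabels which is session-$1$; in the ``pull $a_1$ twice'' branch both agents receive $x_1^t$ so $\env^t_{1,2} = \env^{t-1}_{1,2}$, preserving the bound). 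In the ``pull $a_2$'' branch, the realization $r^t_{(2)}$ actually occurred, and by the negation of the Line~\ref{efclin:pull_a1_cond} condition we have $|R^{t-1}_{(1)} + r^t_{(1)} - R^{t-1}_{(2)} - r^t_{(2)}| \le C$, i.e.\ the new gap is $\le C$. That closes the induction, using the convention that $(1),(2)$ are assigned via $\ordv_t$ and the cap is symmetric in the two agents.

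For Part~2, the goal is $\E{\sw} \ge (1 + \tfrac{1}{16})T$. Social welfare is independent of arrival order, so I would compute $\E{r^t_{(1)} + r^t_{(2)}}$ round by round. Always $\E{r^t_{(1)}} = \E{X_1} = \tfrac12$. Conditioned on $x_1^t > \tfrac12$ (probability $\tfrac12$), the second session also yields $x_1^t$, contributing $\E{X_1 \mid X_1 > \tfrac12} = \tfrac34$ — exactly as in Subsection~\ref{subsec:information}. Conditioned on $x_1^t \le \tfrac12$ (probability $\tfrac12$), the second-session reward is $x_1^t$ if Line~\ref{efclin:pull_a1_cond} triggers and is $X_2$ (fresh, independent, $\uni{0,1}$) otherwise. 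The per-round welfare gain over $NE$'s value of $1$ is therefore $\tfrac12\cdot(\tfrac34 - \tfrac12) + \tfrac12\cdot \E{(X_2 - X_1)\,\mathbb{I}\{\text{Line~\ref{efclin:pull_a1_cond} does not trigger}\} \mid X_1 \le \tfrac12}$, and the first term already contributes $\tfrac1{16}$ per round (this is the ``half of Algorithm~\ref{alguni}'s $\tfrac18$ gain''). So it suffices to show the second term is nonnegative. Since $X_2 - X_1 \ge 0$ fails half the time, I cannot just say the indicator is irrelevant; instead I would show that whenever the algorithm pulls $a_2$, the expected value of $X_2$ given the triggering event still exceeds $\tfrac12$ — or more cleanly, that the event $\{\text{Line~\ref{efclin:pull_a1_cond} does not trigger}\}$ is determined by $R^{t-1}_{(1)}, R^{t-1}_{(2)}, x_1^t$ and is \emph{independent} of $X_2$, so $\E{(X_2 - X_1)\,\mathbb{I}\{\cdot\} \mid X_1 \le \tfrac12, \mathcal{H}} = \E{X_2 - X_1 \mid X_1 \le \tfrac12}\cdot \prb{\cdot} = (\tfrac12 - \E{X_1 \mid X_1 \le \tfrac12})\cdot\prb{\cdot} = (\tfrac12 - \tfrac14)\cdot \prb{\cdot} \ge 0$. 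That independence holds because $a_2$'s reward $X_2$ is drawn afresh and the Line~\ref{efclin:pull_a1_cond} test only references $R^{t-1}$, $r^t_{(1)}=x_1^t$, and the existential over a hypothetical $r$ — not the realized $X_2$. Summing over $t$ gives $\E{\sw} \ge (1 + \tfrac1{16})T$; I would note the $NE$ baseline term contributes exactly $T$.

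The main obstacle I anticipate is bookkeeping around \emph{identities versus sessions} in Part~1: the cumulative rewards $R^{t-1}_{(1)}, R^{t-1}_{(2)}$ appearing in Line~\ref{efclin:pull_a1_cond} are indexed by the current round's sessions, but the envy $\env^t = \max_{i,j}\env^t_{i,j}$ is over agent identities, and under $\uniord$ the arrival order is random, so I must make sure the invariant ``$|R^t_{(1)} - R^t_{(2)}| \le C$'' (sessions of round $t$) correctly transfers to ``$|\env^t_{1,2}| \le C$'' (the two agents). This is fine because with $N=2$ the two agents in round $t$'s sessions \emph{are} agents $1$ and $2$ in some order, and cumulative rewards are attached to agents, not sessions — the session label is just a temporary alias. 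A secondary subtlety is confirming the algorithm is well-defined, i.e.\ that the existential check in Line~\ref{efclin:pull_a1_cond} is decidable: the worst-case realization is $r \in \{0,1\}$ (the extremes of $[0,1]$), so the test reduces to checking two inequalities. I would state this as a one-line remark. Everything else is routine conditional-expectation arithmetic that I would not belabor.
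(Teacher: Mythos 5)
Your Part~1 argument is correct and is essentially the paper's: the guard in Line~\ref{efclin:pull_a1_cond} lets the algorithm pull $a_2$ only when no realization of $r^t_{(2)}$ can push the cumulative gap past $C$, and in every other branch both agents receive the same reward in round $t$, so $\env^t_{1,2}=\env^{t-1}_{1,2}$ and the invariant is preserved; your session-versus-identity bookkeeping for $N=2$ is also fine.

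Part~2, however, has a genuine gap. Your decomposition of the per-round gain over $NE$ is wrong: when $x^t_1>\frac12$ both $\efc$ and $NE$ give the second agent $x^t_1$, so that branch contributes no gain at all, and in the branch $x^t_1\le\frac12$ the $NE$ baseline for the second session is $\E{X_1 \midd X_1\le\frac12}=\frac14$, not $\frac12$. Carrying the algebra through, the per-round gain over $NE$ is exactly
\[
\frac12\,\E{\left(X_2-X_1\right)\ind{\left\{a^t_{(2)}\neq a^t_{(1)}\right\}} \midd X_1\le\tfrac12}
=\frac12\,\E{\left(\tfrac12-X_1\right)\ind{\left\{a^t_{(2)}\neq a^t_{(1)}\right\}} \midd X_1\le\tfrac12},
\]
with no additive $\frac12\left(\frac34-\frac12\right)$ term (which in any case equals $\frac18$, not $\frac1{16}$). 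Consequently, proving your ``second term'' is merely nonnegative yields only $\sw\ge T$, the trivial $NE$ bound. To obtain the $\frac1{16}$ one must show the indicator fires often enough, namely $\prb{a^t_{(2)}\neq a^t_{(1)} \midd \rt{(1)}\le\frac12}\ge\frac12$, and this is where essentially all the work in the paper's proof lies: it is Proposition~\ref{prop:ef1 open arm}, whose proof in turn requires first establishing that the running envy $\env^{t-1}$ is stochastically dominated by $\uni{0,1}$ (Proposition~\ref{prop:ef1 uni dominance}) via an induction with an extensive case analysis. Your proposal contains no substitute for this step. A secondary issue: the factorization $\E{(X_2-X_1)\ind{\{\cdot\}}}=\E{X_2-X_1}\cdot\prb{\cdot}$ is invalid because the triggering event depends on $X_1$ (only $X_2$ is independent of it), although the nonnegativity conclusion happens to survive since $\tfrac12-X_1\ge 0$ on the conditioning event.
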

Interestingly, Theorem~\ref{thm:ef1evny+sw} implies that $\ef$ recovers half of the social welfare increase due to exploiting information, $(1+\nicefrac{1}{16})T$ versus $(1+\nicefrac{1}{8})T$ for Algorithm~\ref{alguni} and $T$ for $NE$, while limiting the maximal envy to 1 almost surely. We provide a proof sketch below and defer the full proof to \ifapp{Section~\ref{appendix:sociallyopt}}{the appendix}.
\begin{proof}[Proof sketch of Theorem~\ref{thm:ef1evny+sw}]
The first part of the theorem follows directly, as Line~\ref{efclin:pull_a1_cond} allows the envy at $t$ to change only if
\[
\prb{\env^t> C\middle| R^{t-1}_{(1)}, R^{t-1}_{(2)},r^t_{(1)}}=
\prb{\abs{R^{t-1}_{(1)} + r^t_{(1)} - R^{t-1}_{(2)}-r^t_{(2)}} > C\middle| R^{t-1}_{(1)}, R^{t-1}_{(2)},r^t_{(1)}}=0.
\]
The second part of the theorem requires a more detailed argument. We need to understand how often $\efc$ can exploit the information of the first session and enter the ``if'' clause in Line~\ref{efclin:pull_a1_cond}. Since the probability of entering the clause depends on the current level of envy, we must first understand how envy behaves over time. The main technical ingredient we use is the following proposition.
\begin{proposition}\label{prop:ef1 uni dominance}%{thm: ef1 stochastic dominance}
In every round $t$, the distribution of $\env^t$ is stochastically dominated by the uniform distribution over $[0,1]$. I.e., for any $x \in [0,1]$, it holds that $\prb{\env^t \leq x} \geq x$.
\end{proposition}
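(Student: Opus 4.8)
The plan is to describe the envy under $\ef$ (i.e.\ $\efc$ with $C=1$) as a one‑dimensional random walk with an occasional ``reset'' and then prove the bound by induction on $t$.

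First I would reduce to a signed walk. Since $N=2$, $\env^t=|R^t_1-R^t_2|$; for a fixed round $t$ let $E^{t-1}:=R^{t-1}_{(1)}-R^{t-1}_{(2)}$ be the signed gap between the session‑$1$ and session‑$2$ agents of round $t$. Because $\uniord$ chooses the round‑$t$ order uniformly and independently of the past, conditionally on the history through round $t-1$ the gap $E^{t-1}$ is $+\env^{t-1}$ or $-\env^{t-1}$, each with probability $\tfrac12$; and by Theorem~\ref{thm:ef1evny+sw}(1), $\env^{t-1}\le1$ a.s., so $E^{t-1}\in[-1,1]$. Next I would read off the one‑step transition from Algorithm~\ref{alg:efc}: if $x^t_1>\tfrac12$, both sessions pull $a_1$ and $\env^t=\env^{t-1}$; if $x^t_1\le\tfrac12$, Line~\ref{efclin:pull_a1_cond} makes the algorithm pull $a_2$ in the second session exactly when $s:=E^{t-1}+x^t_1\in[0,1]$, and otherwise it pulls $a_1$ again with $\env^t=\env^{t-1}$. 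Call the round \emph{active} when $x^t_1\le\tfrac12$ and $s\in[0,1]$, and \emph{frozen} otherwise. A short computation (separating the two signs of $E^{t-1}$) shows that on an active round $\env^t=|s-x^t_2|$, with $s\in[0,1]$ and $x^t_2\sim\uni{0,1}$ independent of $(E^{t-1},x^t_1)$, and that, given $\env^{t-1}=v$, the round is active with probability exactly $\tfrac{1-v}{2}$.

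The one clean ingredient is a ``reset lemma'': for fixed $s\in[0,1]$ and $U\sim\uni{0,1}$, $\prb{|s-U|\le y}=\min(s+y,1)-\max(s-y,0)\ge y$ for all $y\in[0,1]$ (the interval $[s-y,s+y]$ has length $2y$, of which a portion of length at least $y$ lies in $[0,1]$). Hence, conditioned on an active round, $\env^t$ is stochastically dominated by $\uni{0,1}$. Then I would set up the induction on $t$ (base $\env^0=0$), writing $g(v,y):=\prb{\env^t\le y\mid\env^{t-1}=v}$ and observing, from the above, that
\[
g(v,y)=\tfrac{1+v}{2}\,\mathbb{I}_{\{v\le y\}}+\tfrac{1-v}{2}\,H(v,y),
\]
where $H(v,y)\ge y$ is an explicit average of terms $\min(s+y,1)-\max(s-y,0)$ with $s$ uniform on a subinterval of $[0,1]$ depending on $v$; the goal is then to show $\mathbb{E}[g(\env^{t-1},y)]\ge y$.

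The hard part will be that the active (envy‑reducing) probability $\tfrac{1-v}{2}$ is \emph{decreasing} in the current envy $v$: conditioning on ``frozen'' biases $\env^{t-1}$ toward large values, and in fact the pointwise bound $g(v,y)\ge y$ fails for $v$ slightly above $y$, so the plain induction hypothesis $\prb{\env^{t-1}\le y}\ge y$ does not suffice. To get around this I would keep the precise shape of $g$: the surplus $\tfrac{1+v}{2}-y>0$ available for $v\le y$ must be made to pay for the deficit at $v>y$. A convenient way to bookkeep this is to strengthen the induction hypothesis to the invariant $\mathbb{E}\big[(1+\env^{t-1})\,\mathbb{I}_{\{\env^{t-1}\le y\}}\big]\ge y\,(1+\mathbb{E}[\env^{t-1}])$ for all $y\in[0,1]$: it holds trivially at $t=0$, it immediately gives $\prb{\env^t\le y}\ge y$ (split $\mathbb{E}[g(\env^{t-1},y)]$ into its two terms and use $H\ge y$ on the active part), and verifying that it propagates is then an elementary, if case‑laden, computation with the explicit $H(v,y)$ — which is the real technical core of the proof.
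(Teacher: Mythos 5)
Your reduction to the signed walk, the computation of the active probability $\tfrac{1-v}{2}$, and the reset lemma (that on an active round $\env^t=|s-x^t_2|$ with $s\in[0,1]$ fixed and $x^t_2\sim\uni{0,1}$ is stochastically dominated by $\uni{0,1}$) are all correct, and they cleanly repackage what the paper obtains through its explicit six-case analysis $A_1,\dots,A_6$, where $\env^t$ is shown to be conditionally uniform on a subinterval of length at most $1$. You have also put your finger on a real subtlety: since $\prb{\text{frozen}\mid \env^{t-1}=v}=\tfrac{1+v}{2}$ is increasing in $v$, conditioning on the frozen branch biases $\env^{t-1}$ upward, so the plain inductive hypothesis $\prb{\env^{t-1}\leq y}\geq y$ does not transfer to that branch; the paper's proof handles the same-arm branch by invoking the unconditional inductive hypothesis and does not engage with this point.

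The gap is that your proposed fix is not carried out, and it is not a routine verification. The invariant $\E{(1+\env^{t-1})\ind{\env^{t-1}\leq y}}\geq y\left(1+\E{\env^{t-1}}\right)$ is strictly stronger than the proposition itself: for $V\sim\uni{0,1}$ the left-hand side equals $y+y^2/2$ while the right-hand side equals $3y/2$, so the invariant fails for the very distribution that makes the dominance tight. Hence it cannot be recovered from stochastic dominance and must be propagated on its own, and it is not clear the induction closes with this invariant: on the frozen branch, $\E{(1+\env^t)\ind{\env^t\leq y}}$ picks up the term $\tfrac12\E{(1+\env^{t-1})^2\ind{\env^{t-1}\leq y}}$, a tilted truncated moment that the level-$(t-1)$ invariant does not control. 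One is pushed toward a hierarchy of conditions of the form $\E{(1+V)^k\ind{V\leq y}}\geq y\,\E{(1+V)^k}$, and establishing any self-propagating version requires genuinely new work (for instance, tracking that the density of $\env^t$ vanishes near $1$ under the $\ef$ dynamics, which is exactly what would let the invariant survive where the uniform law violates it). As written, the ``elementary, if case-laden, computation'' you defer is the entire technical content of the proof, so the proposal does not yet establish the proposition.
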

Despite its intuitive nature, proving this claim demands careful and thorough case analysis. Equipped with Proposition~\ref{prop:ef1 uni dominance}, we turn to analyze how often $\ef$ pulls $a_2$ after observing a low reward $r^t_{(1)} \leq \frac{1}{2}$ in the first session.
\begin{proposition}\label{prop:ef1 open arm}
%\label{thm: ef1 open arm}
        In every round $t$ with $\rt{(1)}\ \leq \frac{1}{2}$, it holds that $\prb{a^t_{(2)} \neq a^t_{(1)} \mid \rt{(1)} \leq \frac{1}{2}} \geq \frac{1}{2}$. 
\end{proposition}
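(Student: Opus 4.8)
The plan is to analyze the ``if'' clause in Line~\ref{efclin:pull_a1_cond} of $\efc$ and show that, conditioned on $\rt{(1)} \le \tfrac12$, the event that the clause fails (so that $a_2$ is pulled in the second session) has probability at least $\tfrac12$. First I would note that the clause in Line~\ref{efclin:pull_a1_cond} fires---forcing $a^t_{(2)} = a^t_{(1)} = a_1$---exactly when there exists $r\in[0,1]$ with $\lvert R^{t-1}_{(1)} + \rt{(1)} - R^{t-1}_{(2)} - r\rvert > 1$ (using $C=1$). Writing $Z^t = R^{t-1}_{(1)} + \rt{(1)} - R^{t-1}_{(2)}$ for the ``running gap after the first session,'' the clause fires iff $Z^t - r > 1$ for some $r\in[0,1]$ or $r - Z^t > 1$ for some $r\in[0,1]$, i.e.\ iff $Z^t > 1$ or $Z^t < -1 + 1 = 0$; more precisely, since $r$ ranges over $[0,1]$, such an $r$ exists iff $Z^t > 1$ (take $r=0$) or $Z^t < 0$ (take $r=1$). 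Hence $a^t_{(2)} = a^t_{(1)}$ forced by this clause iff $Z^t \notin [0,1]$, and $\efc$ pulls $a_2$ iff $Z^t \in [0,1]$.

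Next I would control $Z^t$. Conditioned on $\rt{(1)} \le \tfrac12$, by symmetry of the uniform distribution $\rt{(1)} = x^t_1$ is uniform on $[0,\tfrac12]$. The agent arriving first in round $t$ is, under $\uniord$, equally likely to be either agent, so by Proposition~\ref{prop:ef1 uni dominance} the quantity $\env^{t-1} = \lvert R^{t-1}_{(1)} - R^{t-1}_{(2)}\rvert$ is stochastically dominated by $\uni{0,1}$; in particular $R^{t-1}_{(1)} - R^{t-1}_{(2)} \in [-1,1]$ almost surely, and for a uniformly random choice of which agent arrives first, $R^{t-1}_{(1)} - R^{t-1}_{(2)}$ is symmetric about $0$. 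The key observation is then that $Z^t = \big(R^{t-1}_{(1)} - R^{t-1}_{(2)}\big) + \rt{(1)}$, where $\rt{(1)} \in [0,\tfrac12]$ is independent (given the history up to round $t$, the new reward realization and the arrival coin are independent of $R^{t-1}_{(1)}-R^{t-1}_{(2)}$). I would compute $\prb{Z^t \in [0,1]}$ by conditioning on $\rt{(1)} = u \in [0,\tfrac12]$: we need $R^{t-1}_{(1)} - R^{t-1}_{(2)} \in [-u, 1-u]$. Since $R^{t-1}_{(1)} - R^{t-1}_{(2)}$ is symmetric on $[-1,1]$ and $[-u,1-u]$ contains the symmetric interval $[-u,u]$ (as $u \le \tfrac12 \le 1-u$), we get $\prb{R^{t-1}_{(1)} - R^{t-1}_{(2)} \in [-u,1-u]} \ge \prb{R^{t-1}_{(1)} - R^{t-1}_{(2)} \in [-u,u]} = \prb{\env^{t-1} \le u}$. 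Wait---this lower bound by itself may be weak; instead I would split: $\prb{\text{gap}\in[-u,1-u]} = \tfrac12\prb{\env^{t-1}\le u} + \tfrac12\prb{\env^{t-1} \le 1-u} \ge \tfrac12 \cdot 0 + \tfrac12 \cdot \prb{\env^{t-1}\le 1-u}$, and by Proposition~\ref{prop:ef1 uni dominance} (or merely $\env^{t-1}\le 1$ a.s.), $\prb{\env^{t-1}\le 1-u}$ can be bounded; then integrate over $u\in[0,\tfrac12]$ against the uniform density $2\,du$ to obtain a bound of at least $\tfrac12$.

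The main obstacle I anticipate is pinning down the distribution of $R^{t-1}_{(1)} - R^{t-1}_{(2)}$ precisely enough: Proposition~\ref{prop:ef1 uni dominance} gives stochastic dominance of $\env^{t-1} = \lvert R^{t-1}_{(1)} - R^{t-1}_{(2)}\rvert$ by $\uni{0,1}$, but to integrate I need both the symmetry of the signed gap (which follows from $\uniord$ re-randomizing which agent is ``$(1)$'' each round, combined with the anonymity-like structure so that the conditioning on $\rt{(1)}\le\tfrac12$ does not break this symmetry) and the independence of $\rt{(1)}$ from the past gap. I would handle the symmetry by a coupling argument: swapping the roles of the two agents is a measure-preserving involution on the probability space that leaves $\env^{t-1}$, the arm rewards, and the event $\{\rt{(1)}\le\tfrac12\}$ invariant while negating $R^{t-1}_{(1)}-R^{t-1}_{(2)}$. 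Given that, the integral $\int_0^{1/2} \prb{R^{t-1}_{(1)}-R^{t-1}_{(2)} \in [-u,1-u]}\,2\,du \ge \int_0^{1/2}\big(\tfrac12\prb{\env^{t-1}\le u} + \tfrac12\big)\,2\,du$---using $\prb{\env^{t-1}\le 1-u}=1$ since $1-u\ge\tfrac12$ and... no, $\env^{t-1}$ need not be $\le \tfrac12$; rather use $\prb{\env^{t-1}\le 1-u}\ge 1-u$ from Proposition~\ref{prop:ef1 uni dominance} only if $1-u\le 1$, giving $\ge 1-u \ge \tfrac12$. So the integrand is $\ge \tfrac12(1-u) + \tfrac12\prb{\env^{t-1}\le u} \ge \tfrac12\cdot\tfrac12 = \tfrac14$ pointwise, which integrates to $\tfrac14$, not $\tfrac12$---so the crude split is too lossy and I would instead keep $\prb{\text{gap}\in[-u,1-u]}$ intact, noting the interval has length $1$ and the signed gap lives in $[-1,1]$ with a density that, by the dominance, is ``spread out''; a more careful argument using that the gap's distribution has no atoms outside structured points and that translating a length-$1$ window over a support of total length $2$ captures at least half the mass on average should close the $\tfrac12$ bound. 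This averaging/window-sliding estimate is the crux and the part I expect to require the most care.
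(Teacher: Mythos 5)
Your reduction of the event to $Z^t=R^{t-1}_{(1)}+\rt{(1)}-R^{t-1}_{(2)}\in[0,1]$ is correct, and your decomposition (signed gap symmetric about $0$ by the uniform arrival coin, $\rt{(1)}$ an independent $\uni{0,\frac12}$ draw, stochastic dominance of $\env^{t-1}$ from Proposition~\ref{prop:ef1 uni dominance}) contains exactly the same ingredients as the paper's proof, which reaches the same place via an explicit partition into events $B_1,B_2,B_3$ and a change of variables identifying $1-\rt{(1)}$ given $\rt{(1)}\leq\frac12$ with $\rt{(1)}$ given $\rt{(1)}>\frac12$. However, your proof does not close: you correctly write $\prb{\textnormal{gap}\in[-u,1-u]}\geq\frac12\prb{\env^{t-1}\leq u}+\frac12\prb{\env^{t-1}\leq 1-u}$, but then discard the first term, obtain a pointwise bound of $\frac14$, conclude the split is ``too lossy,'' and replace it with an unproven ``window-sliding estimate'' that you yourself flag as the crux. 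That final step is a genuine gap as written.

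The fix is that your split is not lossy at all --- you only need to integrate it correctly, keeping both terms. Against the density $2\,du$ of $\rt{(1)}$ on $[0,\frac12]$,
\[
\int_0^{1/2}\left(\tfrac12\prb{\env^{t-1}\leq u}+\tfrac12\prb{\env^{t-1}\leq 1-u}\right)2\,du
=\int_0^{1/2}\prb{\env^{t-1}\leq u}\,du+\int_{1/2}^{1}\prb{\env^{t-1}\leq v}\,dv
=\int_0^1\prb{\env^{t-1}\leq v}\,dv\geq\int_0^1 v\,dv=\tfrac12,
\]
where the substitution $v=1-u$ glues the two half-ranges into a full $\uni{0,1}$ integral and the last inequality is Proposition~\ref{prop:ef1 uni dominance}. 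This is precisely the mechanism the paper exploits when it combines $B_2$ and $B_3$ into $\prb{\env^{t-1}\leq\rt{(1)}}$ with $\rt{(1)}$ unconditioned, so your route is equivalent rather than genuinely different; the atom of the gap at $0$ (the paper's $B_1$, of probability $2^{-(t-1)}$) only adds to the bound. You should also state explicitly why conditioning on $\rt{(1)}\leq\frac12$ preserves both the symmetry of the signed gap and its independence from $\rt{(1)}$ --- namely, $x_1^t$ and the round-$t$ arrival coin are fresh and independent of the history --- which you gesture at but do not nail down.
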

We complete the proof by computing $\sw(\ef)$ via the law of total expectation, using Proposition~\ref{prop:ef1 open arm} to show that the welfare in every round is $1+\frac{1}{16}$, matching the statement of the theorem.
\end{proof}

\subsection{Beyond $C=1$}
The envy analysis in Theorem~\ref{thm:ef1evny+sw} concerns $\ef$, which is a special case of $\efc$ with $C=1$. Unfortunately, our techniques rely heavily on this fact, and extending it would require a different approach. Our preliminary investigation has led us to the following conjecture.
\begin{conjecture}\label{thm: efc sw}
When executing $\efc$ on the instance of Example~\ref{example 1} with any $C \geq 1$ and $\uniord$, the expected social welfare of at least $\sw \geq (1+\frac{1}{8}-\frac{1}{16C})T$.
\end{conjecture}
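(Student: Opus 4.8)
The plan is to reduce the welfare bound to a tail estimate for the envy $\env^t$ produced by $\efc$, and then to obtain that estimate as a generalization of Proposition~\ref{prop:ef1 uni dominance}. First, I would compare $\efc$ against the no-envy algorithm $NE$ round by round, as in the proof of Theorem~\ref{thm:ef1evny+sw}. Since $\sw(NE)=T$, and since in any round the first session yields $\rt{(1)}=X_1^t$ while the second yields $X_1^t$ unless $\efc$ pulls $a_2$ in Line~\ref{efclin:pull_a2_cond_else}, we get
\[
\sw(\efc)=T+\sum_{t=1}^{T}\E{\left(X_2^t-X_1^t\right)\ind{X_1^t\le \tfrac12}\ind{\efc\ \textrm{exploits at round}\ t}}.
\]
Next I would make the exploit event explicit. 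Writing $E^{t-1}:=R^{t-1}_{(1)}-R^{t-1}_{(2)}$, the maximum over $r\in[0,1]$ in Line~\ref{efclin:pull_a1_cond} is attained at an endpoint, so $\efc$ exploits at round $t$ iff $X_1^t\le\tfrac12$ and $E^{t-1}\in[\,1-C-X_1^t,\ C-X_1^t\,]$. Under $\uniord$ the agent placed in the first session is an independent fair coin flip, so $E^{t-1}$ is symmetric about $0$, satisfies $|E^{t-1}|=\env^{t-1}$, and is independent of $X_1^t$ and of $X_2^t$ (with $\E{X_2^t}=\tfrac12$). Conditioning on $X_1^t=u$ and integrating, the per-round gain over $NE$ is $\int_0^{1/2}\left(\tfrac12-u\right)\prb{E^{t-1}\in[1-C-u,\ C-u]}\,du\le\int_0^{1/2}\left(\tfrac12-u\right)du=\tfrac18$; subtracting the complementary mass and using the symmetry of $E^{t-1}$ gives
\[
\sw(\efc)=\left(1+\tfrac18\right)T-\tfrac12\sum_{t=1}^{T}\int_0^{1/2}\left(\tfrac12-u\right)\left[\prb{\env^{t-1}>C-1+u}+\prb{\env^{t-1}>C-u}\right]\,du .
\]
Hence the conjecture follows once we show that, for every $t$, the inner integral $I_t:=\int_0^{1/2}\left(\tfrac12-u\right)\left[\prb{\env^{t-1}>C-1+u}+\prb{\env^{t-1}>C-u}\right]du$ is at most $\tfrac1{8C}$.

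The key remaining ingredient is a generalization of Proposition~\ref{prop:ef1 uni dominance}: I would prove that, under $\efc$ and $\uniord$, $\env^t$ is stochastically dominated by the uniform distribution on $[0,C]$, i.e.\ $\prb{\env^t\le x}\ge\tfrac{x}{C}$ for every $x\in[0,C]$ (for $C=1$ this is exactly Proposition~\ref{prop:ef1 uni dominance}; note only the tail of $\env^t$ on $[C-1,C]$ actually enters $I_t$). Plugging this in gives $\prb{\env^{t-1}>C-1+u}\le\tfrac{1-u}{C}$ and $\prb{\env^{t-1}>C-u}\le\tfrac{u}{C}$, so the bracketed term is at most $\tfrac1C$ and $I_t\le\int_0^{1/2}\left(\tfrac12-u\right)\tfrac1C\,du=\tfrac1{8C}$, closing the argument; at $C=1$ this reproduces the $(1+\tfrac1{16})T$ bound of Theorem~\ref{thm:ef1evny+sw}, a useful consistency check. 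I see two routes to the dominance. The first is an induction on $t$ via a one-step coupling, in the spirit of the case analysis behind Proposition~\ref{prop:ef1 uni dominance}: a round with $X_1^t>\tfrac12$ or in which $\efc$ refuses to exploit leaves $\env^t=\env^{t-1}$, whereas an exploiting round maps $\env^{t-1}$ to $|E^{t-1}+X_1^t-X_2^t|\in[0,C]$, and one must verify this transition preserves $U[0,C]$-dominance. The second is an excursion/negative-drift argument of the kind used for Theorem~\ref{thm: sugg-envy}: once $\env^t$ exceeds $C-1$, the set of first-session realizations $u\in[0,\tfrac12]$ for which $\efc$ still exploits shrinks, so the conditional drift of the envy points back toward $C-1$, forcing excursions into $(C-1,C]$ to be short and $\prb{\env^t>C-1+s}$ to decay fast enough that $I_t\le\tfrac1{8C}$.

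The main obstacle is precisely this dominance claim, and it is exactly what the remark preceding the conjecture (``our techniques rely heavily on'' $C=1$) anticipates. For $C=1$ the exploit region for $E^{t-1}$ spans the whole support once shifted by a low reward, so an exploiting round cannot push the envy above its current envelope; for $C>1$ this fails --- from a current value $v$ near $C$, the new envy $|E^{t-1}+X_1^t-X_2^t|$ can land anywhere in roughly $[\,v-1,\ \min(v+1,C)\,]$ and can increase. Controlling how much probability mass flows upward versus downward, and checking it never breaches the $1-\tfrac{v}{C}$ envelope, is the delicate step: the monotone coupling that works at $C=1$ does not transfer, so either a sharper quantitative drift estimate or a more robust inductive invariant is needed. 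Since $I_t$ only involves the envy tail on $[C-1,C]$ weighted by $\left(\tfrac12-u\right)$, a bound weaker than full $U[0,C]$-dominance --- any control of that weighted tail by $\tfrac1{8C}$ --- would already suffice, which may make the argument more tractable than proving the clean dominance statement outright.
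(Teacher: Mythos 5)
This statement is a \emph{conjecture} in the paper: the authors explicitly say their techniques for Theorem~\ref{thm:ef1evny+sw} rely on $C=1$, that they could not extend them, and that the claim is supported only by simulations. So there is no proof in the paper to compare against, and the question is simply whether your proposal closes the gap. It does not, and you say so yourself.

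The reduction portion of your argument is correct and, in my view, the right way to set the problem up. The welfare decomposition $\sw(\efc)=T+\sum_t\E{(X_2^t-X_1^t)\ind{X_1^t\le 1/2}\ind{\text{exploit}}}$ is right; the characterization of the exploit event as $E^{t-1}\in[1-C-X_1^t,\,C-X_1^t]$ follows correctly from evaluating the existential in Line~\ref{efclin:pull_a1_cond} at the endpoints $r\in\{0,1\}$; the symmetry $E^{t-1}=\pm\env^{t-1}$ under $\uniord$ is valid because the session assignment at round $t$ is a fresh fair coin independent of the history; and the resulting identity $\sw(\efc)=(1+\tfrac18)T-\tfrac12\sum_t I_t$ together with the target $I_t\le\tfrac1{8C}$ checks out (and correctly reproduces Theorem~\ref{thm:ef1evny+sw} at $C=1$). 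If the $U[0,C]$-dominance lemma held, the conjecture would follow exactly as you compute.

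The genuine gap is the dominance lemma itself, $\prb{\env^t\le x}\ge x/C$ for all $x\in[0,C]$ (or the weaker weighted-tail bound $I_t\le\tfrac1{8C}$), which you state as the ``key remaining ingredient'' but do not prove. This is not a routine extension of Proposition~\ref{prop:ef1 uni dominance}: as you note, for $C>1$ an exploiting round can strictly increase the envy from a value $v<C$ to anything in roughly $[v-1,\min(v+1,C)]$, so the one-step argument that works at $C=1$ (where the post-round envy is conditionally uniform on an interval contained in $[0,1]$) breaks down, and the delicate regime is $t$ with $\mathrm{sd}(\env^t)\approx C$, where the uncapped walk's distribution near the relevant tail $[C-1,C]$ is not obviously below the uniform envelope. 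Neither of your two sketched routes (one-step coupling preserving the dominance invariant; excursion/negative-drift control of the tail above $C-1$) is carried out, and each contains exactly the step the authors identify as requiring ``a different approach.'' So the proposal is a sound and useful reduction of the conjecture to a concrete open lemma, but it is not a proof.
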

Simulations we conducted and appear in \ifapp{Section~\ref{sec:simulations}}{the appendix} suggest that Conjecture~\ref{thm: efc sw} holds, and we hope future work could formally prove it.
%While we could not prove this conjecture, we validate it empirically in Chapter~\ref{chap: simulations}.

\section{Discussion and Future Work}
\label{sec:discussion}
In this work, we have advanced the understanding of envy dynamics in explore-and-exploit systems. Our stylized model, which assumes reward consistency and sequential agent interactions, assists in characterizing envy under uniform and nudged arrival mechanisms, revealing envy dynamics and accumulation. Under uniform arrival, our results highlight that besides pathologic cases, any algorithm generates an unavoidable expected envy of roughly $\sqrt T$. In contrast, if agent arrival could be nudged, the envy ceases to depend on the horizon $T$. Our results highlight that strategic manipulation of arrival orders through nudging can substantially mitigate envy without altering algorithmic decisions. Furthermore, our preliminary investigation into the welfare-envy tradeoff in Section~\ref{sec:extensions} suggests that algorithms can balance social welfare and envy. 

Due to space limitations, we deferred two important analyses to the appendix. First, we examine the \emph{average envy}, defined as $\envavg^T = \frac{1}{\binom{N}{2}}\sum_{1\leq i<j \leq N}{\abs{\env_{i,j}^T}}$. We show how to leverage our results to obtain meaningful bounds for this measure of envy. Second, we conduct extensive simulations, aiming to empirically validate our theoretical results and perform sensitivity analysis.

We see considerable scope for future work. First, from a technical standpoint, our results contain several gaps. For instance, in nudged arrival, Theorem~\ref{thm: sugg-envy} provides an upper bound of $O\left(\frac{N}{\delta \tdif} \right)$. However, we conjecture that the dependence on the number of agents $N$ should only be logarithmic, if any. Indeed, our simulations support this view. Furthermore, our welfare-envy analysis is preliminary, and future work could extend it to multiple agents, non-uniform distributions, other arrival functions, etc. Another aspect that this paper did not fully address is social welfare maximization under Bayesian information and consistent rewards. Indeed, our goal was to analyze envy in algorithms without heavy assumptions about the way they operate. We consider this problem in \ifapp{Section~\ref{appendix:sociallyopt}}{the appendix}. We draw some interesting connections to Pandora's box and prophet inequalities. We design a socially optimal algorithm for the two-agent case, develop efficient and optimal algorithms for special cases with $N>2$ agents, and propose an inefficient but approximately optimal algorithm for general instances with $N>2$. We suspect that finding a socially optimal algorithm under Bayesian information is NP-hard, as it involves an optimal ordering of arms, which was proved hard for prophet inequalities~\cite{agrawal2020optimal}. Finding optimal algorithms, efficient approximations, or proving hardness remains an open problem.

From a conceptual standpoint, we see several exciting opportunities to deepen our understanding of envy in explore-and-exploit systems. Recall that we assume reward consistency: Rewards are realized only once per round and remain constant across the sessions within that round. This property is crucial, as envy primarily arises when the algorithm exploits information obtained from one agent to benefit another. However, this assumption may be overly restrictive, as it imposes an abrupt shift in reward dynamics across rounds while ignoring potential variability within rounds. A more general modeling approach could introduce structured reward dynamics such as those found in Markov Decision Processes (MDPs), allowing rewards to change gradually from session to session. This would still capture the essence of envy---since agents engaged in exploration would continue to be disadvantaged---but would reflect more realistic environments. 

Another intriguing conceptual direction is to relax the assumption that every agent arrives precisely once in every round. Indeed, if arrival is more chaotic, envy dynamics change remarkably. Agents might arrive multiple times in a round or skip rounds entirely, introducing new forms of informational asymmetry and potentially amplifying or mitigating envy in unexpected ways. Future work could pursue this challenge, exploring how arrival patterns interact with exploration strategies and fairness considerations. %\omer{to add: contextual rewards?}

\section*{Acknowledgments}\label{sec:Acknowledgments}
The work of O. Ben-Porat was supported by the Israel Science Foundation (ISF; Grant No. 3079/24). 

\bibliographystyle{plainnat}

% Appendix
\ifnum\Includeappendix=1{
\appendix
%\addtocontents{toc}{\setcounter{tocdepth}{1}}
\section{Omitted Proofs from Section~\ref{sec:uniform}}\label{appendix:uni}
\subsection{Upper Bound}
\begin{proof}[Proof of Proposition~\ref{prop:envy is good SG}]
First, we note that the envy is a martingale.
\begin{observation}\label{envy is martingale}
For every $i,j\in[N]$, the sequence $\left(\env_{i,j}^t \right)_{t=1}^T$ is a martingale. 
\end{observation}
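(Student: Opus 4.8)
The plan is to exhibit an explicit filtration with respect to which $(\env_{i,j}^t)_{t}$ is a martingale and then verify the three defining properties. I would let $\mathcal{F}^t$ denote the $\sigma$-algebra generated by the arrival orders $\ordv_1,\dots,\ordv_t$ together with all (pulled arm, realized reward) pairs observed through the end of round $t$. By construction $\env_{i,j}^t = \sum_{\tau=1}^{t}\adift{i}{j}$ is $\mathcal{F}^t$-measurable, so adaptedness is immediate, and integrability is trivial since rewards lie in $[0,1]$ and hence $\abs{\env_{i,j}^t}\le t\le T$. The only real content is the increment identity $\E{\adift{i}{j}\midd \mathcal{F}^{t-1}}=0$, equivalently $\E{\rt{i}\midd \mathcal{F}^{t-1}}=\E{\rt{j}\midd \mathcal{F}^{t-1}}$.

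To prove that identity I would first note that, because the algorithm is anonymous, the vector of per-session rewards $(\rt{(1)},\dots,\rt{(N)})$ in round $t$ is a deterministic function of $\mathcal{F}^{t-1}$ and of the round-$t$ reward realizations $(X_k^t)_{k\in[K]}$ alone: the arms pulled in each session are determined only by the observed history of (arm, reward) pairs, never by agent identities, so this vector does not depend on the permutation $\ordv_t$. Next, under $\uniord$ the permutation $\ordv_t$ is drawn uniformly from all permutations of $[N]$, independently of $\mathcal{F}^{t-1}$ and of $(X_k^t)_k$. Hence, conditioning additionally on $(X_k^t)_k$, the position $\ordv_t^{-1}(i)$ of agent $i$ is uniform on $[N]$ and independent of the per-session reward vector, so
\[
\E{\rt{i}\midd \mathcal{F}^{t-1},(X_k^t)_k}=\E{\rt{(\ordv_t^{-1}(i))}\midd \mathcal{F}^{t-1},(X_k^t)_k}=\frac{1}{N}\sum_{q=1}^{N}\rt{(q)},
\]
and the right-hand side is independent of $i$. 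The same computation for agent $j$ gives $\E{\rt{i}-\rt{j}\midd \mathcal{F}^{t-1},(X_k^t)_k}=0$; applying the tower property over $(X_k^t)_k$ yields $\E{\adift{i}{j}\midd \mathcal{F}^{t-1}}=0$, which is exactly the martingale increment condition.

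The main obstacle here is not an estimate but a modeling subtlety: one must argue cleanly that the per-session reward vector $(\rt{(q)})_q$ is invariant under relabeling of agents, i.e., that $\ordv_t$ affects round $t$ only through the (anonymity-fixed) sequence of arms pulled, so that the randomness of $\ordv_t$ enters solely as an independent uniform assignment of agents to sessions. Once this invariance is isolated, the remainder is just the tower rule together with boundedness of rewards. This statement is the conditional strengthening of Remark~\ref{remark: symmetric dif}, and the same reasoning underlies the subgaussian and variance bounds used elsewhere in the section.
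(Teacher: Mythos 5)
Your proof is correct and follows essentially the same route as the paper's: the paper's one-line argument also rests on the independence of the arm-selection process from the arrival order $\ordv_t$ together with the uniformity of $\uniord$, plus boundedness of rewards for integrability. Your version is simply a more careful elaboration — in particular, conditioning on the full filtration $\mathcal{F}^{t-1}$ rather than on $\env_{i,j}^t$ alone, and making explicit that anonymity is what guarantees the per-session reward vector is unaffected by the permutation — which is the rigorous form of what the paper asserts.
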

Furthermore, since $\dift$ is symmetric as Remark~\ref{remark: symmetric dif} hints, we can use the following result to connect its tail behavior and its variance. 
\begin{proposition}\label{thm: symmetric bounded sg}
Let $Y$ be a bounded random variable symmetric around $0$, i.e., for any $y \in \mathbb R$ it holds that $\prb{Y \geq y} = \prb{Y \leq -y}$. Then, $Y$ is $\sqrt{\var{Y}}$-subgaussian.
\end{proposition}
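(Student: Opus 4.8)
The plan is to verify the moment-generating-function characterization of subgaussianity directly: $Y$ is $\sigma$-subgaussian exactly when $\E{e^{\lambda Y}} \le e^{\lambda^2 \sigma^2 / 2}$ for every $\lambda \in \mathbb{R}$, so it suffices to prove $\E{e^{\lambda Y}} \le e^{\lambda^2 \var{Y}/2}$ for all $\lambda$. The first step is to use the symmetry hypothesis to kill the odd moments: since $Y$ and $-Y$ have the same law,
\[
\E{e^{\lambda Y}} = \tfrac{1}{2}\left(\E{e^{\lambda Y}} + \E{e^{-\lambda Y}}\right) = \E{\cosh(\lambda Y)} = \sum_{k \ge 0} \frac{\lambda^{2k}}{(2k)!}\,\E{Y^{2k}},
\]
where interchanging sum and expectation is legitimate because $Y$ is bounded, so the series converges uniformly on the support of $Y$. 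The $k=0$ term is $1$ and, since $\E{Y}=0$, the $k=1$ term is exactly $\tfrac{\lambda^2}{2}\var{Y}$; everything then reduces to controlling the tail $\sum_{k\ge 2}\frac{\lambda^{2k}}{(2k)!}\E{Y^{2k}}$.

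Next I would bound the higher even moments $\E{Y^{2k}}$ using boundedness and compare the resulting series with that of $e^{\lambda^2\var{Y}/2} = \sum_{k\ge 0}\frac{1}{k!}\bigl(\tfrac{\lambda^2\var{Y}}{2}\bigr)^k$. One natural route is the pointwise inequality $\cosh(x)\le e^{x^2/2}$ (which follows termwise from $(2k)!\ge 2^k k!$), giving $\E{e^{\lambda Y}}\le \E{e^{\lambda^2 Y^2/2}}$, after which the task becomes bounding $\E{e^{\lambda^2 Y^2/2}}$ by $e^{\lambda^2\E{Y^2}/2}$ using the bound $M=\sup|Y|$; an alternative route keeps the $\cosh$-series and aims to establish the coefficientwise domination $\E{Y^{2k}}/(2k)! \le \var{Y}^k/(2^k k!)$, absorbing the boundedness into the comparison. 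In either case one would like to reduce to the extremal symmetric bounded law and recognize that the two-point case is governed by Hoeffding's lemma, which is tight.

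I expect this final conversion — passing from ``the higher even moments are small'' to the clean variance proxy $\var{Y}$ (rather than something controlled by $M^2$) — to be the crux and the most delicate part, since the elementary inequalities available ($1+x\le e^x$, $\cosh x \le e^{x^2/2}$) are lossy and naturally produce a variance proxy of order $M$. The symmetry assumption is precisely what removes the odd-moment contributions that would otherwise block the moment comparison, and boundedness is what keeps the series summable and lets one trade a factor $Y^{2k-2}$ for powers of the support radius; getting these two ingredients to cooperate so that the comparison is tight at $\var{Y}$ — likely via an extremal-distribution reduction to the symmetric two-point law rather than a brute-force series manipulation — is where I would concentrate the effort.
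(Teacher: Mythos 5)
Your plan stalls exactly where you predict it will, and the gap is not fillable: the statement as written is false under the MGF definition of subgaussianity the paper uses, so neither of your two routes (nor any other) can close it. Concretely, let $Y$ take the values $\pm 1$ with probability $p$ each and $0$ with probability $1-2p$. This is bounded and symmetric with $\var{Y}=2p$, yet $\E{e^{\lambda Y}} = 1 + 2p(\cosh\lambda - 1) = 1 + p\lambda^2 + \tfrac{p\lambda^4}{12} + \cdots$, while $e^{\lambda^2\var{Y}/2} = 1 + p\lambda^2 + \tfrac{p^2\lambda^4}{2}+\cdots$; for small $p$ the coefficient $\tfrac{p}{12}$ exceeds $\tfrac{p^2}{2}$, and indeed at $p=0.01$, $\lambda=1$ one gets $\E{e^{Y}}\approx 1.01086 > e^{0.01}\approx 1.01005$. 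The same example kills your coefficientwise domination $\E{Y^{2k}}\le (2k-1)!!\,\var{Y}^k$ (here $\E{Y^4}=2p \not\le 12p^2$ for $p<1/6$), and your first route fails even earlier: Jensen gives $\E{e^{\lambda^2Y^2/2}}\ge e^{\lambda^2\E{Y^2}/2}$, i.e., the inequality you need points the wrong way. The extremal-two-point reduction you gesture at leads to Hoeffding's lemma, whose variance proxy for a variable supported in $[-a,a]$ is $a$, not $\sqrt{\var{Y}}$; that is the correct general statement, and the gap between $a$ and $\sqrt{\var{Y}}$ is exactly what cannot be bridged.

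For comparison, the paper's own argument is a third-order Taylor expansion of $e^{\lambda y}$ with Lagrange remainder $\tfrac{(\lambda y)^3}{3!}e^{\xi_{\lambda y}}$, where the remainder is bounded by $\tfrac{(\lambda y)^3}{3!}e^{|\lambda a|}$ and then discarded in expectation because $\E{Y^3}=0$. That is precisely where the loss you were worried about gets hidden: the replacement $e^{\xi_{\lambda y}}\le e^{|\lambda a|}$ reverses the inequality on the event $(\lambda y)^3<0$ (and $\xi_{\lambda Y}$ is correlated with $Y$), so the true remainder $\E{\tfrac{(\lambda Y)^3}{3!}e^{\xi_{\lambda Y}}}$ does not vanish --- in the example above it is what produces the offending $\tfrac{p\lambda^4}{12}$ term. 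Your instinct that passing from ``higher even moments are small'' to the clean proxy $\var{Y}$ is the delicate step is correct; it is delicate to the point of being impossible without additional hypotheses on $Y$ (e.g., a Rademacher or uniform law, for which your series comparison does go through), and your proposal is honest in not claiming to have completed it.
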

We prove Proposition~\ref{thm: symmetric bounded sg} after the end of this proof. Proposition~\ref{thm: symmetric bounded sg} suggests that $\dift$ is $\sqrt{\var{\dift}}$-subgaussian. Ultimately, Lemma~\ref{sg of sup of martingale} analyzes the subgaussianity parameter of the maximum of the martingale $(\env_{i,j}^t )_t$.
\begin{lemma}\label{sg of sup of martingale}
    Let $M^1, M^2,\dots M^T$ be a martingale with increments $Y^1,Y^2,\dots Y^T$, such that $Y^t \mid M^{t-1}$ is $\sigma_t$-SG. Then $\max_t M^t$ is $\left(\sqrt{\sum_{t=1}^T \sigma_t^2}\right)$-SG.
\end{lemma}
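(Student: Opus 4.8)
The plan is to bound the tail of $\max_{1\le t\le T} M^t$ directly, by pairing Doob's maximal inequality with the standard exponential‐moment estimate for martingales with conditionally subgaussian increments, and then to read off subgaussianity from this tail bound. Throughout I work with the natural filtration $\mathcal F_t=\sigma(M^1,\dots,M^t)$, take the starting value $M^0$ to be the (WLOG zero) constant so that $Y^t=M^t-M^{t-1}$, write $S_t=\sum_{s\le t}\sigma_s^2$, and abbreviate $\sigma^2=S_T=\sum_{t=1}^T\sigma_t^2$.

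First I would control the upper tail. Fix $\lambda>0$ and set $W^t=e^{\lambda M^t}$; since $x\mapsto e^{\lambda x}$ is convex, $(W^t)_t$ is a nonnegative submartingale, so Doob's maximal inequality gives, for any $a>0$,
\[
\prb{\max_{1\le t\le T} M^t\ge a}=\prb{\max_{1\le t\le T} W^t\ge e^{\lambda a}}\le e^{-\lambda a}\,\E{W^T}.
\]
To bound $\E{W^T}$ I would iterate the conditional subgaussianity of the increments via the tower rule: $\E{e^{\lambda M^t}}=\E{e^{\lambda M^{t-1}}\,\E{e^{\lambda Y^t}\mid\mathcal F_{t-1}}}\le e^{\lambda^2\sigma_t^2/2}\,\E{e^{\lambda M^{t-1}}}$, which telescopes to $\E{W^T}\le e^{\lambda^2 S_T/2}$. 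Plugging this in and optimizing the free parameter at $\lambda=a/\sigma^2$ yields $\prb{\max_{1\le t\le T} M^t\ge a}\le e^{-a^2/(2\sigma^2)}$.

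Next I would dispatch the (much easier) lower tail: the event $\{\max_t M^t\le -a\}$ forces $M^T\le -a$, and the same MGF iteration applied to $-M^T$ shows $M^T$ is itself $\sigma$-subgaussian, so $\prb{\max_t M^t\le -a}\le\prb{M^T\le -a}\le e^{-a^2/(2\sigma^2)}$. Combining the two tails gives $\prb{\,|\max_{1\le t\le T} M^t|\ge a\,}\le 2e^{-a^2/(2\sigma^2)}$ for all $a>0$, i.e.\ $\max_{1\le t\le T} M^t$ is $\big(\sqrt{\sum_{t=1}^T\sigma_t^2}\big)$-subgaussian. If one wants the moment-generating-function form instead, the same ingredients give it: $Z^t=e^{\lambda M^t-\lambda^2 S_t/2}$ is a supermartingale with $\E{Z^0}=1$, and integrating the tail estimate above recovers an MGF bound of the stated order.

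The one genuinely delicate point is the very first move — ensuring that passing from the martingale to its running maximum does not inflate the subgaussian parameter. The tempting route, bounding $\E{e^{\lambda\max_t M^t}}$ by $\E{\max_t e^{\lambda M^t}}$ and applying an $L^1$ Doob inequality, fails: Doob's $L^1$ maximal inequality does not bound the expected maximum by the terminal expectation. The fix is exactly what is used above — apply Doob in its probabilistic form to the submartingale $e^{\lambda M^t}$ at the scaled threshold $e^{\lambda a}$, and only then optimize over $\lambda$ after taking probabilities; this is what makes the single quantity $S_T$ (rather than something growing with $T$) emerge as the parameter. Everything else is the routine exponential-martingale computation, and the hypothesis "$Y^t\mid M^{t-1}$ is $\sigma_t$-subgaussian" is used precisely in the step $\E{e^{\lambda Y^t}\mid\mathcal F_{t-1}}\le e^{\lambda^2\sigma_t^2/2}$.
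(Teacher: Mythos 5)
Your proof is correct, but it takes a genuinely different route from the paper's. The paper first shows by induction that each $M^t$ satisfies the moment-generating-function bound $\E{e^{\lambda M^t}}\leq e^{\lambda^2 S_t/2}$ (the same telescoping you use for $\E{W^T}$), and then handles the maximum by conditioning on the random index $\tau$ at which the maximum is attained, writing $\E{e^{\lambda \max_t M^t}}=\E{\E{e^{\lambda M^\tau}\mid \tau}}\leq \E{e^{\lambda^2 S_\tau/2}}\leq e^{\lambda^2 S_T/2}$. You instead apply Doob's maximal inequality to the exponential submartingale $e^{\lambda M^t}$ at the threshold $e^{\lambda a}$ and optimize over $\lambda$, obtaining the one-sided tail bound $\prb{\max_t M^t\geq a}\leq e^{-a^2/(2\sigma^2)}$. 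Your route is the textbook maximal Azuma--Hoeffding argument and is fully watertight; in fact it sidesteps the delicate step in the paper's version, since $\tau$ is an argmax rather than a stopping time and conditioning on $\{\tau=t\}$ biases $M^t$ upward, so the inequality $\E{e^{\lambda M^\tau}\mid\tau=t}\leq e^{\lambda^2 S_t/2}$ is exactly the kind of move you correctly warn against in your closing remarks.

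The one caveat is the form of the conclusion. The paper uses subgaussianity in the MGF sense throughout: the downstream Claim on the maximum of $N^2$ subgaussian variables consumes precisely $\E{e^{aY_i}}\leq e^{a^2\sigma^2/2}$, and the explicit constant $2$ in the final envy bound depends on the parameter being exactly $\sqrt{\sum_t\sigma_t^2}$. Your argument delivers the two-sided tail bound with that exact parameter, but converting a tail bound back into an MGF bound costs an absolute constant, so "an MGF bound of the stated order" is the honest phrasing: you would recover the theorem up to a universal constant factor, not with the paper's constant $2$. If you want the MGF form with no loss, the cleanest fix within your framework is the supermartingale $Z^t=e^{\lambda M^t-\lambda^2 S_t/2}$ you already mention, combined with an argument that controls $\E{\sup_t Z^t}$ rather than its tail; otherwise, accept the constant-factor degradation, which is immaterial for the asymptotic statement.
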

Lemma~\ref{sg of sup of martingale} suggests that $\max_{1\leq t\leq T} \env_{i,j}^t $ is 
$\left(\sqrt{\sum_{l=1}^t \var{\dift}}\right)$-SG, thereby concluding the proof of  Proposition~\ref{prop:envy is good SG}.
\end{proof}

\begin{proof}[Proof of Observation~\ref{envy is martingale}]
Recall that $\env_{i,j}^t = \sum^{t}_{l=1}{\Del{l}{i,j}}$. Since the order of selection at time $t$ is independent of $\ordr_t$, it holds that $\E{\env_{i,j}^{t+1}\mid \env_{i,j}^t} =\env_{i,j}^t$. Moreover, since $\E{\abs{\env_{i,j}^t}} \leq  T < \infty$, the stochastic process $\left(\env_{i,j}^t \right)_{t=1}^T$ is a martingale. 
\end{proof}

\begin{proof}[Proposition \ref{thm: symmetric bounded sg}]
    We begin by examining the Taylor polynomial of the function $f(z)=e^z$ around $0$.
    \[
    e^z=  1 + z + \frac{z^2}{2!} + \frac{z^3}{3!} e^{\xi_z},
    \]
    where the last term is the  Lagrange form of the remainder for some $\xi_z \in [-\abs{z}, \abs{z}]$.
    Using this expansion with $z=\lambda y$ and $\xi_{\lambda y} \in [-\abs{\lambda y}, \abs{\lambda y}]$ gets
    \[
    e^{\lambda y} = 1 + \lambda y + \frac{(\lambda y)^2}{2!} + \frac{(\lambda y)^3}{3!} e^{\xi_{\lambda y}}.
    \]
    If $-a<y<a$ then,
    \[
    e^{\lambda y} = 1 + \lambda y + \frac{(\lambda y)^2}{2!} + \frac{(\lambda y)^3}{3!} e^{\abs{\lambda a}}.
    \]
    Thus,
    \[
    \E{e^{\lambda Y}} \leq \E{1 + \lambda Y + \frac{(\lambda Y)^2}{2} + \frac{(\lambda Y)^3}{3} e^{|\lambda a|}} = 
    1 + \lambda \E{Y} + \frac{\lambda^2}{2}\E{Y^2} +\frac{\lambda^3 e^{|\lambda a|}}{3!} \E{Y^3}.
    \]
    Recall that $Y$ is symmetric around $0$ and hence $\E{Y} = \E{Y^3} = 0$, $\var{Y} = \E{Y^2}$.
    Combining these observations with the above we get
    \[
    \E{e^{\lambda Y}} \leq 1 + \frac{\lambda^2}{2}\var{Y} \leq e^{\frac{\lambda^2 \var{Y}}{2}},
    \]
    where the last inequality is due to that $1+z \leq e^z$ for all $z$.
    That sums up the proof of Lemma~\ref{thm: symmetric bounded sg}.
\end{proof}

\begin{proof}[Proof of Lemma~\ref{sg of sup of martingale}]
We begin by proving that $M^t$ is $\left(\sqrt{\sum_{l=1}^t \sigma_l^2}\right)$-SG, and then address $\max_t M^t$

The base case is $t=1$, where we have $M^1=Y^1$ and $Y^1$ is $\sigma_1$-SG by definition. Next, assume that the statement holds for $t=k-1$. It holds that 
\begin{align*}
\E{e^{\lambda M^{k}}}&=\E{\E{e^{\lambda M^{k}}\mid M^{k-1}}}=\E{\E{e^{\lambda (M^{k-1}+Y^k)}\mid M^{k-1}}} = \E{e^{\lambda M^{k-1}} \E{e^{\lambda Y^k}\mid M^{k-1}}}\\
& \leq \E{e^{\lambda M^{k-1}} e^{\frac{\lambda^2 \sigma_k^2}{2}}} \leq e^{\frac{\lambda^2 \sum_{l=1}^{k-1} \sigma_l^2}{2} }e^{\frac{\lambda^2 \sigma_k^2}{2}} = e^{\frac{\lambda^2 \sum_{l=1}^{k} \sigma_l^2}{2} },
\end{align*}
where we used total expectation, the fact that $Y^k\mid M^{k-1}$ is $\sigma_k$-SG and the inductive assumption.

Next, let $\tau$ denote the r.v. for which $M^\tau = \max_t M^t$. It holds that
\begin{align*}
    \E{e^{\lambda \max_t M^t}} &= \E{e^{\lambda M^\tau}} = \E{\E{e^{\lambda M^\tau} \mid \tau}} \leq \E{\E{e^{\lambda M^\tau} \mid \tau}} \leq \E{e^{\frac{\lambda^2 \sum_{l=1}^{\tau} \sigma_l^2}{2} }} \leq e^{\frac{\lambda^2 \sum_{l=1}^{T} \sigma_l^2}{2} },
\end{align*}
where the second to last step follows from the fact that $M^t$ is $\left(\sqrt{\sum_{l=1}^t \sigma_l^2}\right)$-SG. This completes the proof of Lemma~\ref{sg of sup of martingale}.
\end{proof}

\begin{proof}[Proof of Claim~\ref{claim: sg max}]
Fix $a \in \mathbb{R}$ and denote $Y_{\textnormal{max}} = \max_{i \in [n]}{\{Y_i\}}$ for convenience.
We begin by examining $e^{a\E{Y_{\textnormal{max}} }}$.
Using Jensen’s inequality, we have that
\begin{align}\label{eq:asdfrsdgvbsdfg}
e^{a\E{Y_{\textnormal{max}} }} \leq \E{e^{a Y_{\textnormal{max}}}}= \E{\max_{i \in [n]}{\left\{ e^{a Y_i } \right\}}} \leq
\sum_{i=1}^{n}{\E{e^{a Y_i}}} \leq \sum_{i=1}^{n}{ e^{\frac{a^2\sigma^2}{2}} } = ne^{\frac{a^2\sigma^2}{2}}.    
\end{align}
Taking $\ln$ on both sides of Inequality~\eqref{eq:asdfrsdgvbsdfg},
\[
a\E{Y_{\textnormal{max}}}\leq \ln(n) + \frac{a^2\sigma^2}{2}.
\]
Diving by $a$, we obtain
\begin{align}\label{eq:dasolgbdf}
\E{Y_{\textnormal{max}}}\leq \frac{\ln(n)}{a} + \frac{a\sigma^2}{2}.
\end{align}
Inequality~\eqref{eq:dasolgbdf} holds for all $a\in \mathbb{R}$ and specifically for the minimizer of $\frac{\ln{(n)}}{a}\!+\!\frac{a\sigma^2}{2}$, which is $a=\frac{\sqrt{2 \ln{(n)}}}{\sigma}$.
We complete the proof of the claim by substituting $a$ with this value in Inequality~\eqref{eq:dasolgbdf}.
\end{proof}
\begin{proof}[Proof of Proposition~\ref{thm: threshold var}]
    First, since $\dift \leq 1$ almost surely, $\var{\dift} \leq 1$ holds trivially. For the more challenging expression, the proof relies on the fact that when executing an explore-first algorithm, after at most $K$ sessions of any round, all remaining $N-K$ agents must receive the same reward. Hence, at the end of every round $t$, we can find at least $\binom{N-K+1}{2}$ pairs of agents $i,j$ that satisfy $\dif^t_{i,j} = 0$. In other words, for any $i,j$, it holds that 
    \[
    \prb{{\dif^t_{i,j}}=0}\geq \frac{\binom{N-K+1}{2}}{\binom{N}{2}},
    \]
    where the randomness is taken over the stochasticity of the rewards and the arrival order. From that,
    \begin{align*}
        \var{\dift} & = \E{{\dift}^2} - \E{\dift}^2 = \E{{\dift}^2 \mid {\dift}^2 \neq 0}\cdot \prb{{\dift}^2 \neq 0} - \E{\dift}^2
        \nonumber \\&\leq
        1\cdot \left( 1 - \frac{\binom{N-K+1}{2}}{\binom{N}{2}} \right) - 0 =
        1 - \frac{ \left( N-K+1 \right)\left( N-K \right) }{N\left( N-1 \right)}.
    \end{align*}
    With the help of some algebraic operations, we can simplify the expression.
    \begin{align*}
        \var{\dift} & \leq
        1 - \frac{ \left( N-K+1 \right)\left( N-K \right) }{N\left( N-1 \right)} =
        \frac{ N^2 - N - \left( N^2 - 2NK + K^2 + N - K \right) }{ N\left( N-1 \right)} 
        \\&=
        \frac{2NK -2N + K -K^2 }{ N\left( N-1 \right)} =
        \frac{2N(K-1) - K(K-1) }{N\left( N-1 \right)} =
        \frac{(2N - K)(K-1) }{N\left( N-1 \right)}.
    \end{align*}
    This concludes the proof of Proposition~\ref{thm: threshold var}.
\end{proof}
\begin{proof}[Proof of Corollary~\ref{thm: sqrt TK N}]
The corollary holds since 
\begin{align*}
\E{\max_{1\leq t \leq T} \env^t} &\leq
2\sqrt{\ln{(N)}\sum^T_{t=1}{\var{\dift}}}\leq 
2\sqrt{\ln{(N)}\sum^T_{t=1}{\min\left\{1,\frac{2(K-1)}{N-1} \right\}}}\\
&\leq 2\sqrt{T\ln{(N)}{\frac{2(K-1)}{N-1}}}.    
\end{align*}
\end{proof}

\subsection{Lower Bound}\label{appendix:lower bound}
\begin{claim}\label{claim: uni suff}
For the execution in Example~\ref{example: uni suff}, it holds that $\var{\dift} = \frac{1}{12}$.
\end{claim}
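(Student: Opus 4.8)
The plan is to reduce the claim to a single short integral computation. Under uniform arrival with $N=2$, each round the order is $(1,2)$ or $(2,1)$ with probability $\tfrac12$ each, independently of the realized rewards, so I would first write $\dift = \dif^t_{1,2} = \varepsilon_t\bigl(r^t_{(1)} - r^t_{(2)}\bigr)$, where $\varepsilon_t \in \{+1,-1\}$ is a fair coin independent of $(x_1^t,x_2^t)$. Since $\E{\dift}=0$ (as already noted in the paper for $\uniord$) and $\varepsilon_t^2=1$, this gives $\var{\dift} = \E{\dift^2} = \E{\bigl(r^t_{(1)} - r^t_{(2)}\bigr)^2}$. Note the reward distributions do not depend on $t$, so this value is the same in every round.

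Next I would unpack $r^t_{(1)}-r^t_{(2)}$ using Algorithm~\ref{alguni}. The first-session agent always receives $r^t_{(1)} = x_1^t$. In the ``if'' branch ($x_1^t \geq \tfrac12$) the algorithm pulls $a_1$ again, so $r^t_{(2)} = x_1^t$ and the discrepancy is $0$; in the ``else'' branch ($x_1^t < \tfrac12$) it pulls $a_2$, so $r^t_{(2)} = x_2^t$ and the discrepancy is $x_1^t - x_2^t$. Hence
\[
\E{\bigl(r^t_{(1)} - r^t_{(2)}\bigr)^2} = \E{(x_1^t - x_2^t)^2\,\mathbb{I}_{\{x_1^t < 1/2\}}} = \int_{0}^{1/2}\!\!\int_{0}^{1} (x_1 - x_2)^2\, dx_2\, dx_1,
\]
using that $x_1^t, x_2^t$ are independent $\uni{0,1}$.

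Finally I would evaluate this integral. The inner integral is $\int_0^1 (x_1 - x_2)^2\, dx_2 = x_1^2 - x_1 + \tfrac13$, and then $\int_0^{1/2}\bigl(x_1^2 - x_1 + \tfrac13\bigr)dx_1 = \tfrac{1}{24} - \tfrac{1}{8} + \tfrac{1}{6} = \tfrac{1}{12}$, which yields $\var{\dift} = \tfrac{1}{12}$ and completes the proof.

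There is no real obstacle here; the only point requiring a little care is the first step — justifying that the symmetrization coin $\varepsilon_t$ is independent of the rewards so that the variance collapses to $\E{(r^t_{(1)}-r^t_{(2)})^2}$ — which follows immediately from the definition of $\uniord$ (the permutation is drawn independently of everything else) together with Remark~\ref{remark: symmetric dif}. The rest is the elementary double integral above.
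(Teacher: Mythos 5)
Your proposal is correct and follows essentially the same route as the paper: both reduce $\var{\dift}$ to $\E{\bigl(r^t_{(1)}-r^t_{(2)}\bigr)^2}$ using $\E{\dift}=0$ and the sign-symmetry of the uniform arrival, and then evaluate that second moment. The only difference is cosmetic --- you condition on the algorithm's branch and compute one double integral, while the paper expands the square into three expectations and evaluates each via conditional moments of the uniform distribution; both yield $\tfrac{1}{12}$.
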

\begin{proof}[Proof of Claim~\ref{claim: uni suff}]
We prove the claim by using the definition of variance.
\begin{align}\label{eq:adfgsfhsaagh}
\var{\dift} & = \E{({\dift})^2}-\E{{\dift}}^2\overset{(1)}{=}\E{\left(\rt{(1)} - \rt{(2)}\right)^2} = \E{{\rt{(1)}}^2} - 2\E{\rt{(1)}\rt{(2)}} + \E{{\rt{(2)}}^2}
\end{align}
Next, we apply the threshold structure and properties of the uniform distribution to Equation~\eqref{eq:adfgsfhsaagh}.
{\thinmuskip=0mu
\medmuskip=0mu plus 0mu minus 0mu
\thickmuskip=1mu plus 1mu minus 1mu
\begin{align*}
\textnormal{Eq. }\eqref{eq:adfgsfhsaagh} &= 
\E{{X_1}^2} - \frac{1}{2}\cdot 2\E{X_1 X_1 \mid X_1 \geq \frac{1}{2}} - \frac{1}{2}\cdot 2\E{ X_1 X_2\mid X_1 < \frac{1}{2}}+ \frac{1}{2}\cdot \E{{X_1}^2 \mid X_1 \geq \frac{1}{2}} + \frac{1}{2}\cdot \E{{X_2}^2 \mid X_1 < \frac{1}{2}} \\
&=
\E{{X_1}^2} - \frac{1}{2} \E{{X_1}^2 \mid X_1 \geq \frac{1}{2}} - \E{ X_1\mid X_1 < \frac{1}{2}} \E{ X_2} + \frac{1}{2} \E{{X_2}^2}
\\&=
\frac{3}{2}\E{{\uni{0,1}}^2} - \frac{1}{2} \E{{\uni{\frac{1}{2},1}}^2} - \E{\uni{0,\frac{1}{2}}}= \frac{3}{2} \cdot \frac{1}{3}- \frac{1}{2} \cdot \frac{7}{12} - \frac{1}{4} \cdot \frac{1}{2} \\
&= \frac{1}{12},
\end{align*}}%
where we have used the independence of $X_1$ and $X_2$.
\end{proof}

\begin{claim}\label{claim:example ber suff}
For the execution in Example~\ref{example: ber suff}, it holds that 
\begin{enumerate}
    \item $\var{\Delta^t} \geq \frac{2p(1 - p)}{N}$.
    \item $\var{\Delta^t} \leq  2(1-p)$.
    \item $\var{\Delta^t} \leq  2pK$.
\end{enumerate}
\end{claim}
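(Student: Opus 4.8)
The plan is to compute $\var{\dift}$ exactly for this execution and then read off all three inequalities. Fix a round $t$ and let $m = \min\{N,K\}$. Since the socially optimal algorithm pulls a fresh arm in each session until it first observes a reward of $1$ and then repeats that arm, the rewards of round $t$ are completely determined by the index $L$ of the session in which the first $1$ appears: $\prb{L = \ell} = (1-p)^{\ell-1}p$ for $\ell \in \{1,\dots,m\}$, and $\prb{L > m} = (1-p)^m$ is the probability that every explored arm yields $0$. On the event $\{L=\ell\}$ exactly the $\ell-1$ agents of sessions $1,\dots,\ell-1$ receive reward $0$ and the remaining $N-\ell+1$ agents receive reward $1$, while on $\{L>m\}$ every agent receives $0$ (if $K\le N$ the arms are exhausted after $K$ zero-pulls; if $K>N$ all $N$ fresh pulls were zero).

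Under $\uniord$ the reward discrepancy $\dift$ between two fixed agents equals $\sdift{Q}{W}$ where $(Q,W)$ is a uniformly random ordered pair of distinct sessions, drawn independently of the rewards (anonymity plus independence of the arrival order from the realizations). Hence, conditioned on $\{L=\ell\}$, $\dift$ is nonzero precisely when one of $Q,W$ lies in $\{1,\dots,\ell-1\}$ and the other in $\{\ell,\dots,N\}$, and by symmetry of the ordered pair it then equals $\pm1$ with equal probability; conditioned on $\{L>m\}$ it is $0$ almost surely. Since rewards are $\{0,1\}$-valued we have $(\dift)^2\in\{0,1\}$, so combining with Remark~\ref{remark: symmetric dif} ($\E{\dift}=0$),
\begin{equation*}
\var{\dift}=\E{(\dift)^2}=\prb{\dift\neq0}=\sum_{\ell=1}^{m}(1-p)^{\ell-1}p\cdot\frac{(\ell-1)(N-\ell+1)}{\binom{N}{2}}=\sum_{\ell=2}^{m}(1-p)^{\ell-1}p\cdot\frac{2(\ell-1)(N-\ell+1)}{N(N-1)},
\end{equation*}
where the $\ell=1$ term vanishes.

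All three bounds then follow from this closed form. For the lower bound, every summand is non-negative, so retaining only the $\ell=2$ term (which exists since $N,K\ge2$) gives $\var{\dift}\ge(1-p)p\cdot\frac{2(N-1)}{N(N-1)}=\frac{2p(1-p)}{N}$. For the upper bounds, write $c_\ell:=\frac{2(\ell-1)(N-\ell+1)}{N(N-1)}$ and note $c_\ell=\prb{\dift\neq0\mid L=\ell}\le1$. Using $c_\ell\le1$ and summing the geometric tail, $\var{\dift}\le\sum_{\ell=2}^{m}(1-p)^{\ell-1}p\le p(1-p)\sum_{j\ge0}(1-p)^{j}=1-p\le2(1-p)$. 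Using instead $c_\ell\le1$ and $(1-p)^{\ell-1}\le1$ gives $\var{\dift}\le\sum_{\ell=2}^{m}p=(m-1)p\le(K-1)p\le2pK$.

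The argument is essentially bookkeeping once the reward structure is written down; the only point needing care is the truncation at $m=\min\{N,K\}$ and the separate treatment of the all-zeros event $\{L>m\}$, which contributes nothing, so that $\prb{\dift\neq0}$ is correctly a finite sum rather than an infinite geometric-type series. I do not anticipate any genuine obstacle beyond handling these edge cases correctly.
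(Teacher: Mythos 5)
Your proof is correct and follows essentially the same route as the paper's: condition on how many initial pulls realize $0$ (you index by the session $L$ of the first success, the paper by the count $q=L-1$ of zeros), compute the probability that the two agents straddle that boundary under the uniform arrival, and bound the resulting sum term by term. Your version is marginally more careful about truncating the sum at $\min\{N,K\}$ and handling the all-zeros event, and it yields slightly tighter constants, but the decomposition and all three bounds are the same as in the paper.
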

\begin{proof}[Proof of Claim~\ref{claim:example ber suff}]
Fix two arbitrary agents $i$ and $j$. Recall that Remark~\ref{remark: symmetric dif} ensures that $\Delta_{i,j}^t$ is identically distributed, regardless of the indexes $i$ and $j$. Let the event $B_q$ indicate that the number of arms that realize a value of 0 is $q$, for $q \in \{0, \ldots, K\}$. Further, let $E$ denote the event that $i$ and $j$ receive different rewards. Then we have:
\[
\var{\Delta_{i,j}^t} = \E{(\Delta_{i,j}^t)^2} 
= \sum_{q=0}^K \bigl( \E{(\Delta_{i,j}^t)^2 \mid B_q, E}\Pr(B_q, E) 
+ \E{(\Delta_{i,j}^t)^2 \mid B_q, \overline{E}}\Pr(B_q, \overline{E}) \bigr).
\]
Since $(\Delta_{i,j}^t)^2$ takes the value 1 under event $E$ and 0 otherwise, we get:
\begin{equation}\label{eq:gknmhmgf}
\var{\Delta_{i,j}^t} = \sum_{q=0}^K (1 \cdot \Pr(B_q, E) + 0 \cdot \Pr(B_q, \overline{E})) 
= \sum_{q=1}^K \Pr(B_q, E).    
\end{equation}
Fix any $q\in[K]$. It holds that 
\begin{align}\label{eq:develop for q}
\Pr(B_q, E) = \Pr(B_q)\Pr(E \mid B_q) =  p (1 - p)^{q}\cdot \frac{2\binom{N-2}{q-1}}{\binom{N}{q}} =  2 p (1 - p)^{q}\frac{q(N-q)}{N(N-1)}.
\end{align}
Combining Equations~\eqref{eq:gknmhmgf} and~\eqref{eq:develop for q}, we get
\begin{align}\label{eq:var def}
\var{\Delta_{i,j}^t} = \sum_{q=1}^K 2 p (1 - p)^{q}\frac{q(N-q)}{N(N-1)}.
\end{align}
Since all summands are positive, we obtain the first part of the claim by bounding from below using only the $q=1$ term. That is, we obtain $\var{\Delta_{i,j}^t} \geq  \frac{2p(1 - p)}{N}$. 

For the other parts of the claim, observe that for every $q \in [K]$, $ \frac{q(N-q)}{N(N-1)} \leq 1$; hence, Equation~\eqref{eq:var def} implies that 
\begin{align}\label{eq:gdhfghsdfg}
\var{\Delta_{i,j}^t} \leq \sum_{q=1}^K 2 p (1 - p)^{q}.
\end{align}
From here, we use Inequality~\eqref{eq:gdhfghsdfg} to obtain the second and third parts of the claim. First,
\begin{align*}
\sum_{q=1}^K 2 p (1 - p)^{q} \leq 2 p (1 - p) \sum_{q=0}^\infty (1 - p)^{q} = \frac{ 2 p (1 - p)}{1-(1-p)} = 2(1 - p);
\end{align*}
thus, $\var{\Delta_{i,j}^t} \leq  2(1-p)$ as the second part of the claim implies. Using a different approach to upper bound Inequality~\eqref{eq:gdhfghsdfg}, we get
\begin{align*}
\sum_{q=1}^K 2 p (1 - p)^{q} \leq 2p \sum_{q=1}^K 1^{q}=2pK
\end{align*}
As the third part of the claim asserts. This completes the proof of Claim~\ref{claim:example ber suff}.
\end{proof}

\iffalse %This is for the "sophisticated bound
Next, we move to the third part of the claim. Applying another approach to upper bound the right-hand-side of Equation~\eqref{eq:var def}, we obtain
\begin{align}\label{eq:var upper}
\sum_{q=1}^K 2 p (1 - p)^{q}\frac{q(N-q)}{N(N-1)} &=  \frac{2p}{N (N-1)} \sum_{q=1}^K (1 - p)^{q}q(N-q) \leq \frac{2p}{N (N-1)} \sum_{q=1}^K (1 - p)^{q}q (N -1) 
\nonumber \\
& = \frac{2p}{N} \sum_{q=1}^K (1 - p)^{q}q   \leq \frac{2p}{N}  \sum_{q=0}^\infty (1 - p)^{q}q.
\end{align}
Due to  Observation~\ref{obs:geo and mul} below,
\[
\sum_{q=0}^\infty (1 - p)^{q}q \leq \frac{1-p}{p^2}.
\]
Combining this with Inequality~\eqref{eq:var upper}, we ultimately obtain 
\[
\var{\Delta_{i,j}^t} \leq \frac{2p}{N} \frac{1-p}{p^2} = \frac{2(1-p)}{N p}.
\]
This completes the proof of Claim~\ref{claim:example ber suff}.
\end{proof}
\begin{observation}\label{obs:geo and mul}
For any $x\in (0,1)$, it holds that 
\[
\sum_{q=0}^\infty q x^{q} \leq \frac{x}{(1-x)^2}.
\]
\end{observation}
\begin{proof}[Proof of Observation~\ref{obs:geo and mul}]
Starting from $\sum_{n=0}^\infty x^n = \frac{1}{1-x}$, we differentiate both sides by $x$ to obtain
\[
\frac{d}{dx}\left(\sum_{n=0}^\infty x^n \right) = \frac{d}{dx}\left( \frac{1}{1-x} \right)\Leftrightarrow \sum_{n=0}^\infty n x^{n-1}  = \frac{1}{(1-x)^2} \Leftrightarrow \sum_{n=0}^\infty n x^{n}  = \frac{x}{(1-x)^2},
\]
where the last transition follows from multiplying both sides by $x$.
\end{proof}
\fi

\begin{proof}[Proof of Proposition~\ref{thm: board}]
    The proof of Proposition~\ref{thm: board} relies on the following algebraic inequality, which we prove after this proof.
    \begin{observation}\label{obs: algebric}
        For any $a\geq 0$, it holds that $2a \geq 3a^2 - a^4$.
    \end{observation}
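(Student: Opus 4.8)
The plan is to reduce the inequality to a statement about the sign of a single polynomial and then factor that polynomial completely. Moving everything to one side, the claim $2a \geq 3a^2 - a^4$ is equivalent to $a^4 - 3a^2 + 2a \geq 0$ for all $a \geq 0$. Since $a$ is a common factor, this is in turn equivalent to $a\,(a^3 - 3a + 2) \geq 0$, and because $a \geq 0$ it suffices to show that the cubic $g(a) = a^3 - 3a + 2$ is nonnegative on $[0,\infty)$.

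The key step is to notice that equality holds at $a = 1$ (indeed $g(1) = 1 - 3 + 2 = 0$), which, since we want $g \geq 0$ nearby, suggests $a=1$ is a root of even multiplicity. Dividing $g$ by $(a-1)$ twice gives $g(a) = (a-1)(a^2 + a - 2) = (a-1)^2(a+2)$, and one can confirm this by the direct expansion $(a-1)^2(a+2) = (a^2 - 2a + 1)(a+2) = a^3 - 3a + 2$. Hence $a^4 - 3a^2 + 2a = a\,(a-1)^2(a+2)$, a product of three factors each nonnegative whenever $a \geq 0$: namely $a \geq 0$, $(a-1)^2 \geq 0$, and $a + 2 > 0$. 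Therefore the product is nonnegative, which is exactly the claimed inequality.

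I do not anticipate any genuine obstacle; the only mild ``trick'' is spotting the double root at $a = 1$, after which the factorization is forced and the sign analysis is immediate. An alternative that avoids guessing the root would be to differentiate $g$, find that its only critical point in $[0,\infty)$ is a minimum at $a = 1$ with $g(1) = 0$, and note $g(0) = 2 > 0$; but the factorization argument is cleaner and self-contained, so that is the route I would take.
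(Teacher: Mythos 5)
Your proof is correct, and it takes a different (and arguably cleaner) route than the paper. You reduce the claim to the nonnegativity of $a^4 - 3a^2 + 2a$ and factor it completely as $a\,(a-1)^2(a+2)$, after which the sign is immediate from the nonnegativity of each factor on $[0,\infty)$. The paper instead divides out the factor $a$ and runs a calculus argument on the cubic $a^3 - 3a + 2$: it computes the derivative $3a^2 - 3$, argues the function decreases on $[0,1)$ and increases on $[1,\infty)$, and concludes the minimum value is $0$ at $a=1$ --- essentially the ``alternative'' you sketch in your last sentence. (As written, the paper's version contains some typos: it states the cubic as $a^3 - 3a^2 - 2$, reverses the signs of $f'$ on the two intervals, and writes the final inequality backwards; the intended argument is nonetheless the monotonicity one.) Your factorization buys a fully algebraic, self-contained verification with no case analysis and no risk of such sign slips, at the small cost of having to spot the double root at $a=1$; the paper's derivative argument requires no root-guessing but is slightly longer and, in this instance, was executed less carefully. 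Either approach is a complete proof.
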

    
    Recall, $Y$ is a non-negative random variable and therefore $\sqrt{\frac{Y}{\E{Y}}}$ is non-negative as well.
    When setting $a=\sqrt{\frac{Y}{\E{Y}}}$ we have
    \begin{align*}
         \frac{2\sqrt{Y}}{\sqrt{\E{Y}}} \geq \frac{3Y}{\E{Y}} - \frac{Y^2}{\E{Y}^2},
    \end{align*}
    for any value $Y$ can take.
    
    Notice that $\E{Y}\geq 0$, hence, by multiplying each side of the inequality by $\frac{\sqrt{\E{Y}}}{2}$ we get
    \begin{align*}
        \sqrt{Y} \geq \sqrt{\E{Y}}\left(\frac{3Y\E{Y} - Y^2}{2\E{Y}^2}\right).
    \end{align*}
    Taking expectation on both sides yields
    \begin{align*}
        \E{\sqrt{Y}} & \geq
        \E{\sqrt{\E{Y}}\left(\frac{3Y\E{Y} - Y^2}{2\E{Y}^2}\right)} \overset{(1)}{=}
        \sqrt{\E{Y}}\left(\E{\frac{2Y\E{Y}}{2\E{Y}^2}} - \E{\frac{Y^2 - Y\E{Y}}{2\E{Y}^2}}\right)
        \\&\overset{(2)}{=}
        \sqrt{\E{Y}}\left(\frac{2\E{Y}\E{Y}}{2\E{Y}^2} - \frac{\E{Y^2} - \E{Y}\E{Y}}{2\E{Y}^2}\right) \overset{(3)}{=}
        \sqrt{\E{Y}}\left(1 - \frac{\var{Y}}{2\E{Y}^2}   \right) ,
    \end{align*}
    where $(1)$ and $(2)$ hold due to linearity of expectation and $(3)$ is by the definition of variance. This concludes the proof of Proposition~\ref{thm: board}.
\end{proof}

\begin{proof}[Proof of Observation~\ref{obs: algebric}]
    To prove the inequality, it is sufficient to prove that the function $f(a) = a^3 - 3a^2 -2$ is non-negative for all $a\geq 0$.
    It holds that $f'(a) = 3 a^2 - 3$; thus
    \begin{align*}
        f'(a) > 0, & \text{ when } 0 \leq a < 1 \\
        f'(a) = 0, & \text{ when } a = 1 \\
        f'(a) < 0, & \text{ when } 1 \leq a.
    \end{align*}
    I.e., $f(a)$ is monotonically decreasing for $0\leq a <1$ and monotonically increasing for $1 \leq a$. 
    Hence, for all $a \geq 0$ it holds that $f(1) \geq f(a) = 0$
\end{proof}

\begin{proof}[Proof of Proposition~\ref{prop:insufficient}]
We prove the proposition by reiterating the proof of Theorem~\ref{thm: uni lower-bound} while avoiding using the definition of sufficient execution. It suffices to show that the left-hand side of Inequality~\eqref{eq:m,bnhjikw} is constant. Starting with the numerator,
\begin{align}\label{eq:lpods}
\var{ \sum^T_{t=1}{ \dift^2} } &= \sum^T_{t=1} \var{ \dift^2} =  \sum^T_{t=1} \left(\E{(\dift)^4}-\E{(\dift)^2}^2 \right) = \sum^T_{t=1} \left( \var{\dift} - \var{\dift}^2 \right) \nonumber \\
&= T\left( \var{\dift} - \var{\dift}^2 \right),
\end{align}
where we have used the fact that the algorithm is stationary over rounds and that $\dift$ only takes values in the $\{0,1\}$ set. We keep using the superscript $t$ in Equation~\eqref{eq:m,bnhjikw} to ease readability, and it could be any arbitrary $t \in [T]$.

Next, we consider the denominator of the left-hand side of Inequality~\eqref{eq:m,bnhjikw}.
\begin{align}\label{eq:ydots}
2 \E{ \sum^T_{t=1}{(\dift)^2} }^2 =2  \left(\sum^T_{t=1}{\E{(\dift)^2} }\right)^2 = 2  \left(\sum^T_{t=1}{\var{\dift} }\right)^2 =  2 \left(T{\var{\dift} }\right)^2 = 2T^2 \var{\dift}^2
\end{align}
Combining Equations~\eqref{eq:lpods} and~\eqref{eq:ydots}, we get
\begin{align}\label{eq:prea}
\frac{T\left( \var{\dift} - \var{\dift}^2 \right)}{2T^2 \var{\dift}^2}=\frac{\left( 1 - \var{\dift} \right)}{2T \var{\dift}} \leq 
\frac{1}{2T \var{\dift}} \leq \frac{1}{2T}\frac{N}{2p(1-p)},
\end{align}
where the last inequality follows from Claim~\ref{claim:example ber suff}. 
Furthermore, recall that the proposition assumption guarantees that $p\in \left[\frac{N}{cT}, 1-\frac{N}{cT}\right]$ for $c\geq 2$, suggesting that
\[
p (1-p)\geq \frac{N}{cT}\left(1- \frac{N}{cT}\right) \geq \frac{N}{cT}\left(1- \frac{1}{c}\right) \geq \frac{N}{cT}\left(\frac{c-1}{c}\right) \geq \frac{N}{cT}\cdot \frac{1}{2}.
\]
Plugging this into the right-hand-side of Inequality~\eqref{eq:prea},
\begin{align}
 \frac{1}{2T}\frac{N}{2p(1-p)} \leq \frac{N}{4T} \cdot \frac{1}{p(1-p)} \leq \frac{N}{4T} \cdot \frac{2c T}{N}  = \frac{c}{2}.
\end{align}
Having observed that the left-hand-side of Inequality~\eqref{eq:m,bnhjikw} is bounded by a constant w.r.t. $T$, we complete the proof by plugging this constant into Inequality~\eqref{thm: mp uni lower-bound 2}.
\end{proof}

\section{Omitted Proofs from Section~\ref{sec:nudge}}\label{appendix:nudge}

\begin{proof}[Proof of Proposition~\ref{prop G less than M}]
We prove the claim by induction over $\tau$. The first round in the excursion $D(t)$ and our base case is $\tau=\underline{t}+1$. Since the rewards are in the $[0,1]$ interval and $G_i^{\underline{t}} \leq 1$, we know that $G_i^{\underline{t}+1} \leq 2 = M_i^{\underline{t}+1}$.

Next, assume the claim holds for $\tau$; thus, $G_i^\tau \leq M_i^\tau$. Recall that we are guaranteed that $\tau \in D(t)$. Without loss of generality, assume that at time $\tau$ agents are ordered lexicographically. Particularly,  ${\sigma^\tau(i)}=i,{\sigma^\tau(i+1)}=i+1$  and $G_i^\tau = R^\tau_{\sigma^\tau(i)}- R^\tau_{\sigma^\tau(i+1)}= R^\tau_i- R^\tau_{i+1}$. 
Next, observe that 
\begin{equation}\label{eq:asdgndsfjghm}
    R^{\tau+1}_{\sigma^{\tau+1}(i)} = \min_{j\in [i]}\left\{ 
    R^{\tau}_{j}+r^{\tau+1}_{j} 
    \right\}\leq R^{\tau}_{i}+r^{\tau+1}_{i}.
\end{equation}
Inequality~\eqref{eq:asdgndsfjghm} holds due to our assumption that the rewards are ordered according to agent indices at round $\tau$.  and since no agent in the set $[N]\setminus[i]$ could obtain a higher cumulative reward that agents $[i]$ at round $\tau + 1$ since all rewards are bounded by 1 and $G^\tau_i > 1$. Similarly,
\begin{equation}\label{eq:dsasdhhgtnt}
    R^{\tau+1}_{\sigma^{\tau+1}(i+1)} = \max_{j\in [N]\setminus[i]}\left\{ 
    R^{\tau}_{j}+r^{\tau+1}_{j} 
    \right\}\geq R^{\tau}_{i+1}+r^{\tau+1}_{i+1}.
\end{equation}
Combining Inequalities~\eqref{eq:asdgndsfjghm} and~\eqref{eq:dsasdhhgtnt}, we derive that
\begin{align*}
G_i^{\tau+1} &= R^{\tau+1}_{\sigma^{\tau+1}(i)} - R^{\tau+1}_{\sigma^{\tau+1}(i+1)} \leq R^{\tau}_{i}+r^{\tau+1}_{i} - R^{\tau}_{i+1}-r^{\tau+1}_{i+1} \\
& =G_i^\tau + r^{\tau+1}_{i} - r^{\tau+1}_{i+1} \leq M_i^\tau + r^{\tau+1}_{i} - r^{\tau+1}_{i+1} \\
&= M_i^\tau + r^{\tau+1}_{(i)} - r^{\tau+1}_{(i+1)} = M_i^{\tau+1},
\end{align*}
where we have used the inductive assumption and the fact that nudged arrival order sorts agents in a non-increasing order of rewards (Algorithm~\ref{alg: sugg arr}, Line~\ref{line:mapping}). This completes the proof of Proposition~\ref{prop G less than M}.
\end{proof}

\begin{proof}[Proof of Proposition~\ref{prop: sugg-m concentration}]
The recursive definition of $M^\tau$ implies that for every $\tau \in D(t)$, $M^\tau = 2+ \sum_{n= \underline{t}+2 }^{\tau} r^n_{(i)}-r^n_{(i+1)}$; thus, 
\begin{align}\label{eq:fghbdfgh}
\prb{M^t > n}  = \prb{\sum_{l= \underline{t}+2 }^{\tau} r^l_{(i)}-r^l_{(i+1)}> n-2}
\end{align}
Next, let $B^l$ denote the event that $r^l_{(i)}-r^l_{(i+1)} \neq 0$. Furthermore, let $B(\tau)$ denote the (random) set of rounds for which the event $B^l$ occurs between $\underline{t}+2$ and $\tau$. That is,
\[
B(\tau) = \{ l\mid \underline{t}+2 \leq l \leq \tau, \ind{B^l} \}
\]
As a result, due to Property~\ref{prop:nudge} and the definition of $\tdif$ in Equation~\eqref{eq def tdif},
\begin{equation}\label{eq:sgdjfndb}
%\E{r^l_{(i)}-r^l_{(i+1)}} = \E{r^l_{(i)}-r^l_{(i+1)} \mid B^l}\prb{B^l} \leq 
\E{r^l_{(i)}-r^l_{(i+1)} \mid B^l} \leq - \delta \tdif.
\end{equation}
Rewriting Equation~\eqref{eq:fghbdfgh},
\begin{align}\label{eq:hbngaersd}
\prb{M^t > n}  &= \prb{\sum_{l \in B(\tau)} r^l_{(i)}-r^l_{(i+1)}> n-2} \nonumber \\
& =\sum_{b\subseteq \{\underline{t}+2,\dots \tau \}}\prb{\sum_{l \in b} r^l_{(i)}-r^l_{(i+1)}> n-2 \mid  B(\tau) = b}\prb{B(\tau) = b} \nonumber\\
& \stackrel{*}{=} \sum_{b\subseteq \{\underline{t}+2,\dots \tau \}, \abs{b}\geq n-2}\prb{\sum_{l \in b} r^l_{(i)}-r^l_{(i+1)}> n-2 \mid  B(\tau) = b}\prb{B(\tau) = b} \nonumber\\
& \leq \max_{b\subseteq \{\underline{t}+2,\dots \tau \}, \abs{b}\geq n-2} \prb{\sum_{l \in b} r^l_{(i)}-r^l_{(i+1)}> n-2 \mid  B(\tau) = b} \nonumber \\
& = \max_{b\subseteq \{\underline{t}+2,\dots \tau \}, \abs{b}\geq n-2} \prb{\sum_{l \in b} r^l_{(i)}-r^l_{(i+1)} + \delta \tdif \abs{b}> n-2+\delta \tdif \abs{b} \mid  B(\tau) = b},
\end{align}
where the change in the set over which we sum in $*$ follows since $\abs{r^l_{(i)}-r^l_{(i+1)}}\leq 1$ almost surely. Striving to bound the above, notice that, conditioned on $B(\tau) = b$, $\sum_{l \in b} \left( r^l_{(i)}-r^l_{(i+1)} + \delta \tdif \right)$ forms a super-martingale. Using Azuma-Hoeffding inequality,
\begin{align*}
\textnormal{Inequality }\eqref{eq:hbngaersd} &\leq \max_{b\subseteq \{\underline{t}+2,\dots \tau \}, \abs{b}\geq n-2} \exp\left\{-\frac{(n-2+\delta \tdif \abs{b})^2}{2 \sum_{l\in b} (1+\delta \tdif)^2 }\right\} \nonumber \\
& \stackrel{\delta \tdif \leq 1}{\leq}  \max_{b\subseteq \{\underline{t}+2,\dots \tau \}, \abs{b}\geq n-2} \exp\left\{-\frac{(n-2)^2+(n-2)(\delta \tdif \abs{b})+(\delta \tdif \abs{b})^2}{8\abs{b}}\right\} \nonumber
\\
%& =  \max_{b\subseteq \{\underline{t}+2,\dots \tau \}, \abs{b}\geq n-2} \exp\left\{-\frac{(n-2)^2+(n-2)(\delta \tdif \abs{b})+(\delta \tdif \abs{b})^2}{8\abs{b}}\right\} \nonumber \\
& =  \max_{b\subseteq \{\underline{t}+2,\dots \tau \}, \abs{b}\geq n-2} \exp\Bigg\{-\frac{(n-2)(\delta \tdif)}{8}-\underbrace{\frac{(n-2)^2+(\delta \tdif \abs{b})^2}{8\abs{b}}}_{\geq 0}\Bigg\} \nonumber \\
& \leq  \max_{b\subseteq \{\underline{t}+2,\dots \tau \}, \abs{b}\geq n-2} \exp\left\{\frac{-(n-2)\delta \tdif}{8}\right\} \nonumber \\
& = \exp\left\{-\frac{(n-2)(\delta \tdif)}{8}\right\}.
\end{align*}
This completes the proof of Proposition~\ref{prop: sugg-m concentration}.
\end{proof}

\section{Models Captured by the Nudged Arrival Property}\label{appendix:nudge-models}
%\omer{I do not feel comfortable with this...}
\subsection*{Mallows Model~\cite{mallows1957non}}

The Mallows model prioritizes rankings close to a reference order \( \sigma \). The probability of a sampled ranking \( \pi \) is proportional to \( e^{-\beta d(\pi, \sigma)} \), where \( d(\pi, \sigma) \) is the Kendall's tau distance between \( \pi \) and \( \sigma \), and \( \beta \geq 0 \) is the concentration parameter. For any pair of agents \( i, j \) such that \( \sigma^{-1}(i) < \sigma^{-1}(j) \), the probability that \( i \) precedes \( j \) satisfies (following the detailed argument of \cite[Section 2]{lu2014effective}):
\[
\Pr_{\pi \sim \text{Mallows}}(\pi^{-1}(i) < \pi^{-1}(j)) = \frac{e^{-\beta}}{1 + e^{-\beta}}.
\]
By arithmetic manipulation we get $\delta = \frac{1 - e^{-\beta}}{1 + e^{-\beta}}$.

\subsection*{Plackett-Luce Model~\cite{marden1996analyzing}}
In the Plackett-Luce model, each agent \( i \) is assigned a positive score \( w_i > 0 \), and the probability of observing a ranking \( \pi \) is given by:
\[
\Pr(\pi) = \prod_{k=1}^{N} \frac{w_{\pi(k)}}{\sum_{j=k}^{N} w_{\pi(j)}}.
\]
For any pair \( i, j \), the probability that \( i \) precedes \( j \) is:
\[
\Pr(\pi^{-1}(i) < \pi^{-1}(j)) = \frac{w_i}{w_i + w_j},
\]
which holds since the model satisfies Luce's choice axiom, which guarantees the independence of the pairwise ranking probabilities from the presence of other options \cite{luce1959individual}. Thus, by setting the scores such that \( \frac{w_i}{w_i + w_j} \geq \frac{1+\delta}{2} \) for every $i$ so that $i$ precedes $j$ in the optimal permutation, the nudged arrival property is satisfied. One way to do it is at each round $t$, recursively set $w_{\pi(1)}^{t} = 1, w_{\pi(i+1)}^{t} = \frac{1+\delta}{1 - \delta} w_{\pi(i)}^{t}$. We thus assume that the weights are not global across rounds but are round-dependent and are adjusted based on the accumulated rewards. This can be interpreted as either the designer nudging different agents more forcefully, or alternatively, in a behavioral approach, users that benefited more from the system in the past are willing to cooperate more with its nudges. 
%\omer{how do we 'set' $w_i$?}

% Bradley-Terry is irrelvant, it's only for n=2 in and of itself. Its importance comes from relation to Mallows and Placket-Luce, but we discuss them independently. 

%\subsection*{Bradley-Terry Model}

%This model directly models pairwise comparisons. Each agent \( i \) is assigned a latent score \( w_i > 0 \). The probability that \( i \) is ranked above \( j \) is:
%\[
%\Pr(\pi^{-1}(i) < \pi^{-1}(j)) = \frac{w_i}{w_i + w_j}.
%\]
%By choosing \( \frac{w_i}{w_i + w_j} = \frac{1+\delta}{2} \), the Bradley-Terry model adheres to the nudged arrival property.

% Noisy sorting is too general, there is no direct implication, though it could be relevant in some way. 

%\subsection*{Noisy Sorting Models}

%Noisy sorting models introduce randomness to the reference order \( \sigma \). The probability of sampling a ranking \( \pi \) decreases as its Kendall tau distance from \( \sigma \) increases. For a pair \( i, j \) with \( \sigma^{-1}(i) < \sigma^{-1}(j) \), the pairwise probability is:
%\[
%\Pr(\pi^{-1}(i) < \pi^{-1}(j)) \geq \frac{1+\delta}{2},
%\]
%provided the noise level is appropriately low. This ensures compliance with the nudged arrival property.

\subsection*{Thurstone-Mosteller Model \cite{ThurstoneModel}}

In the Thurstone-Mosteller model, each agent \( i \) is assigned an independently drawn latent cardinal value \( v_i \sim \mathcal{N}(\mu_i, s^2) \). Thus, the probability that \( i \) is ranked above \( j \) is independent of all other draws besides the pairwise draws, and is exactly the probability that drawing from $i$'s normal variable exceeds drawing from $j$'s normal variable. The difference of two normal variables is normal by itself, with mean $\mu_i - \mu_j$ (the means difference) and variance $2s^2$ (the sum of variances), and we are interested in the probability that this variable is above $0$. We can normalize and shift the mean, and get that the probability that \( i \) is ranked above \( j \) is:
\[
\Pr(\pi^{-1}(i) < \pi^{-1}(j)) = \Phi\left(\frac{\mu_i - \mu_j}{\sqrt{2}s}\right),
\]
where \( \Phi \) is the CDF of the standard normal distribution $N(0,1)$. We can thus tune $\mu_i, \mu_j$ (with some globally-set $s$) so that for every $i,j$,  \( \Pr(\pi^{-1}(i) < \pi^{-1}(j)) \geq \frac{1+\delta}{2} \), and the nudged arrival property is satisfied. One way to do that is by calculating the constant shift of the mean $\delta\mu$ that satisfies \( \Pr(\pi^{-1}(i) < \pi^{-1}(j)) = \frac{1+\delta}{2}\), and have
\[
\mu_{\pi(i)}^{t} = \mu_{\pi(1)}^{t} + (i-1)\cdot \delta\mu. 
\]

%\omer{constant shifts of the means}
%---

%Together, these examples demonstrate that the nudged arrival property is sufficiently general to subsume a wide range of ranking models while providing a coherent and interpretable foundation for reasoning about orderings.

\section{Omitted Proofs from Section~\ref{sec:extensions}}
\newcommand{\at}[1]{a_{#1}^t}
\paragraph{Additional notation for this section} The analysis in this section makes extensive use of the notation $\at{(q)}$ for $q \in\{1,2\}$ and $t\in [T]$, denoting the arm pulled in session $i$ of round $t$. 
\begin{proof}[Proof of Theorem~\ref{thm:ef1evny+sw}]
The first part of the proof is given in the body of the paper; hence, we move to the second part, i.e., showing that $\sw = (1+\frac{1}{16})T$.

Fix any $t\in[T]$. We analyze the expected sum of rewards obtained in round $t$, $\E{\rt{(1)}+\rt{(2)}}$.
Notice that $\E{\rt{(1)}} = \E{a_1} = \frac{1}{2}$.
As for $\rt{(2)}$, we are uncertain about the arm the algorithm pulls, but can use total expectation:
\begin{align}\label{eq: ef11}
    \E{\rt{(2)}} & = \E{\rt{(2)} \mid \rt{(1)} > \frac{1}{2}}\cdot \prb{\rt{(1)} > \frac{1}{2}} +  \E{\rt{(2)} \mid \rt{(1)} \leq \frac{1}{2}}\cdot \prb{\rt{(1)} \leq \frac{1}{2}}
    \nonumber\\&= \frac{3}{8} + \frac{1}{2}\cdot \E{\rt{(2)} \mid \rt{(1)} \leq \frac{1}{2}},
\end{align}
where we have used Line~\ref{efclin:pull_a1_again} of $\efc$ for replacing $\E{\rt{(2)} \mid \rt{(1)} > \frac{1}{2}}$ with $\frac{3}{4}$, since arm $a_1$ is pulled for the second session as well. Simplifying the term $\E{\rt{(2)} \mid \rt{(1)} \leq \frac{1}{2}}$ and using the fact that $\E{\rt{(r)} \mid \rt{(1)} \leq \frac{1}{2}}=\frac{1}{4}$, we get,
\begin{align*}
    \E{\rt{(2)} \mid \rt{(1)} \leq \frac{1}{2}} &
    = \E{\rt{(2)} \mid \at{(2)} = \at{(1)}, \rt{(1)}} \cdot \prb{\at{(2)} = \at{(1)} \mid \rt{(1)} \leq \frac{1}{2}}
    \\& \qquad + \E{\rt{(2)} \mid \at{(2)} \neq \at{(1)}, \rt{(1)}} \cdot \prb{\at{(2)} \neq \at{(1)} \mid \rt{(1)} \leq \frac{1}{2}}
    \\&= \frac{1}{4} \cdot \prb{\at{(2)} = \at{(1)} \mid \rt{(1)} \leq \frac{1}{2}} + \frac{1}{2} \cdot \prb{\at{(2)} \neq \at{(1)} \mid \rt{(1)} \leq \frac{1}{2}}
    \\& = \frac{1}{4} + \frac{1}{4}\cdot \prb{\at{(2)} \neq \at{(1)} \mid \rt{(1)} \leq \frac{1}{2}}.
\end{align*}
Consequently, all that is left is to understand how often $\efc$ pulls the second arm when the first arm yields a low reward. Using Proposition~\ref{prop:ef1 open arm}, we obtain
\[
\E{\rt{(2)} \mid \rt{(1)} \leq \frac{1}{2}} =
\frac{1}{4} + \frac{1}{4} \cdot \prb{\at{(2)} \neq \at{(1)} \mid \rt{(1)} \leq \frac{1}{2}} \geq \frac{3}{8}.
\]
Using the above inequality and Equation~\eqref{eq: ef11}, we get
\[
\E{\rt{(2)}} \geq \frac{3}{8} + \frac{1}{2}\cdot \frac{3}{8}= \frac{9}{16}.
\]
Since this holds for any arbitrary $t$, by summing over all rounds, we get
\[
SW(EF1) = \E{\sum_{t=1}^{T}{ \rt{(1)} + \rt{(2)} }} = \sum_{t=1}^{T}{ \E{ \rt{(1)} + \rt{(2)} } } \geq \sum_{t=1}^{T}{ \frac{1}{2} + \frac{9}{16}} = \left( 1+ \frac{1}{16} \right)T.
\]
This concludes the proof of Theorem~\ref{thm:ef1evny+sw}.
\end{proof}

\begin{proof}[Proof of Proposition~\ref{prop:ef1 uni dominance}]
We prove the claim with induction over the round index $t$.
The base step, i.e., $t=0$, is straightforward. Fix any $x\in [0,1]$, and observe that
\[
\prb{\env^0 \leq x} = \prb{0 \leq x} = 1 \geq x.
\]
We move forward to the inductive step. Assume the claim holds for round $t-1$, and let us prove the claim for~$t$. First, notice that if $\at{(2)} = \at{(1)}$, then $\env^t=\env^{t-1}$.
Based on the inductive assumption, the distribution of $\env^{t-1}$ is stochastically dominated by $\uni{0,1}$, and thus so is the distribution of $\env^t$.

Otherwise, from here on we assume $\at{(2)} \neq \at{(1)}$. We continue with an extensive case analysis. We define the following six events $A_1,\dots, A_6$. Each event consists of the conditions that cause the algorithm to pull a different arm in the second session and the outcome of that round:
\begin{align*}
    &A_1 := \left( \rt{(1)} \leq \frac{1}{2} \right) \wedge \left(R^{t-1}_{(1)} = R^{t-1}_{(2)}\right) \wedge  \left( R^{t-1}_{(1)} + \rt{(1)} \geq R^{t-1}_{(2)} + \rt{(2)} \right), \\
    &A_2 := \left( \rt{(1)} \leq \frac{1}{2} \right) \wedge \left(R^{t-1}_{(1)} = R^{t-1}_{(2)}\right) \wedge  \left( R^{t-1}_{(2)} + \rt{(2)} > R^{t-1}_{(1)} + \rt{(1)} \right), \\
    &A_3 := \left( \rt{(1)} \leq \frac{1}{2} \right) \wedge \left(R^{t-1}_{(1)} > R^{t-1}_{(2)} \right) \wedge \left( R^{t-1}_{(1)} - R^{t-1}_{(2)} \leq 1 - \rt{(1)} \right) \wedge \left( R^{t-1}_{(1)} + \rt{(1)} \geq R^{t-1}_{(2)} + \rt{(2)} \right), \\
    &A_4 := \left( \rt{(1)} \leq \frac{1}{2} \right) \wedge \left(R^{t-1}_{(1)} > R^{t-1}_{(2)} \right) \wedge \left( R^{t-1}_{(1)} - R^{t-1}_{(2)} \leq 1 - \rt{(1)} \right) \wedge \left( R^{t-1}_{(2)} + \rt{(2)} > R^{t-1}_{(1)} + \rt{(1)} \right), \\
    &A_5 := \left( \rt{(1)} \leq \frac{1}{2} \right) \wedge \left( R^{t-1}_{(2)} > R^{t-1}_{(1)} \right) \wedge  \left( R^{t-1}_{(2)} - R^{t-1}_{(1)} \leq \rt{(1)} \right) \wedge \left( R^{t-1}_{(1)} + \rt{(1)} \geq R^{t-1}_{(2)} + \rt{(2)} \right), \\
    &A_6 := \left( \rt{(1)} \leq \frac{1}{2} \right) \wedge \left( R^{t-1}_{(2)} > R^{t-1}_{(1)} \right) \wedge  \left( R^{t-1}_{(2)} - R^{t-1}_{(1)} \leq \rt{(1)} \right) \wedge \left( R^{t-1}_{(2)} + \rt{(2)} > R^{t-1}_{(1)} + \rt{(1)} \right).
\end{align*}
Notice that 
\begin{observation}\label{obs:partition}
Given $\at{(2)} \neq \at{(1)}$, the events $A_1, \dots, A_6$ partition the space of all options for $R^{t-1}_{(1)}, R^{t-1}_{(2)}, r^t_{(1)}, r^t_{(2)}$.
\end{observation}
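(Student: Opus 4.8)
The plan is to reduce the claim to an elementary case analysis on the sign of $R^{t-1}_{(1)}-R^{t-1}_{(2)}$, after first pinning down exactly which quadruples $(R^{t-1}_{(1)},R^{t-1}_{(2)},\rt{(1)},\rt{(2)})$ are consistent with $\at{(2)}\neq\at{(1)}$. Recall we are analyzing $\ef$, i.e., $\efc$ with $C=1$. First I would read off Algorithm~\ref{alg:efc}: the second session pulls a \emph{different} arm exactly when (i) $\rt{(1)}=x^t_1\leq\frac12$, so we enter the ``else'' of Line~\ref{efclin:else}, and (ii) the test in Line~\ref{efclin:pull_a1_cond} fails, i.e., $\abs{R^{t-1}_{(1)}+\rt{(1)}-R^{t-1}_{(2)}-r}\leq 1$ for every $r\in[0,1]$. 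Writing $d=R^{t-1}_{(1)}-R^{t-1}_{(2)}$ and $c=d+\rt{(1)}$, the affine map $r\mapsto \abs{c-r}$ on $[0,1]$ is maximized at an endpoint, so condition (ii) is equivalent to $\max\{\abs{c},\abs{c-1}\}\leq 1$, i.e., to $0\leq c\leq 1$, i.e.,
\[
-\rt{(1)}\;\leq\; R^{t-1}_{(1)}-R^{t-1}_{(2)}\;\leq\; 1-\rt{(1)} .
\]
So the region $\{\at{(2)}\neq\at{(1)}\}$ is precisely the set of quadruples satisfying $\rt{(1)}\leq\frac12$ together with this two-sided bound.

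Next I would handle the three cases of the trichotomy $R^{t-1}_{(1)}=R^{t-1}_{(2)}$, $R^{t-1}_{(1)}>R^{t-1}_{(2)}$, $R^{t-1}_{(1)}<R^{t-1}_{(2)}$, which are mutually exclusive and exhaustive. In the equality case both inequalities above hold automatically (since $\rt{(1)}\in[0,1]$), so no further constraint survives — this matches the hypotheses of $A_1,A_2$. In the case $R^{t-1}_{(1)}>R^{t-1}_{(2)}$ the lower bound is automatic and the upper bound becomes $R^{t-1}_{(1)}-R^{t-1}_{(2)}\leq 1-\rt{(1)}$, which is exactly the extra hypothesis shared by $A_3,A_4$. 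Symmetrically, in the case $R^{t-1}_{(2)}>R^{t-1}_{(1)}$ the upper bound is automatic and the lower bound becomes $R^{t-1}_{(2)}-R^{t-1}_{(1)}\leq \rt{(1)}$, the extra hypothesis shared by $A_5,A_6$. Hence, restricted to $\{\at{(2)}\neq\at{(1)}\}$, the disjunction of the hypotheses of $\{A_1,A_2\}$, $\{A_3,A_4\}$, $\{A_5,A_6\}$ is the whole region, and these three groups are pairwise disjoint because their first clauses ($=$, $>$, $<$) are.

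Finally, within each group the two events are distinguished only by the complementary conditions $R^{t-1}_{(1)}+\rt{(1)}\geq R^{t-1}_{(2)}+\rt{(2)}$ versus $R^{t-1}_{(2)}+\rt{(2)}>R^{t-1}_{(1)}+\rt{(1)}$, which trivially partition the remaining possibilities; combining this with the previous paragraph shows $A_1,\dots,A_6$ partition the region $\{\at{(2)}\neq\at{(1)}\}$. I do not anticipate a genuine obstacle: the proof is essentially bookkeeping, and the only point demanding a little care is the equivalence in the first paragraph — recognizing that ``no realization $r$ of $\rt{(2)}$ pushes the envy above $C=1$'' collapses, after eliminating $r$, to the two-sided bound on $R^{t-1}_{(1)}-R^{t-1}_{(2)}$, and that the equality case lies vacuously inside it.
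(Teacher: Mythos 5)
Your argument is correct and complete. Note that the paper itself offers no proof of this observation: it is asserted with ``Notice that'' and the case analysis that follows simply uses the six events, so there is nothing to compare your derivation against. Your reduction is exactly the content implicit in the event definitions --- eliminating the quantifier over $r$ in Line~\ref{efclin:pull_a1_cond} (for $C=1$) to get the two-sided bound $-\rt{(1)}\leq R^{t-1}_{(1)}-R^{t-1}_{(2)}\leq 1-\rt{(1)}$, observing that one side is vacuous in each branch of the trichotomy on $R^{t-1}_{(1)}-R^{t-1}_{(2)}$ (which explains why $A_1,A_2$ carry no extra bound, $A_3,A_4$ carry only the upper one, and $A_5,A_6$ only the lower one), and splitting each branch by the complementary pair $R^{t-1}_{(1)}+\rt{(1)}\geq R^{t-1}_{(2)}+\rt{(2)}$ versus its strict negation --- and it verifies both exhaustiveness and pairwise disjointness as required.
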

Equipped with Observation~\ref{obs:partition}, we turn to analyze $\env^t$ under $A_1, \dots, A_6$. Fix any arbitrary $x \in [0,1]$.
%In each case $i$, $i\in \{1,\dots,6\}$ we condition the random variables $\env_t,R^{t-1}_1, R^{t-1}_2, r^t_{(1)}, r^t_{(2)}$ on $A_i$.
%; hence, under this event, $r^t_{(1)} \sim \uni{0,\frac{1}{2}}$ 
\begin{itemize}
    \item Case $A_1$.
    Under the conditions of event $A_1$ we have that $\rt{(1)} \geq  \rt{(2)}$.
    Recall that $\rt{(2)} \sim \uni{0,1}$, but considering the latter we know that $ \rt{(2)}\mid A_1 \sim \uni{0,\rt{(1)}}$.
    Given that $R^{t-1}_{(1)} + \rt{(1)} \geq R^{t-1}_{(2)} + \rt{(2)}$, we know the envy at the end of round $t$ is exactly $\env^t = R^{t-1}_{(1)} + \rt{(1)} - R^{t-1}_{(2)} - \rt{(2)} = \rt{(1)} -\rt{(2)}$;
    hence, $\env^{t}\mid A_1 \sim \uni{\rt{(1)} - \rt{(1)},\rt{(1)} - 0}$, i.e., $\env^{t}\mid A_1 \sim \uni{0,\rt{(1)}}$. Therefore,
    \begin{align*}
        \prb{\env^t \leq x \mid A_1} & = \prb{\uni{0, r^t_{(1)}}\leq x \mid r^t_{(1)} \leq \frac{1}{2}} \geq \prb{\uni{0, \frac{1}{2}} \leq x}
        \\& \geq \prb{\uni{0, 1} \leq x} = x.
    \end{align*}
    %Because $\rt{(1)} \leq \frac{1}{2}$, in the worst case $\env^{t}\mid A_1 \sim \uni{0,\frac{1}{2}}$.

    \item Case $A_2$.
    This case is similar to that of $A_1$, only now $ \rt{(2)}\mid A_2 \sim \uni{\rt{(1)}, 1}$ and $\env^t = \rt{(2)} -\rt{(1)}$, resulting with $\env^{t}\mid A_2 \sim \uni{\rt{(1)} - \rt{(1)},1 -\rt{(1)}}$, i.e., $\env^{t}\mid A_2 \sim \uni{0,1 -\rt{(1)}}$.
    Hence,
    \begin{align*}
        \prb{\env^t \leq x \mid A_2 } \geq \prb{\uni{0, 1} \leq x} = x.
    \end{align*}

    \item Case $A_3$.
    Under the conditions of $A_3$ we have that the envy after $t$ rounds is exactly $\env^t = R^{t-1}_{(1)} + \rt{(1)} -\left(R^{t-1}_{(2)} + \rt{(2)}\right)$.
    Since $R^{t-1}_{(1)} + \rt{(1)} \geq R^{t-1}_{(2)} + \rt{(2)}$, $\rt{(2)}$ is now a uniform random variable between $0$ and the minimum between $\left\{1,  R^{t-1}_{(1)} + \rt{(1)} - R^{t-1}_{(2)}\right\}$.
    Due to the guarantee $R^{t-1}_{(1)} - R^{t-1}_{(2)} \leq 1 - \rt{(1)}$ we can finally see that
    \[
    - \rt{(2)} \mid A_3 \sim \uni{-\left(R^{t-1}_{(1)} + \rt{(1)} - R^{t-1}_{(2)}\right), 0};
    \]
    thus,
    \[
    \env^t \mid A_3 \sim \uni{R^{t-1}_{(1)} + \rt{(1)} -R^{t-1}_{(2)} -\left(R^{t-1}_{(1)} + \rt{(1)} - R^{t-1}_{(2)}\right),  R^{t-1}_{(1)} + \rt{(1)} -R^{t-1}_{(2)}}.
    \]
    Finally,
    \begin{align*}
        \prb{\env^t \leq x | A_3} & = \prb{\uni{0, R^{t-1}_{(1)} - R^{t-1}_{(2)}+ \rt{(1)}} \leq x | A_3}
        \\& \geq \prb{\uni{0, 1} \leq x} = x.
    \end{align*}

    \item Case $A_4$.
    Under the conditions of $A_4$ we have that $\rt{(2)}$ is a uniform random variable distributed between $R^{t-1}_{(1)} - R^{t-1}_{(2)} + \rt{(1)}$ and $1$.
    The envy after round $t$ is exactly $R^{t-1}_{(2)} +\rt{(2)}- R^{t-1}_{(1)}-\rt{(1)}$ and thus it is a uniform random variable between $R^{t-1}_{(2)} - R^{t-1}_{(1)}-\rt{(1)} +\left( R^{t-1}_{(1)} - R^{t-1}_{(2)} + \rt{(1)} \right)$ and $R^{t-1}_{(2)} - R^{t-1}_{(1)}-\rt{(1)} +1$.
    I.e., $\env^t \mid A_4 \sim \uni{0,R^{t-1}_{(2)} - R^{t-1}_{(1)}-\rt{(1)} +1}$. Recall $A_4$ suggests $0 > R^{t-1}_{(2)} -R^{t-1}_{(1)} $; finally,
    \begin{align*}
        \prb{\env^t \leq x | A_4} & = \prb{\uni{0, R^{t-1}_{(2)} - R^{t-1}_{(1)}-\rt{(1)} +1} \leq x | A_4}  \\& \geq
        \prb{\uni{0, 0 - \rt{(1)} +1} \leq x } \geq \prb{\uni{0, 1} \leq x} = x,
    \end{align*}
    as $\rt{(1)}\geq 0$.
    Note that under $A_4$ it holds that $R^{t-1}_{(1)} - R^{t-1}_{(2)} \leq 1 - \rt{(1)} $ and thus $0 \leq R^{t-1}_{(2)} - R^{t-1}_{(1)}-\rt{(1)} +1$ almost surely.
    
    \item Case $A_5$.
    Under the conditions of $A_5$, similarly to $A_3$, the envy at the end of round $t$ is exactly $\env^t = R^{t-1}_{(1)} + \rt{(1)} - R^{t-1}_{(2)} + \rt{(2)}$ and
    \[
    - \rt{(2)} \mid A_5 \sim \uni{-\left(R^{t-1}_{(1)} + \rt{(1)} - R^{t-1}_{(2)}\right), 0};
    \]
    thus,
    \[
    \env^t \mid A_5 \sim \uni{R^{t-1}_{(1)} + \rt{(1)} -R^{t-1}_{(2)} -\left(R^{t-1}_{(1)} + \rt{(1)} - R^{t-1}_{(2)}\right),  R^{t-1}_{(1)} + \rt{(1)} -R^{t-1}_{(2)}}.
    \]
    Finally,
    \begin{align*}
        \prb{\env^t \leq x | A_5} & = \prb{\uni{0, R^{t-1}_{(1)} - R^{t-1}_{(2)}+ \rt{(1)}} \leq x | A_5}
        \\& \geq \prb{\uni{0, 0 + \rt{(1)}} \leq x | A_5}
        \geq \prb{\uni{0, \frac{1}{2}} \leq x} 
        \\& \geq \prb{\uni{0, 1} \leq x} = x.
    \end{align*}

    \item Case $A_6$.
    Under the conditions of $A_6$, similarly to $A_4$, the envy at the end of round $t$ is exactly $\env^t = R^{t-1}_{(2)} + \rt{(2)} - R^{t-1}_{(1)} - \rt{(1)}$ and
    \[
    \rt{(2)} \mid A_6 \sim \uni{R^{t-1}_{(1)}+\rt{(1)} -R^{t-1}_{(2)}, 1};
    \]
    thus,
    \[
    \env^t \mid A_6 \sim \uni{R^{t-1}_{(2)} - R^{t-1}_{(1)} - \rt{(1)} + R^{t-1}_{(1)}+\rt{(1)} -R^{t-1}_{(2)}, R^{t-1}_{(2)} - R^{t-1}_{(1)} - \rt{(1)} + 1}.
    \]
    Finally,
    \begin{align*}
        \prb{\env^t \leq x | A_6} & = \prb{\uni{0,R^{t-1}_{(2)} - R^{t-1}_{(1)} - \rt{(1)} + 1} \leq x | A_6}
        \\& \geq \prb{\uni{0,1} \leq x} = x,
    \end{align*}
    where the inequality holds due to $R^{t-1}_{(2)} - R^{t-1}_{(1)} \leq \rt{(1)}$.    
\end{itemize}
We have shown that the inductive step holds under all cases; thereby, the proof of Proposition~\ref{prop:ef1 uni dominance} is complete.
\end{proof}
\begin{proof}[Proof of Proposition~\ref{prop:ef1 open arm}]
We prove the statement using case analysis. We partition the space of events $a^t_{(2)} \neq a^t_{(1)}$ conditioning on $\rt{(1)} \leq \frac{1}{2}$:
\begin{align*}
    &B_1 := R^{t-1}_{(1)} = R^{t-1}_{(2)} \\
    &B_2 := \left(R^{t-1}_{(1)} > R^{t-1}_{(2)} \right) \wedge \left( R^{t-1}_{(1)} - R^{t-1}_{(2)} \leq 1 - \rt{(1)} \right) \\
    &B_3 := \left( R^{t-1}_{(2)} > R^{t-1}_{(1)} \right) \wedge  \left( R^{t-1}_{(2)} - R^{t-1}_{(1)} \leq \rt{(1)} \right).
\end{align*}
Therefore
\begin{align*}
    \prb{a^t_{(2)} \neq a^t_{(1)} \mid \rt{(1)} \leq \frac{1}{2}} & = \prb{ B_1\vee B_2 \vee B_3  \mid \rt{(1)} \leq \frac{1}{2}}
    \\& = \prb{B_1 \mid \rt{(1)} \leq \frac{1}{2} } + \prb{B_2 \mid \rt{(1)} \leq \frac{1}{2} } + \prb{B_3 \mid \rt{(1)} \leq \frac{1}{2} }.
\end{align*}
We prove each part separately, beginning with the event $B_1$.

Since the distributions of the rewards are continuous, event $B_1$ occurs if and only if until round $t$ both agents receive the same rewards from the same arm.
In this case, the algorithm pulls the same arm in both sessions if it yields a reward greater than $\frac{1}{2}$.
Therefore, we must have
\[
    \prb{r_{(1)}^\tau  = r_{(2)}^\tau} = \prb{r_{(1)}^\tau  > \frac{1}{2}}
\]
for all $\tau < t$; hence,
\begin{align}\label{B1}
\prb{B_1 \mid \rt{(1)} > \frac{1}{2}} =\prb{B_1} = \prb{\forall \tau<t : r_{(2)}^\tau > \frac{1}{2}} = \left(\frac{1}{2}\right)^{t-1}.
\end{align}

Next, we examine event $B_2$. Using Bayes formula,
\begin{align*}
    &\prb{B_2 \mid \rt{(1)} \leq \frac{1}{2} } =
    \prb{ \left(R^{t-1}_{(1)} > R^{t-1}_{(2)} \right) \wedge \left( R^{t-1}_{(1)} - R^{t-1}_{(2)} \leq 1 - \rt{(1)} \right) \mid \rt{(1)} \leq \frac{1}{2} }
    \\& = \prb{R^{t-1}_{(1)} - R^{t-1}_{(2)} \leq 1 - \rt{(1)} \mid R^{t-1}_{(1)} > R^{t-1}_{(2)}, \rt{(1)} \leq \frac{1}{2} }\cdot \prb{R^{t-1}_{(1)} > R^{t-1}_{(2)} \mid \rt{(1)} \leq \frac{1}{2}}.
\end{align*}
Notice that given $\rt{(1)} \leq \frac{1}{2}$, the random variable $1- \rt{(1)}$ is $\uni{\frac{1}{2}, 1}$ distributed. Similar arguments holds for $\rt{(1)} \mid \rt{(1)} > \frac{1}{2}$; thus, 
\begin{align*}
&\prb{R^{t-1}_{(1)} - R^{t-1}_{(2)} \leq 1 - \rt{(1)} \mid R^{t-1}_{(1)} > R^{t-1}_{(2)}, \rt{(1)} \leq \frac{1}{2} }
\\&=
\prb{R^{t-1}_{(1)} - R^{t-1}_{(2)} \leq \rt{(1)} \mid R^{t-1}_{(1)} > R^{t-1}_{(2)}, \rt{(1)} > \frac{1}{2} }.
\end{align*}
Simplifying the above,
\begin{align}\label{B2}
    & \prb{B_2 \mid \rt{(1)} \leq \frac{1}{2} }
    \nonumber\\&
    = \prb{R^{t-1}_{(1)} - R^{t-1}_{(2)} \leq \rt{(1)} \mid R^{t-1}_{(1)} > R^{t-1}_{(2)}, \rt{(1)} > \frac{1}{2} }\cdot \prb{R^{t-1}_{(1)} > R^{t-1}_{(2)} \mid \rt{(1)} \leq \frac{1}{2}}\nonumber \\
    & = \prb{R^{t-1}_{(1)} - R^{t-1}_{(2)} \leq \rt{(1)} \mid R^{t-1}_{(1)} > R^{t-1}_{(2)}, \rt{(1)} > \frac{1}{2} }\cdot \prb{R^{t-1}_{(1)} > R^{t-1}_{(2)}}
    .
\end{align}

As for event $B_3$, recall that the arrival order is uniform. As a result, $R^{t-1}_{(1)}, R^{t-1}_{(2)}$ are independent in $\rt{(1)}$. Leveraging this fact,
\begin{align}\label{B3}
    & \prb{B_3 \mid \rt{(1)} \leq \frac{1}{2} } =
    \prb{\left( R^{t-1}_{(2)} > R^{t-1}_{(1)} \right) \wedge  \left( R^{t-1}_{(2)} - R^{t-1}_{(1)} \leq \rt{(1)} \right) \mid \rt{(1)}> \frac{1}{2}}\nonumber \\
    & =  \prb{\left( R^{t-1}_{(1)} > R^{t-1}_{(2)} \right) \wedge  \left( R^{t-1}_{(1)} - R^{t-1}_{(2)} \leq \rt{(1)} \right) \mid \rt{(1)}> \frac{1}{2}}\nonumber \\
    & = \prb{R^{t-1}_{(1)} - R^{t-1}_{(2)} \leq \rt{(1)} \mid R^{t-1}_{(1)} > R^{t-1}_{(2)}, \rt{(1)} \leq \frac{1}{2} }\cdot \prb{R^{t-1}_{(1)} > R^{t-1}_{(2)} \mid \rt{(1)} \leq \frac{1}{2}}\nonumber\\
    & = \prb{R^{t-1}_{(1)} - R^{t-1}_{(2)} \leq \rt{(1)} \mid R^{t-1}_{(1)} > R^{t-1}_{(2)}, \rt{(1)} \leq \frac{1}{2} }\cdot \prb{R^{t-1}_{(1)} > R^{t-1}_{(2)}},
\end{align}
where the last two equalities hold from the same arguments as in the analysis of event $B_2$. Combining Equalities \eqref{B2} and \eqref{B3}, we get
\begin{align*}
 \prb{B_2 \mid \rt{(1)} \leq \frac{1}{2} } & + \prb{B_3 \mid \rt{(1)} \leq \frac{1}{2} }\\ &= \prb{R^{t-1}_{(1)} > R^{t-1}_{(2)}} \cdot\left(
 \prb{R^{t-1}_{(1)} - R^{t-1}_{(2)}   \leq \rt{(1)} \mid R^{t-1}_{(1)} > R^{t-1}_{(2)}, \rt{(1)} > \frac{1}{2} } \right. \\& + \left.
  \prb{R^{t-1}_{(1)} - R^{t-1}_{(2)} \leq \rt{(1)} \mid R^{t-1}_{(1)} > R^{t-1}_{(2)}, \rt{(1)} \leq \frac{1}{2} }\right) \\ & = 
  \frac{1}{2} \left(1 - \frac{1}{2^{t-1}} \right) \cdot\left(
 \prb{R^{t-1}_{(1)} - R^{t-1}_{(2)}   \leq \rt{(1)} \mid R^{t-1}_{(1)} > R^{t-1}_{(2)}, \rt{(1)} > \frac{1}{2} } \right. \\& + \left.
  \prb{R^{t-1}_{(1)} - R^{t-1}_{(2)} \leq \rt{(1)} \mid R^{t-1}_{(1)} > R^{t-1}_{(2)}, \rt{(1)} \leq \frac{1}{2} }\right),
\end{align*}
where the last equality is due to Equation~\eqref{B1}. Next, notice that 
\begin{align*}
\frac{1}{2} \left(1 - \frac{1}{2^{t-1}} \right) &\left(
\prb{R^{t-1}_{(1)} - R^{t-1}_{(2)}   \leq \rt{(1)} \mid R^{t-1}_{(1)} > R^{t-1}_{(2)}, \rt{(1)} > \frac{1}{2} } \right. \\+ & \left.
\prb{R^{t-1}_{(1)} - R^{t-1}_{(2)} \leq \rt{(1)} \mid R^{t-1}_{(1)} > R^{t-1}_{(2)}, \rt{(1)} \leq \frac{1}{2} }\right)\\=
\frac{1}{2} \left(1 - \frac{1}{2^{t-1}} \right) &\left(
\prb{R^{t-1}_{(1)} - R^{t-1}_{(2)}   \leq \rt{(1)} \mid R^{t-1}_{(1)} > R^{t-1}_{(2)}, \rt{(1)} > \frac{1}{2} } \frac{\Pr\left(r_{(1)}^t > \frac{1}{2}\Big| R_1^{t-1} > R_2^{t-1} \right)}{\Pr\left(r_{(1)}^t > \frac{1}{2}\Big| R_1^{t-1} > R_2^{t-1} \right)} \right. \\+ & \left.
\prb{R^{t-1}_{(1)} - R^{t-1}_{(2)} \leq \rt{(1)} \mid R^{t-1}_{(1)} > R^{t-1}_{(2)}, \rt{(1)} \leq \frac{1}{2} } \frac{\Pr\left(r_{(1)}^t\leq \frac{1}{2}\Big| R_1^{t-1} > R_2^{t-1} \right)}{\Pr\left(r_{(1)}^t\leq \frac{1}{2}\Big| R_1^{t-1} > R_2^{t-1} \right)} \right)
\\ = \frac{1}{2}  \left(1 - \frac{1}{2^{t-1}} \right) &\cdot \frac{1}{\frac{1}{2}} \cdot 
\prb{R^{t-1}_{(1)} - R^{t-1}_{(2)} \leq \rt{(1)} \mid R^{t-1}_{(1)} > R^{t-1}_{(2)} },
\end{align*}
where the last equation is based on the law of total probability and the fact that 
\[
\prb{\rt{(1)} > \frac{1}{2} \mid R^{t-1}_{(1)} > R^{t-1}_{(2)}} = \prb{\rt{(1)} \leq \frac{1}{2} \mid R^{t-1}_{(1)} > R^{t-1}_{(2)}} = \frac{1}{2}.
\]
Combining the latter with Equation~\eqref{B1}, we have
\begin{align*}
\prb{a^t_{(2)} \neq a^t_{(1)} \mid \rt{(1)} \leq \frac{1}{2}} & = \left(\frac{1}{2}\right)^{t-1}+ \left(1 - \frac{1}{2^{t-1}} \right)  
\prb{R^{t-1}_{(1)} - R^{t-1}_{(2)} \leq \rt{(1)} \mid R^{t-1}_{(1)} > R^{t-1}_{(2)} }\\
& = \left(\frac{1}{2}\right)^{t-1}+ \left(1 - \frac{1}{2^{t-1}} \right)  
\prb{\abs{R^{t-1}_{(1)} - R^{t-1}_{(2)}} \leq \rt{(1)} \mid R^{t-1}_{(1)} > R^{t-1}_{(2)} }\\
& = \left(\frac{1}{2}\right)^{t-1}+ \left(1 - \frac{1}{2^{t-1}} \right) 
\prb{\env^{t-1} \leq \rt{(1)} }\geq \prb{\env^{t-1} \leq \rt{(1)} }.
\end{align*}
To finish the proof we use Proposition~\ref{prop:ef1 uni dominance}, which implies that
\begin{align*}
    \prb{\env^{t-1} \leq \rt{(1)} } & = \int_{0}^{1} \prb{\env^{t-1} \leq u}\cdot f_{\uni{0,1}} \,du
    \geq \int_{0}^{1} u\cdot f_{\uni{0,1}} \,du
    \\&=\E{\uni{0,1}} = \frac{1}{2};
\end{align*}
thus,
\begin{align*}
\prb{a^t_{(2)} \neq a^t_{(1)} \mid \rt{(1)} \leq \frac{1}{2}} \geq 
\prb{\env^{t-1} \leq \rt{(1)} } \geq
\frac{1}{2}.
\end{align*}
This concludes the proof of Proposition~\ref{prop:ef1 open arm}.
\end{proof}

\section{Average Envy}
\label{sec: avg envy}
In this section, we examine another way to define envy: The average reward disparity between the agents. We define the \emph{average envy}, denoted $\envavg^T$, as 
\[
\envavg^T = \frac{1}{\binom{N}{2}}\sum_{1\leq i<j\leq N}{\abs{\env_{i,j}^T}}.
\]
For the special cases of $N=2$ agents, the definition of maximal envy $\env$ and average envy $\envavg$ coincide.

Since $\envavg^t \leq \env^t$ for all $t$  almost surely, any upper bound on the maximal envy can be applied to the average envy. Particularly, Theorem~\ref{thm: uni upper-bound} provides an immediate upper bound on $\E{\envavg^T}$ of $O\left(\sqrt{\ln (N) \sum^T_{t=1} \var{\dift}} \right)$. Using a slightly more careful analysis, we can eliminate the $\sqrt{\ln{(N)}}$ factor.
\begin{proposition}\label{prop: avg upper-bound}
When executing any algorithm, it holds that
\[\E{\max_{1\leq t \leq T} \envavg^t (\uniord)} \leq 2\sqrt{\ln{(N)} \sum^{T}_{t=1}{\var{\dift}} }.\]
\end{proposition}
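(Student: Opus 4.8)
The plan is to bound $\max_t \envavg^t$ by an average, over agent pairs, of per‑pair running maxima, and then invoke the per‑pair subgaussian estimate already established in Proposition~\ref{prop:envy is good SG}. First I would note that for every fixed $t$,
\[
\envavg^t = \frac{1}{\binom{N}{2}}\sum_{1\leq i<j\leq N}\abs{\env^t_{i,j}} \leq \frac{1}{\binom{N}{2}}\sum_{1\leq i<j\leq N}\max_{1\leq \tau\leq T}\abs{\env^\tau_{i,j}},
\]
where the inequality is termwise and uses only that each $\abs{\env^t_{i,j}}$ is nonnegative and dominated by its own running maximum. Since the right-hand side no longer depends on $t$, taking $\max_{1\leq t\leq T}$ and then expectations (linearity of expectation) gives $\E{\max_{1\leq t\leq T}\envavg^t}\leq \frac{1}{\binom{N}{2}}\sum_{i<j}\E{\max_{1\leq t\leq T}\abs{\env^t_{i,j}}}$.

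Next I would analyze a single term $\E{\max_t\abs{\env^t_{i,j}}}$. The key observation is that $\env^t_{j,i}=-\env^t_{i,j}$, so $\abs{\env^t_{i,j}}=\max\{\env^t_{i,j},\env^t_{j,i}\}$ and therefore $\max_t\abs{\env^t_{i,j}}=\max\{\max_t\env^t_{i,j},\,\max_t\env^t_{j,i}\}$ is the maximum of exactly two random variables. By Proposition~\ref{prop:envy is good SG}, applied to the ordered pairs $(i,j)$ and $(j,i)$ respectively (and using Remark~\ref{remark: symmetric dif} so that the relevant variance is $\var{\dift}$ in both cases), each of $\max_t\env^t_{i,j}$ and $\max_t\env^t_{j,i}$ is $\sqrt{\sum_{t=1}^T\var{\dift}}$-subgaussian. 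Applying Claim~\ref{claim: sg max} with $n=2$ then yields $\E{\max_t\abs{\env^t_{i,j}}}\leq \sqrt{2\ln 2\sum_{t=1}^T\var{\dift}}$.

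Finally I would substitute this back. Since the bound is uniform over all $\binom{N}{2}$ pairs and averaging does not increase it, $\E{\max_{1\leq t\leq T}\envavg^t}\leq \sqrt{2\ln 2\sum_{t=1}^T\var{\dift}}$, which is already free of any dependence on $N$; since $2\ln 2 \leq 4\ln N$ for every $N\geq 2$ (equivalently $2\leq N^2$), this is in turn at most $2\sqrt{\ln N\sum_{t=1}^T\var{\dift}}$, as claimed. The special case $N=2$ is consistent, as there $\envavg^T=\env^T$ and the bound reduces to that of Theorem~\ref{thm: uni upper-bound}.

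There is no genuine obstacle here once Proposition~\ref{prop:envy is good SG} and Claim~\ref{claim: sg max} are available; the only subtlety worth stressing is that the subgaussian bound must be applied pair-by-pair, \emph{inside} the average, rather than to all $\binom{N}{2}$ pairs at once — the latter would reintroduce a $\sqrt{\ln N}$ factor, precisely the inefficiency this proposition removes. The ``$\max$ of a sum is at most the sum of the running maxima'' step and the decomposition $\abs{\env^t_{i,j}}=\max\{\env^t_{i,j},\env^t_{j,i}\}$ are the two elementary facts that make the argument go through cleanly.
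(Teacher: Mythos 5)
Your proposal is correct and follows essentially the same route as the paper's proof: exchange the maximum over $t$ with the average over pairs, apply Proposition~\ref{prop:envy is good SG} together with Claim~\ref{claim: sg max} to each pair separately with $n=2$ (the paper phrases your $\max\{\env^t_{i,j},\env^t_{j,i}\}$ decomposition as a maximum over a sign $\sigma\in\{-1,1\}$), and obtain the $N$-free bound $\sqrt{2\ln(2)\sum_{t=1}^T\var{\dift}}$. Your explicit final comparison $\sqrt{2\ln 2}\leq 2\sqrt{\ln N}$ is a small but welcome addition, since the paper's proof stops at the sharper constant without reconciling it with the stated bound.
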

\begin{proof}[Proof of Proposition~\ref{prop: avg upper-bound}]
Much like the proof of Theorem~\ref{thm: uni upper-bound}, this proof is based on the fact that $\dift$ are subgaussian random variables. From the linearity of expectation, we get
\begin{align}\label{avg 1}
\E{\max_{1\leq t \leq T} \envavg^t } &=
\E{\max_{1\leq t \leq T} \frac{1}{\binom{N}{2}}\sum_{1\leq i<j\leq N}{\abs{\env_{i,j}^t}}} 
\leq \E{\frac{1}{\binom{N}{2}}\sum_{1\leq i<j\leq N}{\max_{1\leq t \leq T} 
 \abs{\env_{i,j}^t}}} \nonumber\\
&=\frac{1}{\binom{N}{2}}\sum_{1\leq i<j\leq N}\E{\max_{1\leq t \leq T}  \abs{\sum^t_{\tau=1}{\adif{i}{j}^\tau}} }.
\end{align}
For every pair of agents $i,j \in [N]$, it holds that
\begin{align}\label{avg 2}
\E{\max_{1\leq t \leq T}  \abs{\sum^t_{\tau=1}{\adif{i}{j}^\tau}} } =
\E{\max_{1\leq t \leq T, \sigma\in \{-1,1\}}\left\{\sigma\sum^t_{\tau=1}{\adif{i}{j}^\tau}\right\}} \leq
\sqrt{2\ln{(2)}\sum^T_{t=1}{\var{\dift}}},
\end{align}
where the last inequality is due to Proposition~\ref{prop:envy is good SG} and Claim~\ref{claim: sg max}. By combining Inequalities~\eqref{avg 1} and \eqref{avg 2} we can conclude that
\begin{align*}
\E{\max_{1\leq t \leq T} \envavg^t } \leq
\frac{1}{\binom{N}{2}}\sum_{1\leq i<j\leq N}{\sqrt{2\ln{(2)}\sum^T_{t=1}{\var{\dift}}}} =
\sqrt{2\ln{(2)}\sum^T_{t=1}{\var{\dift}}},
\end{align*}
which concludes the proof of Proposition~\ref{prop: avg upper-bound}. 
\end{proof}
Next, we craft a lower bound for the average envy.
\begin{proposition}\label{prop: mp avg lower-bound}
    For a large enough $T$, a sufficiently random execution with a symmetric, memory-free algorithm yields
    \[\E{\envavg^T} \geq c\sqrt{ \sum^{T}_{t=1}{\var{\dif^t}}},\]
    where $c> 0$ is a global constant.    
\end{proposition}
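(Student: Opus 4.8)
The plan is to prove the bound for one pair of agents and then average. Fix an arbitrary pair $(i,j)$ and write $M^T=\env_{i,j}^T=\sum_{t=1}^{T}X_t$ with $X_t:=\adift{i}{j}$. Since $\envavg^T=\frac{1}{\binom{N}{2}}\sum_{i<j}\abs{\env_{i,j}^T}$ and, by Remark~\ref{remark: symmetric dif}, all pairs share the same distribution of $\dift$ under $\uniord$, it suffices to establish $\E{\abs{M^T}}\geq c\sqrt{\sum_{t=1}^{T}\var{\dift}}$ for a single pair; averaging over the $\binom{N}{2}$ pairs transfers the bound verbatim to $\envavg^T$. Two structural facts drive the argument. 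First, because the algorithm is memory-free, its round-$t$ behaviour depends only on round-$t$ observations; combined with the mutual independence of the rewards across rounds and the fresh uniform draw of the arrival order each round, the increments $(X_t)_{t=1}^{T}$ are \emph{mutually independent}. Second, under $\uniord$ each $X_t$ is symmetric about $0$ (Remark~\ref{remark: symmetric dif}), so $\E{X_t}=\E{X_t^3}=0$ and $\E{X_t^2}=\var{\dift}$.

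Write $V:=\sum_{t=1}^{T}\var{\dift}$. Independence and zero mean give $\E{(M^T)^2}=V$ at once. The crucial computation is the fourth moment: in the expansion $\E{(M^T)^4}=\sum_{t_1,\dots,t_4}\E{X_{t_1}X_{t_2}X_{t_3}X_{t_4}}$, every term containing an index that occurs exactly once vanishes by independence together with $\E{X_t}=0$, leaving only the diagonal and the ``two equal pairs'' contributions, so that
\[
\E{(M^T)^4}\leq\sum_{t=1}^{T}\E{X_t^4}+3\Big(\sum_{t=1}^{T}\var{\dift}\Big)^2\leq V+3V^2\leq 4V^2,
\]
using $\E{X_t^4}\leq\E{X_t^2}$ (as $X_t\in[-1,1]$) and $V\geq 1$, which holds for $T$ large enough since a sufficiently random execution satisfies $V\geq\sqrt{T}$. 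This $O(V^2)$ bound is exactly where memory-freeness is used: for a merely martingale-difference sequence $\E{(M^T)^4}$ can be as large as $\Theta(TV)$, and in fact the statement then fails — one can design an anonymous (but not memory-free) execution in which a single coin flip in round $1$ decides whether every subsequent round is ``exploratory'' or ``trivial'', forcing $\E{\abs{M^T}}$ to be much smaller than $\sqrt{V}$.

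Given the second and fourth moments, I would finish with a Paley–Zygmund/Hölder comparison rather than the reverse-Jensen bound of Proposition~\ref{thm: board}, which is too weak here: it degenerates precisely in the regime $\var{(M^T)^2}\approx 2\,\E{(M^T)^2}^2$, which is the regime we land in. Applying Hölder's inequality with exponents $\tfrac32$ and $3$ to $Z:=\abs{M^T}\geq 0$,
\[
\E{Z^2}=\E{Z^{2/3}\cdot Z^{4/3}}\leq\big(\E{Z}\big)^{2/3}\big(\E{Z^4}\big)^{1/3},
\]
whence $\E{Z}\geq(\E{Z^2})^{3/2}/(\E{Z^4})^{1/2}=V^{3/2}/(\E{(M^T)^4})^{1/2}\geq V^{3/2}/(2V)=\tfrac12\sqrt{V}$. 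Therefore $\E{\abs{\env_{i,j}^T}}\geq\tfrac12\sqrt{\sum_{t=1}^{T}\var{\dift}}$ for every pair, and averaging yields $\E{\envavg^T}\geq\tfrac12\sqrt{\sum_{t=1}^{T}\var{\dift}}$, i.e.\ the proposition with $c=\tfrac12$.

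The main obstacle is the fourth-moment estimate. There are two subtleties to get right: the combinatorial bookkeeping in the expansion of $\E{(M^T)^4}$ (identifying exactly which index patterns survive once independence and vanishing odd moments are invoked), and, more conceptually, recognising that one must exploit \emph{independence} — not just the martingale-difference structure available for general anonymous algorithms — to pin $\E{(M^T)^4}$ at $O(V^2)$, and hence that the softer Proposition~\ref{thm: board} cannot be substituted for the sharper Hölder bound in this setting.
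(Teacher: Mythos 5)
Your proof is correct, but it takes a genuinely different route from the paper's. The paper proves this proposition by recycling the proof of Theorem~\ref{thm: uni lower-bound}: it treats $(\env_{i,j}^t)_t$ as a martingale, invokes the Burkholder--Davis--Gundy inequality to pass to the quadratic variation, and then uses the reverse-Jensen bound of Proposition~\ref{thm: board} applied to $\sum_t (\dift)^2$ together with the sufficiently-random condition to control the ratio $\var{\sum_t(\dift)^2}/\E{\sum_t(\dift)^2}^2$. That route needs only the martingale-difference structure (so it would survive without memory-freeness, which is compensated by the sufficiently-random hypothesis), but it naturally controls the running maximum $\E{\max_t\abs{\env_{i,j}^t}}$ and leans on the unspecified BDG constant $A_1$. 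You instead exploit the memory-free hypothesis to get \emph{independent} increments, compute the second and fourth moments of $\env_{i,j}^T$ directly ($\E{(M^T)^2}=V$ and $\E{(M^T)^4}\le 4V^2$ once $V\ge 1$, which sufficient randomness supplies), and close with a H\"older/Paley--Zygmund comparison. This is more elementary and self-contained, yields an explicit constant $c=\tfrac12$, and --- a genuine advantage --- bounds the terminal-time quantity $\E{\abs{\env_{i,j}^T}}$ that actually appears in the definition of $\envavg^T$, rather than the running maximum. Your diagnosis that Proposition~\ref{thm: board} degenerates when applied to $(M^T)^2$ (since $\var{(M^T)^2}\approx 2\E{(M^T)^2}^2$ for independent increments) is accurate and is exactly why the paper applies it to the quadratic variation instead. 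The only imprecision is the parenthetical counterexample: a single \emph{fair} coin flip deciding between ``all exploratory'' and ``all trivial'' only costs a constant factor in $\E{\abs{M^T}}$ versus $\sqrt{V}$; to actually break the bound for non-memory-free anonymous algorithms one needs the exploratory branch to occur with vanishing probability $\epsilon$, giving $\E{\abs{M^T}}=\Theta(\sqrt{\epsilon V})$. Since that remark is not load-bearing, the proof stands as written.
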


\begin{proof}[Proof of Proposition~\ref{prop: mp avg lower-bound}]
The proof of Proposition~\ref{prop: mp avg lower-bound} is almost identical to the proof of Theorem~\ref{thm: uni lower-bound}. In that proof, we bounded the (maximal) envy from below using the envy between a specific couple of agents.
Since, the algorithm is symmetric, the bound we showed is valid for every two agents;
thus, for every $i,j$ it holds that
\begin{align*}
\E{\env_{i,j}^T} \geq \frac{A_1}{2}\sqrt{ \sum^{T}_{t=1}{\var{\dif^t}} },
\end{align*}
where $A_1$ is the constant from Theorem~\ref{thm:BDG}. Using linearity of expectation, we get
\begin{align*}
\E{\envavg^T} & =
\E{\frac{1}{\binom{N}{2}}\sum_{i,j \in [N]^2}{\env_{i,j}^T}} =
\frac{1}{\binom{N}{2}}\sum_{i,j \in [N]^2}{\E{\env_{i,j}^T}}
\geq
\frac{1}{\binom{N}{2}}\sum_{i,j \in [N]^2} {\frac{A_1}{2}\sqrt{ \sum^{T}_{t=1}{\var{\dif^t}} }}
\\&=
\frac{A_1}{2}\sqrt{ \sum^{T}_{t=1}{\var{\dif^t}} }.
\end{align*}
This concludes the proof of Proposition~\ref{prop: mp avg lower-bound}.
\end{proof}
We finalize this section by mentioning that the upper bound on nudged arrival and maximal envy holds trivially for the average envy due to the fact that $\envavg^t \leq \env^t$ for all $t$  almost surely. Future work could seek a tighter bound for the average envy.

\section{Socially Optimal Algorithms}\label{appendix:sociallyopt}
In this section, we consider the task of devising socially optimal algorithms. First, in Subsection~\ref{subsec:sw N=2}, we address the two-agent case. Then, in Subsection~\ref{subsec:sopt for N>2}, we develop algorithms for the $N>2$ case.

To ease readability, we make the assumption that reward distributions are stationary, i.e., $\mathcal{D}^1_i, \mathcal{D}^2_i, \dots \mathcal{D}^T_i$ are identical for every arm $a_i \in A$. Consequently,  $X^1_i, X^2_i, \dots X^T_i$ are i.i.d. and we drop the superscript. We stress that our results can also be easily extended to the non-stationary case. Furthermore, we let $\mu_i = \mathbb{E}[X_i^t]$ denote the expected reward of arm $a_i$.

\subsection{Social Welfare for $N=2$}
\label{subsec:sw N=2}

\begin{algorithm}[t]
\caption{Two-agents Socially Optimal Algorithm ($\sopt$)}
\label{alg: sopt}
\LinesNumbered
\DontPrintSemicolon 
\KwIn{horizon $T$, reward distributions $\mathcal{D}_1, \ldots, \mathcal{D}_K$}
Compute $(\is, \js)$ such that
\label{alg: sopt compute}
\begin{align}\label{eq: picking i,j star}
(\is, \js) \in \argmax_{(i,j) \in A^2} \left\{ \mu_i  + \prb{X_i < \mu_j}\mu_j+\prb{X_i \geq \mu_j}\E{X_i \mid X_i \geq \mu_j}  \right\}.
\end{align}\\
\For{round $t = 1$ to $T$}{\label{alg: sopt for}
    Pull $a_{\is}$ \label{alg: sopt 3}\\
    \lIf{$x^t_\is \geq \mu_\js$}{
        Pull $a_{\is}$ \label{alg: sopt if}
    }
    \lElse{
        Pull $a_{\js}$ \label{alg: sopt else}
    }
}
\end{algorithm}
In this section, we design $\sopt$, a socially optimal algorithm for the two-agent case, which we implement in Algorithm~\ref{alg: sopt}. $\sopt$ has Bayesian information, as it receives the reward distributions as inputs. In Line~\ref{alg: sopt compute}, it selects two arms $a_{\is}, a_{\js}$ according to Equation~\eqref{eq: picking i,j star}. As we prove formally, this selection maximizes $\E{{r^t_{(1)}}+{r^t_{(2)}}}$ for any $t$. Arms $a_{\is}, a_{\js}$ are the only arms the algorithm pulls during its execution.

In Line~\ref{alg: sopt for}, $\sopt$ begins interacting with the agents for $T$ rounds.
Notice $\sopt$ does not address the arrival of the agents at all: While the arrival function is crucial for measuring envy, it does not influence the SW.

$\sopt$ pulls arm $a_{\is}$ for the agent that arrives in the first session, 
and observes the realized reward $x^t_{\is}$ (Line~\ref{alg: sopt 3}). In Lines~\ref{alg: sopt if}--\ref{alg: sopt else}, $\sopt$ decides whether to pull the same arm in the second session or $a_{\js}$ instead, based on the realized $x^t_{\is}$:
If $x^t_{\is} \geq \mu_{\js}$ (Line~\ref{alg: sopt if}), i.e., we expect the reward of $a_{\js}$ to be less than or equal to the observed reward, $\sopt$ pulls $a_{\is}$ again.
Otherwise (Line~\ref{alg: sopt else}), we expect the reward of $a_{\js}$  to be greater than that of $a_{\is}$, so the algorithm pulls arm $a_{\js}$. Next, we prove the optimality of $\sopt$.

Before we prove the optimality of $\sopt$, present two propositions that assist in understanding its crux.
\begin{proposition}\label{prop:is}
$\sopt$ does not always pull the arm with the highest expected reward in the first session. That is, $\is$ is not necessarily $\argmax_{i\in \{1, \ldots, K \}}{\mu_i}$.
\end{proposition}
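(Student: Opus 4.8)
The plan is to establish Proposition~\ref{prop:is} by exhibiting a single explicit instance on which the maximizer $(\is,\js)$ of Equation~\eqref{eq: picking i,j star} satisfies $\is \neq \argmax_{i}\mu_i$. The conceptual reason is that the term $\prb{X_i \geq \mu_j}\,\E{X_i \mid X_i \geq \mu_j}$ in the objective rewards the \emph{dispersion} of the first-session arm: pulling a high-variance arm first provides a valuable option, since its high realizations are kept in the second session while its low realizations are discarded in favor of $a_j$. Consequently, an arm with a slightly lower mean but substantially larger variance can beat the top-mean arm for the first-session slot.

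Concretely, I would take $K=2$ with $a_1$ deterministic at $0.6$ (so $\mu_1 = 0.6$) and $a_2 \sim \ber{1/2}$ (so $\mu_2 = 0.5$); both distributions are supported in $[0,1]$, hence admissible, and $\argmax_i \mu_i = 1$. Writing $f(i,j)$ for the bracketed expression in Equation~\eqref{eq: picking i,j star}, it then remains to evaluate $f$ at the four pairs. Since $X_1$ is concentrated above $\mu_1$ and $\mu_2$, one gets $f(1,1)=f(1,2)=2\mu_1=1.2$. For $i=2$, using $\prb{X_2 < 0.6}=\prb{X_2=0}=\tfrac12$, $\prb{X_2 \geq 0.6}=\prb{X_2=1}=\tfrac12$ and $\E{X_2\mid X_2=1}=1$, one obtains $f(2,1)=0.5+\tfrac12\cdot 0.6+\tfrac12\cdot 1=1.3$ and, similarly, $f(2,2)=0.5+\tfrac12\cdot 0.5+\tfrac12\cdot 1 = 1.25$. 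Thus the unique maximizer is $(\is,\js)=(2,1)$, so $\sopt$ pulls the lower-mean arm $a_2$ in the first session, which proves the proposition.

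There is no real obstacle here; the argument is a short computation, and the only points requiring care are (i) verifying that the chosen distributions meet the model's boundedness assumption, and (ii) staying consistent with the $\geq$ / $<$ convention in Equation~\eqref{eq: picking i,j star} when evaluating the indicator events ($X_2=1\geq 0.6$ triggers a repeat pull, $X_2=0<0.6$ triggers a switch), exactly matching Lines~\ref{alg: sopt if}--\ref{alg: sopt else} of Algorithm~\ref{alg: sopt}. If one prefers a witness closer to the uniform-reward running example, the same effect is reproducible by taking $a_2$ to be a sufficiently dispersed uniform arm and $\mu_1$ only slightly above $\mu_2$, at the cost of messier arithmetic; the Bernoulli instance above is the cleanest choice.
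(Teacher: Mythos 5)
Your proposal is correct and follows essentially the same approach as the paper: both exhibit a two-arm counterexample in which a lower-mean but higher-dispersion arm wins the first-session slot, and verify this by direct evaluation of Equation~\eqref{eq: picking i,j star} (the paper uses $X_1\in\{0.75,0.55\}$ with $\mu_1=0.65$ and $X_2\in\{0,1\}$ with $\mu_2=0.6$, whereas you use a deterministic arm versus a Bernoulli arm, which is if anything slightly cleaner). Your arithmetic checks out, so the argument is complete.
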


\begin{proof}[Proof of Proposition~\ref{prop:is}]
We show an example satisfying $\is \neq \argmax_{i\in \{1, \ldots, K \}}{\mu_i}$.
Consider $K=2$ arms, with the following distributions.
\begin{align*}
    X_1 \sim
    \begin{cases}
    0.75 & w.p. \ \  0.5 \\
    0.55 & w.p.\  \ 0.5
    \end{cases},
    X_2 \sim
    \begin{cases}
    1 & w.p. \ \  0.6 \\
    0 & w.p.\  \ 0.4
    \end{cases}.
\end{align*}
Observe that $\mu_1 = 0.65 > \mu_2=0.6$. Yet,
\begin{align*}
\mu_1  +\mu_2 \prb{X_1 < \mu_2}+\E{X_1 \mid X_1 \geq \mu_2}  \prb{X_1 \geq \mu_2}=
0.65  + 0.6 \cdot 0.5+ 0.75 \cdot 0.5 = 1.325,
\end{align*}
whereas,
\begin{align*}
\mu_2  +\mu_1 \prb{X_2 < \mu_1}+\E{X_2 \mid X_2 \geq \mu_1}  \prb{X_2 \geq \mu_1}=
0.6  + 0.65 \cdot 0.4 + 1 \cdot 0.6 = 1.46.
\end{align*}
Consequently, $\sopt$ chooses $(\is,\js) = (2,1)$.
This concludes the proof of Proposition~\ref{prop:is}.
\end{proof}
To get the intuition behind Proposition~\ref{prop:is}, recall that with the help of the information exposed by the first agent, the algorithm can make a better decision in the second session.
Therefore, we must find the perfect trade-off between the first agent's welfare (exploitation) and the leverage of the information they provide (exploration).
In contrast, we expect nothing but pure exploitation in the second session.
\begin{proposition}\label{prop:js}
Arm $a_\js$ has the highest expected reward among the remaining arms. I.e., $\js = \argmax_{j\in K\setminus\{\is\}}{\mu_j}$.
\end{proposition}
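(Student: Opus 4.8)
\textbf{Proof plan for Proposition~\ref{prop:js}.}
The plan is to argue by an exchange/contradiction argument: suppose $\sopt$ chose a pair $(\is, j)$ with $j \neq \argmax_{k \in [K]\setminus\{\is\}} \mu_k$, and let $j' = \argmax_{k \in [K]\setminus\{\is\}} \mu_k$, so that $\mu_{j'} > \mu_j$ (or $\mu_{j'}\geq\mu_j$, breaking ties toward $j'$). I would show that the objective in Equation~\eqref{eq: picking i,j star} evaluated at $(\is, j')$ is at least as large as at $(\is, j)$, contradicting optimality of the $\argmax$ choice. Concretely, fix the first-session arm to $\is$ and view the objective as a function of the second arm: $g(j) = \mu_\is + \prb{X_\is < \mu_j}\mu_j + \prb{X_\is \geq \mu_j}\E{X_\is \mid X_\is \geq \mu_j}$. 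Since the $\mu_\is$ term does not depend on $j$, it suffices to compare $h(j) := \prb{X_\is < \mu_j}\mu_j + \prb{X_\is \geq \mu_j}\E{X_\is \mid X_\is \geq \mu_j}$ across $j$.

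The key step is to recognize that $h(j)$ is exactly $\E{\max\{X_\is, \mu_j\}}$: conditioned on $X_\is < \mu_j$ the second agent receives $\mu_j$ (the expected reward of the fresh arm $a_j$ pulled in the second session), and conditioned on $X_\is \geq \mu_j$ the second agent receives $X_\is$ (the already-revealed reward of $a_\is$). Thus $h(j) = \E{\max\{X_\is, \mu_j\}}$. Now the map $c \mapsto \E{\max\{X_\is, c\}}$ is nondecreasing in the constant $c$ (pointwise $\max\{x,c\}$ is nondecreasing in $c$, so monotonicity passes through the expectation). Since $\mu_{j'} \geq \mu_j$, we get $h(j') = \E{\max\{X_\is, \mu_{j'}\}} \geq \E{\max\{X_\is, \mu_j\}} = h(j)$, hence $g(j') \geq g(j)$. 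Therefore the $\argmax$ in Equation~\eqref{eq: picking i,j star}, for the fixed first coordinate $\is$, can always be taken to be the arm of maximal expected reward among $[K]\setminus\{\is\}$; since $\sopt$ picks such an $\argmax$ pair, $\js = \argmax_{j \in [K]\setminus\{\is\}} \mu_j$ as claimed.

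I do not anticipate a serious obstacle here; the main subtlety is purely bookkeeping: making sure the identification $h(j) = \E{\max\{X_\is,\mu_j\}}$ correctly reflects what $\sopt$ does in Lines~\ref{alg: sopt if}--\ref{alg: sopt else} (it pulls the fresh arm $a_\js$ precisely when $x^t_\is < \mu_\js$, earning $\mu_\js$ in expectation, and otherwise keeps $a_\is$), and handling the tie case $\mu_{j'} = \mu_j$ by noting that both choices yield the same objective value, so we may without loss of generality define $\js$ to be a maximizer of $\mu_j$. One should also confirm $\is \notin \argmax$ is allowed as the second arm would be degenerate — but Equation~\eqref{eq: picking i,j star} ranges over ordered pairs in $A^2$, and even if $\is = \js$ were optimal it would still be consistent with the statement only in edge cases; the proof of Proposition~\ref{prop:is}'s companion optimality result (deferred to the appendix) presumably rules this out or absorbs it, so I would simply remark that the comparison above is among $j \neq \is$, which is the relevant regime.
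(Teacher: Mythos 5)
Your proposal is correct and follows essentially the same route as the paper: both rewrite the second-session term of Equation~\eqref{eq: picking i,j star} as $\E{\max\{X_\is,\mu_j\}}$, invoke monotonicity of this quantity in $\mu_j$, and conclude by an exchange/contradiction argument, with the same side remark that ties (or the case where $a_\js$ is never pulled) leave the objective unchanged. No gaps.
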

In other words, in the second session, the algorithm makes the choice that is the most rewarding.
\begin{proof}[Proof of Proposition~\ref{prop:js}]
Let $(\is, \js)$ be a pair of arms that maximizes Equation~\eqref{eq: picking i,j star}.
Suppose, for the sake of contradiction, that there exists ${j'}\notin \{\is,\js\}$ such that $\mu_{j'} > \mu_{\js}$;
thus, $\E{\max{\{X_{\is}, \mu_{{j'}}\}}} \geq \E{\max{\{X_{\is}, \mu_{\js}\}}}$.
I.e.,
\begin{align*}
    \E{X_{\is}} + \E{\max{\{X_{\is}, \mu_{{j'}}\}}} \geq \E{X_{\is}} + \E{\max{\{X_{\is}, \mu_{\js}\}}},
\end{align*}
where equality can occur if and only if $\prb{\max{\{X_{\is}, \mu_{\js}\}} = \mu_{\js}} = 0$.
In this case, the algorithm always pulls arm $a_{\is}$ in the second session, and arm $a_{\js}$ is irrelevant.
Hence, we assume strong inequality.
Notice that the left-hand side is exactly Equation~\eqref{eq: picking i,j star} with $(i,j) = (\is, {j'})$, and the right-hand side is exactly Equation~\eqref{eq: picking i,j star} with $(i,j) = (\is, \js)$.
Hence, we have obtained a contradiction to $(\is, \js)$ being a pair that maximizes Equation~\eqref{eq: picking i,j star}. This concludes the proof of Proposition~\ref{prop:js}.
\end{proof}
We are ready to prove the optimality of $\sopt$.
\begin{theorem}\label{thm:sopt is opt}
Fix any instance with $N=2$ and arbitrary reward distributions. For any algorithm $ALG$ with Bayesian information, it holds that
\[\sw\left(ALG\right) \leq \sw\left(\sopt\right).\]    
\end{theorem}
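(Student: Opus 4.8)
The plan is to show that no Bayesian algorithm can beat $\sopt$ by arguing round-by-round: since rewards are i.i.d.\ across rounds and the welfare decomposes additively as $\sw(ALG)=\sum_{t=1}^T \E{r^t_{(1)}+r^t_{(2)}}$ (welfare is independent of the arrival order, so we may speak of sessions), it suffices to show that in any single round, the expected within-round welfare $\E{r^t_{(1)}+r^t_{(2)}}$ is maximized by the policy $\sopt$ uses. Because each round is a fresh independent copy with no carryover affecting welfare, an optimal algorithm can be assumed (without loss of welfare) to treat every round identically and ignore all history; this reduces the whole problem to a two-session optimization.

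\textbf{First I would} set up the single-round problem. In session~1 the algorithm picks some arm $a_i$ (possibly randomized, but by a standard averaging/convexity argument a deterministic choice is at least as good), observes the realized reward $x_i$, and then in session~2 picks an arm $a_j$, which either is $a_i$ again (yielding another $x_i$ by reward consistency) or a fresh arm $a_j$ with $j\neq i$ (yielding $X_j$, independent of $x_i$). Conditioned on the session-1 choice $i$ and the observed $x_i$, the best session-2 choice is clear: pull $a_i$ again if $x_i \ge \max_{j\neq i}\mu_j$, and otherwise pull the arm maximizing $\mu_j$ among $j\neq i$ — this is exactly Propositions~\ref{prop:is} and~\ref{prop:js}, and it follows because the session-2 reward has expectation $\max\{x_i,\mu_j\}$ given the choice, which is maximized by taking $j=\argmax_{j\neq i}\mu_j$ and then comparing to $x_i$. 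Thus the optimal session-2 continuation, for a given first arm $i$, yields within-round welfare $\mu_i + \E{\max\{X_i,\mu_{j^\star(i)}\}}$ where $j^\star(i)=\argmax_{j\neq i}\mu_j$, and $\E{\max\{X_i,\mu_j\}} = \prb{X_i<\mu_j}\mu_j + \prb{X_i\ge\mu_j}\E{X_i\mid X_i\ge\mu_j}$. Maximizing this over $i$ is precisely Equation~\eqref{eq: picking i,j star}, which defines $(\is,\js)$; hence $\sopt$ achieves the maximum single-round welfare, and summing over $T$ rounds gives $\sw(ALG)\le\sw(\sopt)$.

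\textbf{The hard part will be} making rigorous the reduction that an optimal algorithm loses nothing by being Markovian/memoryless across rounds. One must argue that, although a general $ALG$ may condition its round-$t$ behavior on the entire history $\mathcal H^{t,1}$, the \emph{conditional} expected within-round welfare $\E{r^t_{(1)}+r^t_{(2)}\mid \mathcal H^{t,1}}$ is, for \emph{every} realization of the history, bounded above by the round-optimal value $V^\star := \mu_{\is}+\E{\max\{X_{\is},\mu_{\js}\}}$, because the round-$t$ rewards $(X_i^t)_i$ are independent of $\mathcal H^{t,1}$ and the single-round analysis above applies verbatim to the conditional problem. Taking expectations over $\mathcal H^{t,1}$ and summing over $t$ then yields $\sw(ALG)\le T\cdot V^\star = \sw(\sopt)$. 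The only mild subtlety is handling randomized algorithms and the session-2 arm choice being a (measurable) function of $x_i$ and the history; a short argument that pointwise optimizing the session-2 map $x_i\mapsto$ arm, and then optimizing over the finitely many session-1 arms, dominates any other strategy closes the gap. I would also remark (as the excerpt does, deferring to the non-stationary extension) that stationarity is used only to make $V^\star$ the same in every round; without it one simply gets $\sw(\sopt)=\sum_t V^\star_t$ with round-dependent optimal values.
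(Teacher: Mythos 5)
Your proposal is correct and follows essentially the same route as the paper: decompose welfare round-by-round using independence of rewards across rounds, then show the single-round two-session problem is solved by greedily optimizing the session-2 choice given the observed first reward and maximizing over the first arm, which is exactly Equation~\eqref{eq: picking i,j star}. The only difference is cosmetic: the paper justifies the threshold structure by citing the revelation principle, whereas you argue it directly via pointwise optimization and independence of round-$t$ rewards from the history, which is arguably the more self-contained version of the same argument.
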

\begin{proof}[Proof of Theorem~\ref{thm:sopt is opt}]
Fix any instance with $N=2$ and arbitrary reward distributions. The social welfare of an algorithm $ALG$ over $T$ rounds is
\[
\sw(ALG) 
=\E{ \sum^2_{i=1}{R_i^T}}= \E{\sum_{t=1}^T \bigl(r^t_{(1)} + r^t_{(2)}\bigr)}
= T \, \E{r^1_{(1)} + r^1_{(2)}},
\]
where the last equality uses the fact that the rewards in each round are independent.

Thus, to show that $\sopt$ maximizes social welfare, it suffices to show that no algorithm can exceed its expected reward \emph{within a single round}. By the revelation principle~\cite{peters2001common}, there is an optimal \emph{threshold} algorithm: After observing the first-session reward, it decides in the second session by comparing the observed reward against the expected reward of any other arm.

Since $\sopt$ enumerates all pairs of arms $(i,j)$ in Equation~\eqref{eq: picking i,j star} and applies a greedy rule for the second session (choosing either the same arm $i$ or the other arm $j$ based on which is expected to yield a higher reward), it achieves the maximum expected reward per round. Multiplying by $T$ completes the proof.
\end{proof}

\subsection{Socially Optimal Algorithm for $N >2$}\label{subsec:sopt for N>2}
Next, we consider the problem of finding a socially optimal algorithm for $N >2$ agents. First, we present a socially optimal algorithm for the special case of Bernoulli rewards.

\begin{proposition}\label{prop:bernoulli sw optimal}
    Assume that $X_i \sim \ber{p_i}$ for every $i \in [K]$. An algorithm that pulls arms in descending order of $p_i$ until it realizes a reward of 1 is socially optimal for any number of agents $N$.
\end{proposition}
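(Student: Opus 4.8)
The plan is to argue that, round by round, the proposed greedy rule maximizes the expected sum of rewards $\E{\sum_{q=1}^N r^t_{(q)}}$, and since rewards are i.i.d.\ across rounds and social welfare is additive (and independent of arrival order, as noted in the model section), maximizing per-round expected welfare is equivalent to maximizing $\sw$. So I would fix an arbitrary round $t$ and drop the superscript. Because each arm's reward is Bernoulli, $X_i \in \{0,1\}$, and the crucial structural fact is: \emph{once any arm realizes a $1$, the optimal continuation is trivial} --- pull that arm for every remaining session and collect reward $1$ each time; no further exploration can ever do better since $1$ is the maximum possible reward. So the only real decision is how to spend the sessions \emph{before} a $1$ is observed.

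The key step is then a ``first success'' / optional-stopping style argument. Suppose the algorithm pulls a sequence of arms (possibly adaptively) and let $\tau$ be the first session in which a reward of $1$ appears. Every session before $\tau$ contributes $0$, session $\tau$ and all subsequent sessions contribute $1$ (by committing to the successful arm), so the total round reward is exactly $(N - \tau + 1)\ind{\tau \le N}$, i.e.\ it is a decreasing function of $\tau$ alone. Hence maximizing expected round welfare is equivalent to \emph{stochastically minimizing} the first-success session $\tau$, or more precisely maximizing $\E{N - \tau + 1}$ where $\tau = \infty$ if no $1$ ever appears among the arms pulled. Now I would show that pulling arms in a fixed order is without loss (an adaptive policy, before seeing any $1$, has seen only $0$'s, which carry no information because arms are independent --- so the next arm to pull can be chosen non-adaptively), and that among all orderings of distinct arms, pulling them in descending order of $p_i$ stochastically dominates: $\prb{\tau \le k}$ under the greedy order equals $1 - \prod_{\ell=1}^{k}(1-p_{(\ell)})$ with $p_{(1)} \ge p_{(2)} \ge \cdots$, and any rearrangement only increases some partial product $\prod_{\ell=1}^k (1-p_{(\ell)})$, hence only decreases $\prb{\tau \le k}$ for some $k$. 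A short exchange argument (swap an out-of-order adjacent pair and check the partial products weakly improve) makes this precise, and summing $\E{N-\tau+1} = \sum_{k=1}^N \prb{\tau \le k}$ over $k$ shows the greedy order maximizes the objective term by term.

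The main obstacle I anticipate is \emph{carefully ruling out adaptive and repeating strategies}: a priori a policy could re-pull an already-observed arm (which realized $0$) or interleave arms in a data-dependent way, and one must verify none of this helps. For the ``re-pull a $0$'' case the point is that reward consistency forces that arm to yield $0$ again, strictly wasting a session, so it is dominated; for adaptivity, the information set before the first $1$ is a string of $0$'s whose conditional distribution of remaining arm outcomes is unchanged by independence, so the optimal next choice is a deterministic function of which arms remain unpulled, reducing to the ordering problem. Once these reductions are in place, the exchange/coupling argument on the partial products $\prod(1-p_{(\ell)})$ is routine. I would also remark that this gives optimality for \emph{any} $N$ simultaneously, since the term-by-term domination of $\prb{\tau \le k}$ does not depend on $N$ --- a nice strengthening over the two-agent Theorem~\ref{thm:sopt is opt}.
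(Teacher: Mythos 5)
Your proposal is correct, but it takes a genuinely different route from the paper. You reduce the round to a stopping-time problem: the round reward of any sensible policy is $(N-\tau+1)\ind{\tau\le N}$ where $\tau$ is the first-success session, you expand $\E{(N-\tau+1)\ind{\tau\le N}}=\sum_{k=1}^{N}\prb{\tau\le k}$ with $\prb{\tau\le k}=1-\prod_{\ell\le k}(1-p_{(\ell)})$, and you show by an adjacent-swap exchange argument that the descending order minimizes every partial product simultaneously, hence maximizes every term. The paper instead performs the same preliminary reductions (a deterministic per-round policy suffices; commit after a $1$; never re-pull a $0$) and then handles the ordering question by a reduction to Pandora's Box: it builds a PB instance with the same Bernoulli arms and uniform inspection costs $\nicefrac{1}{B}$, computes Weitzman's indices $\theta_i=1-\nicefrac{1}{Bp_i}$ (which are descending in $p_i$), and observes that both objectives are monotone in the same random count $S$ of wasted pulls, so the PB-optimal order is optimal here. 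Your argument is more elementary and fully self-contained, and it yields the term-by-term stochastic dominance of $\tau$ explicitly, which makes the "for any $N$" claim transparent; the paper's argument is shorter modulo Weitzman's theorem and deliberately showcases the connection to Pandora's Box that the authors emphasize elsewhere. The only points to tighten in a full write-up are the ones you already flag: formalizing that an all-zeros history is uninformative (so adaptivity reduces to a fixed ordering) and writing the objective as $\E{(N-\tau+1)\ind{\tau\le N}}$ rather than $\E{N-\tau+1}$ when $\tau$ may be infinite.
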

\begin{proof}[Proof of Proposition~\ref{prop:bernoulli sw optimal}]
    Since random algorithms are just a distribution over deterministic algorithms, we know there is an optimal algorithm that is deterministic. Additionally, as in the proof of Theorem~\ref{thm:sopt is opt}, it suffices to focus on a single and arbitrary round $t$.
    
    Notice that after we observe an arm such that $x^t_i =1$, it is optimal to pull it for all remaining agents.
    Similarly,  if we observe arm $a_i$ with $x^t_i =0$ it is strictly sub-optimal to pull it again, unless all arms yielded a reward of 0.
    Thus, the only thing left to prove is the optimality of the order in which the algorithm pulls arms.

    In this special case, we can use a reduction to a Pandora's Box (PB) problem~\cite{weitzman1978optimal}. We first describe the reduction, then characterize the optimal solution for the PB instance, and finally show the equivalence. 
       
    Let the PB instance include $K$ Bernoulli arms with success probability $p_i$ for every arm $i$ and costs $c_i = \frac{1}{B}$, where $B$ is a constant such that $B > \max_{i \in [K]} \frac{1}{p_i}$. Due to Weitzman's seminal result~\cite{weitzman1978optimal}, there exist indices $(\theta_i)_{i=1}^K$ such that the optimal sequence is descending in the index. Each index  $\theta_i$ is the solution to $\E{\max\left\{X_i - \theta_i,0 \right\}}  =c_i =\frac{1}{B}$. Thus, $\theta_i = 1- \frac{1}{B\cdot p_i}$. The optimal solution maximizes $\E{1-\frac{S}{B}}$, where $S$ is a r.v. that counts the number of useless arms pulled (arms with a realized reward of 0). Note that $S$ depends solely on the pulling order.

    Similarly, for our original problem, maximizing the social welfare amounts to maximizing $\E{N-S}$. Since $S$ is distributed identically in both problems, the optimal order for PB is optimal for the original problem as well.
\end{proof}
    % satisfy this condition;
    % hence, the index of each arm is $1- \frac{1}{p_i}$. 
    % Any optimal algorithm for the PB instance maximizes the reward minus the sum of costs; thus, it orders the arms to find a positive reward as quickly as possible. Using the same order ensures that we maximize the social welfare in our problem. 
    % Recall that in the PB instance,  we need to find for each arm $a_i$ an index $\theta_i$ that satisfies $\E{\max\left\{X_i - \theta_i,0 \right\}}  =c_i =1$.
    % It is easy to see that $\theta_i = 1- \frac{1}{p_i}$ satisfy this condition;
    % hence, the index of each arm is $1- \frac{1}{p_i}$. 
    % the optimal sequence for the PB instance is descending in the index.

% \omer{USE THE BELOW AS THE BASIS FOR THE DYNAMIC PROGRAMMING}
% Fix an arbitrary round $t$, and assume that all the rewards are supported in a finite set $V$. We now describe a dynamic programming procedure that finds the optimal algorithm in $O(\abs{V} N2^K)$. Let $B$ be a subset of arms $B \subseteq A$, $n$ denote a number of agents $n\in \{0,1,\dots,N$, and $v$ denote an arbitrary reward $v \in V$. We define the following function $f$:
% \begin{equation}\label{eq:dp f}
% f(n,B,v) =  \max \left\{v\cdot n, \max_{a\in B} \mathbb E\left[X_a + f\left(n-1,B\setminus \{a\}, \max\{v,X_a\}\right) \right] \right\}    
% \end{equation}

Next, we move beyond Bernoulli rewards. Fix an arbitrary round $t$, and assume that all rewards are supported in a finite set $V$ (we later explain how to relax this assumption). We now describe a dynamic programming procedure that finds the optimal algorithm with a computational complexity of $O(\abs{V} \cdot N \cdot 2^K)$. 

We consider the following parameters: $B \subseteq A$, representing a subset of available arms; $n \in \{0,1,\dots,N\}$, denoting the number of remaining agents; and $v \in V$, an arbitrary current reward that models the maximal reward of all observed arms in $A \setminus B$.

We define the function $f(n, B, v)$, representing the maximum expected social welfare achievable given the parameters $(n,B,v)$. To apply this dynamic programming approach, we first establish the base cases for the function $f$:
\begin{itemize}
    \item \textbf{No agents remaining ($n = 0$):} If there are no agents left to assign rewards, the maximum achievable reward is zero regardless of the subset of arms $B$ and the current reward $v$. Formally, 
    \[
    f(0, B, v) = 0 \quad \forall \, B \subseteq A, \, v \in V.
    \]
    
    \item \textbf{No available arms ($B = \emptyset$):} If there are no arms left to pull, the only option is to assign the current reward $v$ to all remaining agents. Thus, the maximum reward in this scenario is the product of $v$ and the number of remaining agents $n$. Formally, 
    \[
    f(n, \emptyset, v) = v \cdot n \quad \forall \, n \in \{0,1,\dots,N\}, \, v \in V.
    \]
\end{itemize}

We move on to the recursive definition of $f$:
\begin{equation}\label{eq:dp_f}
f(n, B, v) = \max \left\{ v \cdot n, \max_{a \in B} \mathbb{E}\left[ X_a^t + f\left(n - 1, B \setminus \{a\}, \max\{v, X_a^t\} \right) \right] \right\}
\end{equation}
The recursive relation in Equation~\eqref{eq:dp_f} considers two scenarios at each step:
\begin{enumerate}
    \item \textbf{Assigning the Current Reward ($v \cdot n$):} In this scenario, the algorithm assigns the current reward $v$ to all remaining $n$ agents without pulling any additional arms.
    
    \item \textbf{Pulling an Arm ($\max_{a \in B} \mathbb{E}\left[ X_a^t + f\left(n - 1, B \setminus \{a\}, \max\{v, X_a^t\} \right) \right]$):} Here, the algorithm selects an arm $a \in B$, observes the stochastic reward $X_a^t$, and then recursively assigns rewards to the remaining $n - 1$ agents. The subset of available arms is updated to $B \setminus \{a\}$, and the current reward is updated to $\max\{v, X_a^t\}$.
\end{enumerate}

The optimal value for round $t$ is obtained by evaluating the function $f$ at the initial conditions where all agents and arms are available, and the starting reward is zero:
\[
f(n = N, B = A, v = 0).
\]
This value represents the maximum expected social welfare achievable by the optimal algorithm for round $t$. 

\begin{theorem}\label{thm:optimality_runtime}
    The dynamic programming procedure defined by the function $f(n, B, v)$ in Equation~\eqref{eq:dp_f} correctly computes the maximum expected total social welfare achievable for round $t$. Furthermore, the procedure operates with a runtime complexity of $O(\abs{V} \cdot N\cdot K\cdot 2^K)$.
\end{theorem}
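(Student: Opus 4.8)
The plan is to split Theorem~\ref{thm:optimality_runtime} into a correctness claim and a runtime claim, with essentially all of the content in correctness. Define $V^\star(n,B,v)$ to be the true optimal expected round-$t$ welfare: the supremum, over all adaptive (randomized) policies, of the expected total reward handed to the remaining $n$ agents, given that the set of still-unobserved arms is exactly $B$ and the largest reward revealed so far among $A\setminus B$ equals $v$ (with $v=0$ meaning nothing has been observed). I would show $f\equiv V^\star$ and then evaluate at $(N,A,0)$.

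The argument rests on a normal-form lemma for optimal policies. First, a domination step: pulling an already-observed arm that is not the best-observed one is weakly dominated, since it yields a weakly smaller reward and reveals nothing new; hence an optimal policy, whenever it does not pull a fresh arm, assigns the current agent the value $v$, and so the only information about $A\setminus B$ that the policy needs is the single number $v$. Second, an exchange step using reward consistency: suppose an optimal policy assigns some agent the current best $v$ and, at a strictly later session, pulls a fresh arm $a$ (revealing value $x$, the same in any session by consistency). Swapping these two actions --- pull $a$ first, then assign the best value to the later agent --- changes the two rewards from $v+x$ to $x+\max\{v,x\}$, a weak increase, while leaving the information state after both sessions unchanged (committing an agent reveals nothing), so the rest of the policy proceeds identically. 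Iterating this swap shows there is an optimal policy that performs an adaptive sequence of fresh-arm pulls and only afterward commits all remaining agents to the best observed value. Such policies are exactly those enumerated by the recursion in Equation~\eqref{eq:dp_f}: its first branch $v\cdot n$ is ``commit all $n$ remaining agents to $v$'', and its second branch $\max_{a\in B}\E{X_a^t+f(n-1,B\setminus\{a\},\max\{v,X_a^t\})}$ is ``pull fresh arm $a$ for the next agent (who receives $X_a^t$) and continue''. With the normal form in hand, one checks the base cases $f(0,B,v)=0=V^\star(0,B,v)$ and $f(n,\emptyset,v)=vn=V^\star(n,\emptyset,v)$, verifies that $V^\star$ satisfies Equation~\eqref{eq:dp_f}, and concludes $f=V^\star$ by induction on $n$ (the recursion has a unique solution). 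That it suffices to optimize round $t$ in isolation follows exactly as in the proof of Theorem~\ref{thm:sopt is opt}, although the statement here concerns only round $t$.

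For the runtime, filling the table bottom-up in increasing order of $n$, the number of states is $O(\abs{V}\cdot N\cdot 2^K)$ --- namely $N+1$ values of $n$, $2^K$ subsets $B$, and $\abs{V}$ values of $v$. Computing one entry via Equation~\eqref{eq:dp_f} iterates over the at most $K$ arms in $B$ and, for each, over the support points of $X_a^t$ to form the expectation from already-computed entries; charging the support enumeration to the $\abs{V}$ factor already present in the state count gives the stated bound $O(\abs{V}\cdot N\cdot K\cdot 2^K)$.

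I expect the reduction to the state $(n,B,v)$ --- i.e., proving the normal-form lemma rigorously --- to be the main obstacle, and in particular making the exchange argument airtight. The delicate point is that the swap must be shown to (i) never decrease expected welfare and (ii) leave the downstream information state intact, and both hinge on reward consistency within a round: it is exactly because an arm's realized value does not depend on which session it is pulled in that fresh pulls can be moved earlier at no cost, and this is what makes the two-branch recursion a characterization of the optimum rather than merely a lower bound on it.
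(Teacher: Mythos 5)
Your proposal is correct and follows the same skeleton as the paper's proof (a value function over states $(n,B,v)$, induction on $n$, and a state-counting argument for the runtime), but it is considerably more careful where it matters. The paper's proof of correctness is essentially one sentence: it appeals to ``standard dynamic programming arguments'' and asserts that $f(n,B,v)$ ``is designed to represent'' the optimum. What that glosses over---and what you correctly identify and prove---is that the recursion in Equation~\eqref{eq:dp_f} does \emph{not} enumerate all policies: its first branch commits \emph{all} $n$ remaining agents to $v$ at once, so a policy that assigns $v$ to one agent and then pulls a fresh arm for a later agent is not directly represented. Your exchange argument (swap a commitment with a later fresh pull, turning rewards $v+x$ into $x+\max\{v,x\}$ while leaving the information state unchanged, which hinges on reward consistency) is exactly the missing normal-form lemma that upgrades the recursion from a lower bound on the optimum to an exact characterization; together with the domination step for non-maximal observed arms, it justifies restricting the state to the single scalar $v$. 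One small caveat on the runtime: computing $\E{X_a^t+f(n-1,B\setminus\{a\},\max\{v,X_a^t\})}$ naively costs $O(\abs{V})$ per arm per state, so the honest naive bound is $O(\abs{V}^2\cdot N\cdot K\cdot 2^K)$; your suggestion to ``charge'' the support enumeration to the existing $\abs{V}$ factor needs an actual amortization (e.g., sorting $V$ and sharing prefix sums across all values of $v$ for a fixed $(n,B,a)$) to recover the stated $O(\abs{V}\cdot N\cdot K\cdot 2^K)$---a point the paper itself elides by claiming constant time per arm.
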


\begin{proof}
    The optimality of the procedure follows from standard dynamic programming arguments, relying on the fact that the overall problem can be constructed from optimal solutions to its subproblems. The function $f(n, B, v)$ is designed to represent the maximum expected total social welfare achievable given the state $(n, B, v)$. By taking the maximum of the two options (recall Equation~\eqref{eq:dp_f}, $f(n, B, v)$ ensures that the optimal decision is made at each state, thereby maximizing the expected total social welfare.

To determine the runtime complexity, we analyze the number of possible states and the computation required for each state. The function $f(n, B, v)$ is parameterized by the number of remaining agents $n$, which can take $N+1$ possible values; the subset of available arms $B$, which has $2^K$ possible subsets; and the current reward $v$, which can take $\abs{V}$ possible values. Therefore, the total number of states is $O(\abs{V} \cdot N \cdot 2^K)$. For each state $(n, B, v)$, the dynamic programming procedure performs a constant-time computation to calculate $v \cdot n$ and iterates over all arms in $B$, performing constant-time computations for each arm. Given that there are up to $K$ arms in $B$, the computation per state is $O(K)$. Consequently, the overall runtime complexity is $O(\abs{V} \cdot N \cdot K \cdot 2^K )$. 
\end{proof}

Finally, we can relax the finite-support assumption. In scenarios where rewards are arbitrary within the continuous interval $[0,1]$, we discretize the reward space by selecting a finite set $V$ that approximates the continuous range of rewards. By choosing an appropriate granularity for the discretization, we can control the trade-off between the accuracy of the approximation and the computational complexity of the algorithm. While a finer discretization yields a closer approximation to the true optimal solution, it simultaneously increases the size of the state space, thereby enhancing the computational burden. Conversely, a coarser discretization reduces computational requirements at the expense of approximation precision. As this approach is standard, we omit the details.

Unfortunately, the procedure above is inefficient in the number of arms $K$. We leave the tasks of finding an optimal algorithm, proving hardness, and finding approximately optimal algorithms as open problems.
%, i.e., an $ALG'\in \mathcal A$ with runtime $poly\left(\frac{1}{\epsilon} \right)$ such that $\E{SW(ALG')} \geq \max_{ALG\in \mathcal A} \E{SW(ALG)} - \epsilon$, 
\section{Simulations}
\label{sec:simulations}
\newcommand{\uins}{I_U}
\newcommand{\bins}{I_B}
\newcommand{\sins}{I_S}
In this section, we report the results of extensive simulations to empirically validate our theoretical results and test our conjectures. Specifically, we devote Subsection~\ref{subsec:dep t n} to verify the behavior of uniform, nudged, and adversarial arrival as functions of the horizon $T$ and number of agents $N$. In Subsection~\ref{subsec:nudge sensitive}, we provide a sensitivity analysis of nudged arrival. Subsection~\ref{subsec:sim efc} validates our results from Section~\ref{sec:extensions}, as well as test Conjecture~\ref{thm: efc sw}.

\paragraph{Simulation setup}
For analyses where the time horizon $T$ and the number of agents $N$ are not explicitly specified, we set $T = 10{,}000$ rounds and $N=2$ agents by default. We report the average results over $1{,}000$ independent runs, with the shaded areas indicating three standard deviations from the mean. All simulations were conducted on a standard Lenovo laptop, with the total execution time amounting to a couple of hours.

We used two instances in most of the experiments, 
\begin{itemize}
    \item Uniform instance ($\uins$): $K=4$ arms, all with $\uni{0,1}$ reward distributions. The algorithm explores arms until it finds one with a value greater than $\frac{3}{4}$. 
    \item Bernoulli instance ($\bins$): $K=3$ Bernoulli arms with success parameters of $p_1 = 0.6, p_2=0.4, p_3=0.2$. The algorithm explores arms in a descending order of $p_i$, until a reward of 1 is materialized.
\end{itemize}

\subsection{Dependence on $T$ and $N$}\label{subsec:dep t n}
\begin{figure}
    \centering
    \includegraphics[width=\linewidth]{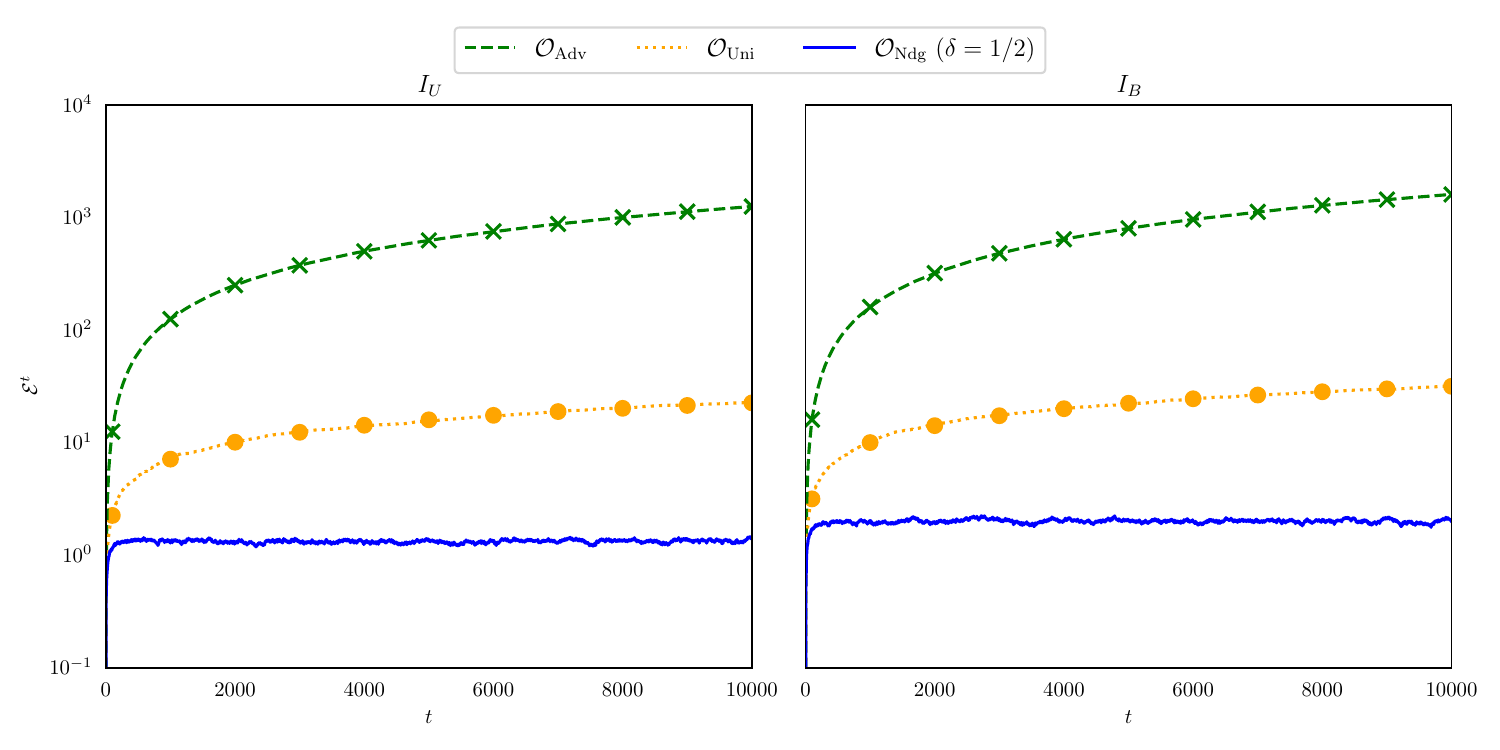}
    \caption{
    $\env^t$ as a function of $t$ for both the uniform instance ($\uins$, left panel) and the Bernoulli instance ($\bins$, right panel), each with $N=2$. 
    The three arrival functions shown are $\uniord$, $\sugord$ (with $\delta=\frac 1 2$), and $\advord$. Green `X' markers represent the maximum likelihood estimates (MLE) for the linear model $y = c \cdot x$, while orange circles indicate the MLE for the square-root model $y = c \cdot \sqrt{x}$. The perfect alignment of the simulated data with these curves confirms our theoretical predictions.}
    \label{fig: envy-func-t}
\end{figure}
In this subsection, we validate our results regarding the dependency on $T$ and $N$. Figure~\ref{fig: envy-func-t} shows the cumulative envy as a function of time for all three arrival functions: $\advord$ (green dashed), $\uniord$ (orange dotted), and $\sugord$ (blue smooth). Each plot shows the cumulative envy over time on a logarithmic vertical axis, reflecting the distinct asymptotic behaviors of the arrival functions. The left panel presents the uniform instance $\uins$, and the right panel the Bernoulli instance $\bins$.  Due to the logarithmic scale, the shaded regions indicating three standard deviations are barely distinguishable; we provide further details in Table~\ref{table} in Subsection~\ref{appendix: simulations}.

For both instances, we see that the cumulative envy of $\advord$ and $\uniord$ increases over time, whereas the cumulative envy of $\sugord$ remains nearly constant (subject to some noise). As time progresses, we observe substantially different growth rates in envy across the three arrival functions, consistent with our theoretical analysis. 

The green `X' markers represent the maximum likelihood estimates (MLE) for the linear function $y = c \cdot x$, which closely match the green (dashed) curve for the envy under $\advord$, thereby confirming the linear growth predicted by Proposition~\ref{thm: adv-envy}. The orange (dotted) curve corresponds to $\uniord$, and the orange circles depict the MLE for the square-root function $y = c \cdot \sqrt{x}$. Their close alignment supports Corollary~\ref{cor: uni-envy}. Finally, Theorem~\ref{thm: sugg-envy} asserts that envy under $\sugord$ remains bounded when both the instance parameters and the number of agents are fixed. This is precisely what we observe in the blue (smooth) curves of both panels in Figure~\ref{fig: envy-func-t}.
\begin{figure}
    \centering
    \includegraphics[width=\linewidth]{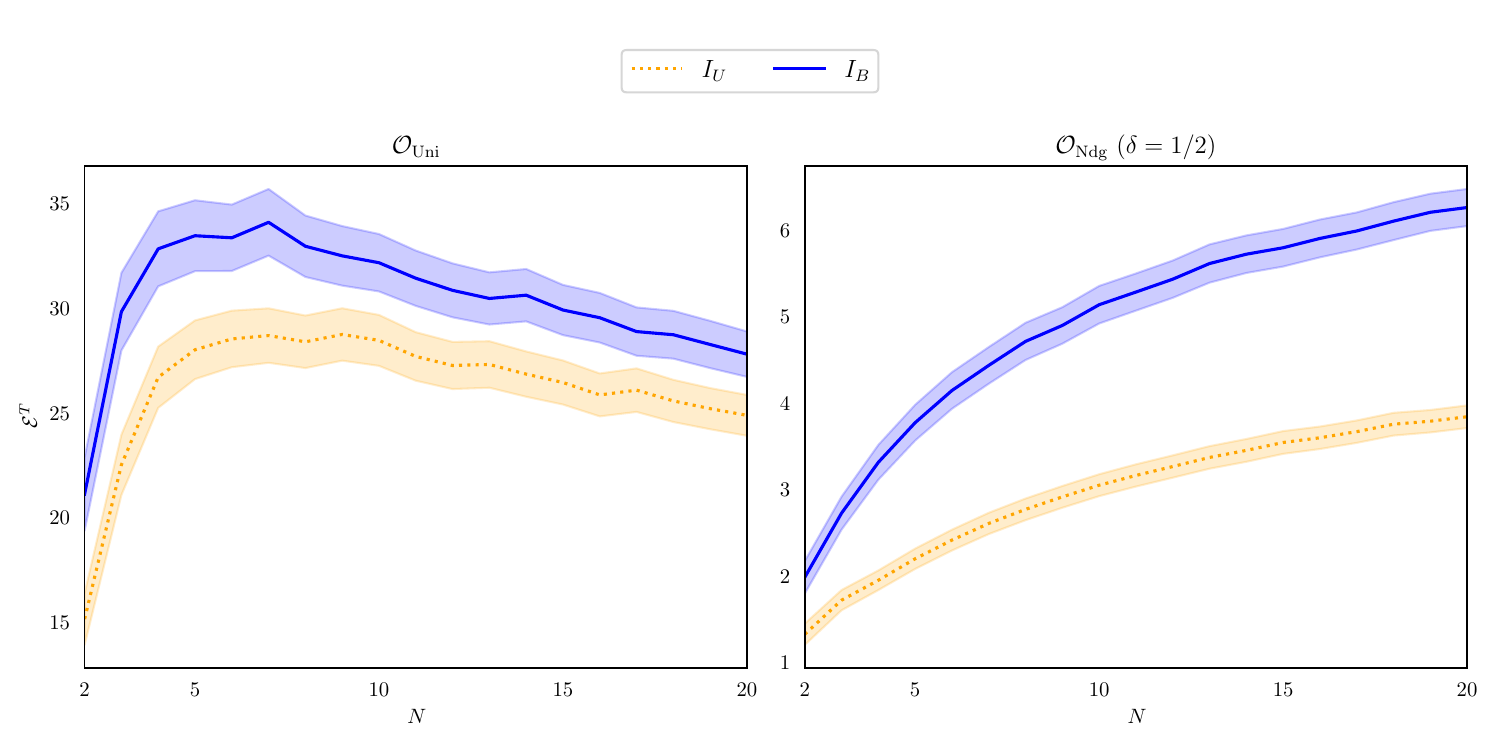}
    \caption{$\env^T$ as a function of $N$ for both $\uins$ and $\bins$, under $\uniord$ (left panel) and $\sugord$ with $\delta=\nicefrac{1}{2}$ (right panel).}
    \label{fig: envy-T-func-N}
\end{figure}

We proceed to examine how envy depends on the number of agents, focusing on $\uniord$ and $\sugord$ with $\delta = \frac{1}{2}$. In Figure~\ref{fig: envy-T-func-N}, we plot the cumulative envy after $T = 10^4$ rounds as a function of the number of agents $N$, with $N$ ranging from $2$ to $20$. The left panel depicts the envy under $\uniord$ for both instances, $\bins$ (blue smooth) and $\uins$ (orange dotted). In each instance, envy initially rises for small $N$ and then declines, matching the intuition from Corollary~\ref{thm: sqrt TK N}: as $N$ grows, there are increasingly more agents exploiting rather than exploring (given that these instances have $K\in\{3,4\}$ arms). The right panel in Figure~\ref{fig: envy-T-func-N} considers nudged arrival $\sugord$. Here, envy increases with the number of agents, as Theorem~\ref{thm: sugg-envy} hints. However, this increase is not linear in $N$, but rather milder. We conjecture that the dependence of $\sugord$ is essentially sub-linear in $N$, leaving a precise characterization for future work.
\subsection{Sensitivity Analysis for Nudged Arrival}\label{subsec:nudge sensitive}
\begin{figure}
    \centering
    \begin{subfigure}{0.49\textwidth}
        \includegraphics[width=\linewidth]{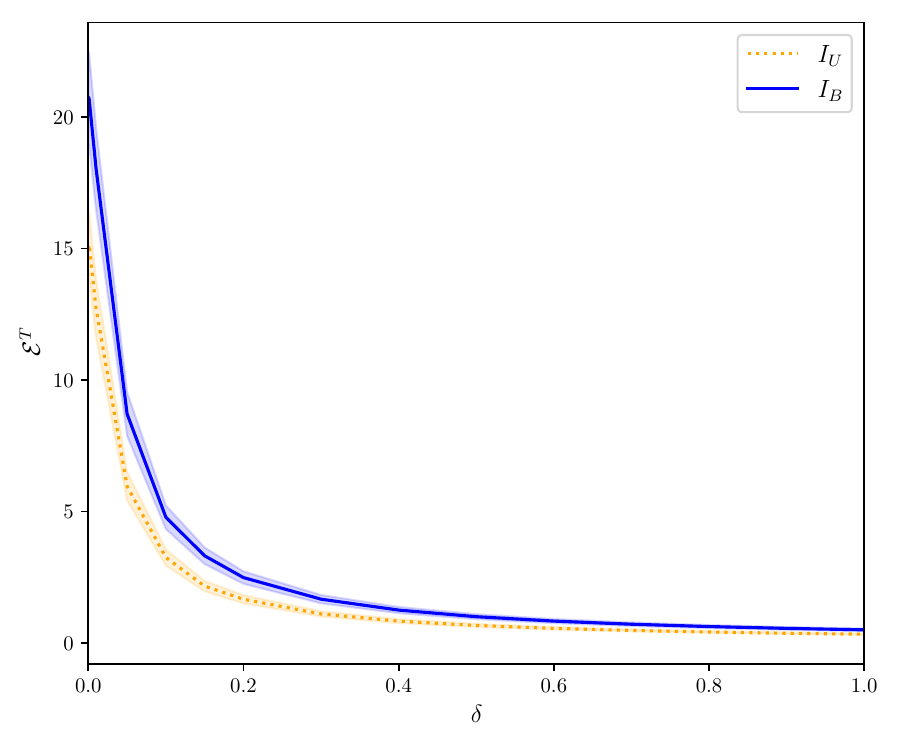}
        \caption{Envy as a function of $\delta$ for $\uins$ and $\bins$.
        %$\env^T$ as a function of $\delta$ for both Uniform and Bernoulli instances, for $\sugord$.
        }\label{fig: envy_at_T_as_func_delta}
    \end{subfigure}
    \hfill
    \begin{subfigure}{0.49\textwidth}
        \includegraphics[width=\linewidth]{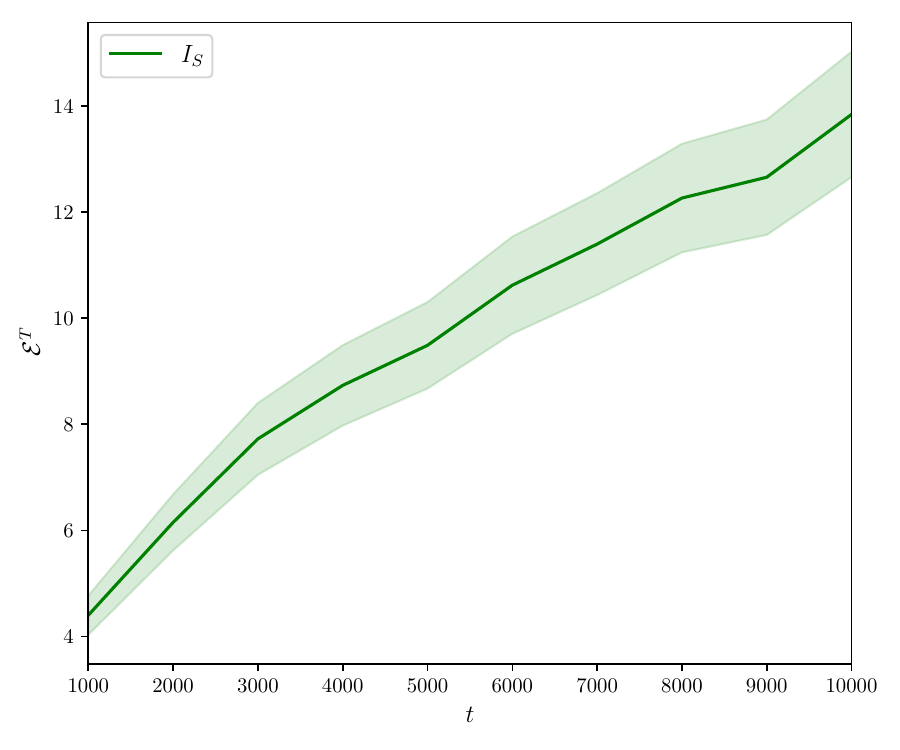}
        \caption{Envy as a function of $\tdif$ for $\sins$.
        %$\env^T$ as a function of $T$ for an instances with $\Tilde{\dif}$ that decreases as a function of $T$ under $\ordname_{\delta=\nicefrac{1}{4}}$. 
        }\label{fig: envy_at_T_delta_func_of_T}
    \end{subfigure}
    \caption{Sensitivity Analysis for $\sugord$.}
\end{figure}
In this subsection, we provide a sensitivity analysis for $\sugord$. Figure~\ref{fig: envy_at_T_as_func_delta} shows the cumulative envy as a function of the parameter $\delta$, which reflects how strongly the system can influence the agents' arrival. 
For both $\bins$ (blue smooth) and $\uins$ (orange dotted), we observe that envy decreases as $\delta$ increases. 
Moreover, this reduction appears consistent with $\nicefrac{1}{\delta}$, aligning with the prediction from Theorem~\ref{thm: sugg-envy}.

To examine the dependence of envy on $\tdif$, we introduce reward distributions that explicitly involve $T$, unlike $\uins$ and $\bins$. 
Specifically, we define a new instance, $\sins$, with $K = 2$. 
In this instance,
\begin{align*}
    X_1 \sim
    \begin{cases}
    1 & w.p. \ \  \frac{1}{2} \\
    \frac{1}{4} & w.p.\  \ \frac{1}{2}
    \end{cases},\quad 
    X_2 \sim
    \begin{cases}
    1 & w.p. \ \  \frac{1}{4} + \frac{2}{\sqrt{T}} \\
    0 & w.p.\  \ \frac{3}{4} - \frac{2}{\sqrt{T}}
    \end{cases}.
\end{align*}
Furthermore, we consider the following algorithm: In the first session of every round, it pulls $a_1$. If the observed reward is $1$, it pulls $a_1$ again in the second session; otherwise, it pulls $a_2$. Note that 
\[
\E{\rt{(1)}}=\frac{1+0.25}{2}=\frac{5}{8},
\]
while
\[
\E{\rt{(2)}}= \prb{X_1 = 1}\cdot 1 + \prb{X_1 = 0.25}\cdot \E{X_2} = \frac{1}{2} + \frac{1}{2} \cdot \left( \frac{1}{4} + \frac{2}{\sqrt{T}} \right) = \frac{5}{8} +\frac{1}{\sqrt{T}};
\]
thus, $\tilde{\dif}= \E{\rt{(2)}-\rt{(1)}}= \frac{1}{\sqrt{T}}$.

We now examine how envy evolves in the instance $\sins$. 
Figure~\ref{fig: envy_at_T_delta_func_of_T} displays the cumulative envy $\env^T$ at the final round $T$ for various values of $T$. 
As $T$ grows, $\tilde{\dif}$ decreases, causing envy to increase accordingly. 
Moreover, since $\tfrac{1}{\tilde{\dif}} = \sqrt{T}$, Theorem~\ref{thm: sugg-envy} implies that envy scales proportionally to $\sqrt{T}$, which is consistent with the empirical results.

\subsection{Analysis of EFC}\label{subsec:sim efc}
In this subsection, we examine the theoretical results from Section~\ref{sec:extensions}. Naturally, as our results in that section apply to the special case of Example~\ref{example 1}, all simulations were performed with $N=2$ agents, $K=2$ arms $X_1, X_2 \sim \uni{0,1}$ under uniform arrival and the $\efc$ algorithm. %We remind the reader that the social welfare in any round $t$ is given by $R_1^t+R_2^t$.

%As can be seen, all empirical results align with our theoretical results including Theorem~\ref{??}.
\begin{figure}
    \centering
        \includegraphics[scale=0.6]{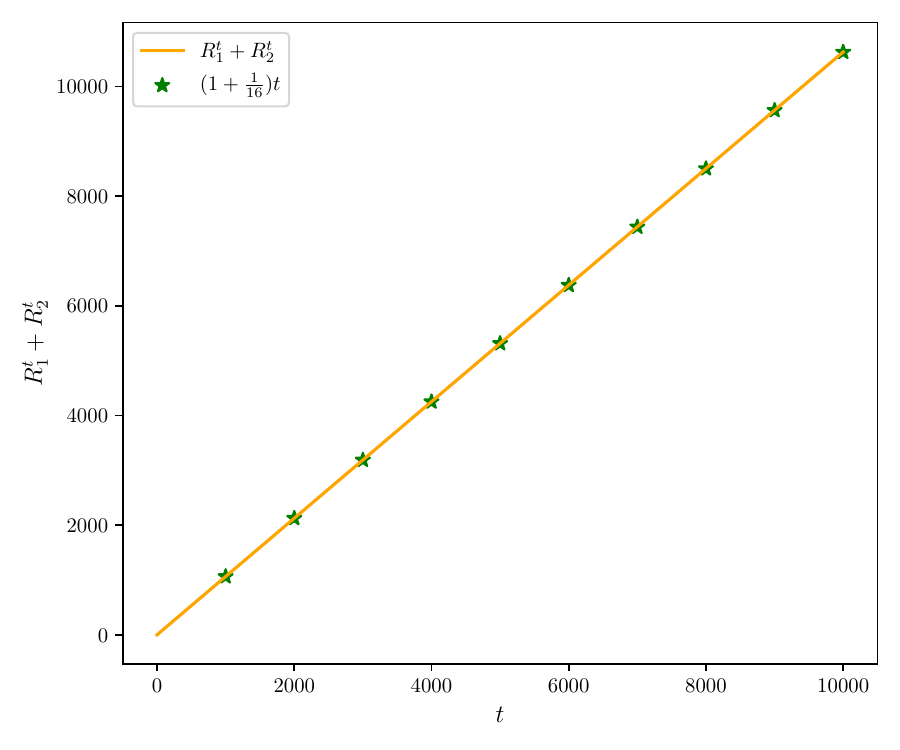}
        \caption{$\Rt{1}+\Rt{2}$ under the $\efc$ algorithm with $C=1$, as a function of $t$.% compared to $\left(1 + \frac{1}{16}\right)t$.
        }\label{fig: sw ef1}
\end{figure}

\begin{figure}
    \centering
        \includegraphics[width=\linewidth]{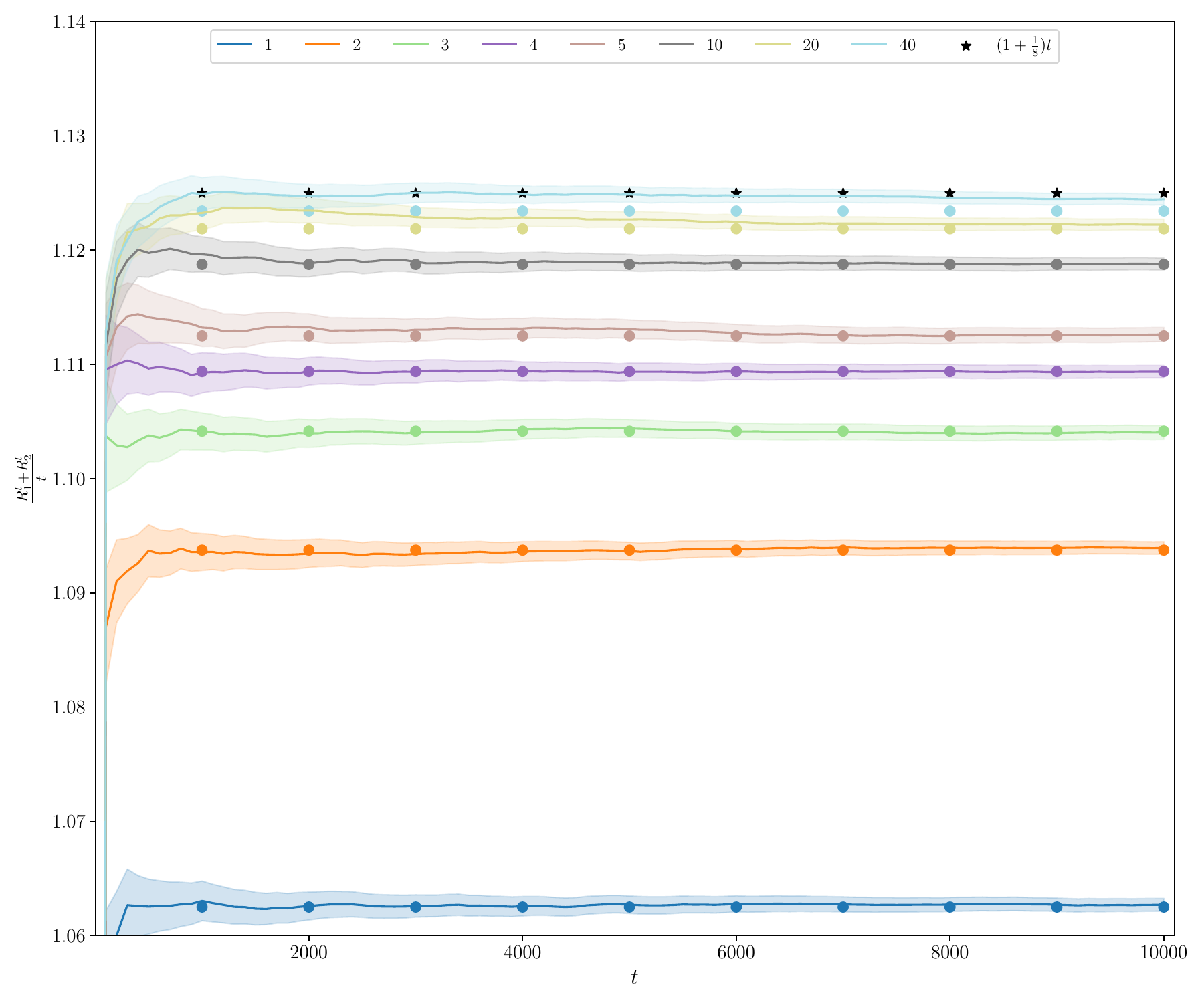}
        \caption{$\frac{\Rt{1}+\Rt{2}}{t}$ under the $\efc$ algorithm, as a function of $t$ compared to $1 + \frac{1}{8}\cdot \frac{2C-1}{2C}$. and $1+\frac{1}{8}$}\label{fig: sw efc}
\end{figure}

Figure~\ref{fig: sw ef1} illustrates the social welfare for $\efc$ with $C=1$, plotting 
$\sw_t := R_1^t + R_2^t$ as the orange (smooth) curve. 
The green `X' markers correspond to the line $y=(1+\frac{1}{16})x$, 
the social welfare guaranteed by Theorem~\ref{thm:ef1evny+sw}; 
their perfect alignment confirms the theoretical prediction.

Figure~\ref{fig: sw efc} evaluates Conjecture~\ref{thm: efc sw}, asserting that for any $C \ge 1$, 
we have $\sw_t \geq (1+\frac{1}{8}-\frac{1}{16C})t$. 
To facilitate comparison, the vertical axis in Figure~\ref{fig: sw efc} depicts the average welfare, $\tfrac{\sw_t}{t}$, 
versus the round number $t$ on the horizontal axis. 
We ran $\efc$ for $C \in \{1,2,3,4,5,10,20,40\}$, adding circle markers in corresponding colors to highlight 
the values of $1 + \tfrac{2C - 1}{2C} \cdot \tfrac{1}{8}$. 
Additionally, the star marker shows the horizontal line $1 + \tfrac{1}{8}$, 
representing the maximum achievable welfare in this setting, as noted in Observation~\ref{obs:opt for tradeoff}. 
For each $C$, the corresponding curve lies above the dotted line, 
consistent with our conjectured lower bound. 
When $1 \leq C \leq 10$, the average welfare nearly coincides with the conjectured bound, 
whereas for $C = 20$ and $C = 40$, the welfare is strictly higher than the markers. 
The reason for the latter is that such high envy states occur rarely when $T = 10^4$, 
so the constraint in Line~\ref{efclin:pull_a1_cond} in $\efc$ is seldom activated in practice.

\subsection{Standard Deviations for Figure~\ref{fig: envy-func-t}}\label{appendix: simulations}
\begin{table}[!ht]
    \centering
    \caption{Three standard deviations for Figure~\ref{fig: envy-func-t}.}\label{table}
    \begin{tabular}{|l|l|l|l|l|l|l|}
    \hline
        $t$ & $\uins, \advord$ & $\uins, \uniord$ & $\uins, \sugord$ & $\bins, \advord$ & $\bins, \uniord$ &  $\bins, \sugord$ \\ \hline
        1000 & 0.79 & 0.54 & 0.13 & 1.12 & 0.72 & 0.18 \\ \hline
        2000 & 1.11 & 0.74 & 0.12 & 1.59 & 1.02 & 0.18 \\ \hline
        3000 & 1.34 & 0.87 & 0.12 & 1.92 & 1.27 & 0.19 \\ \hline
        4000 & 1.52 & 0.99 & 0.12 & 2.22 & 1.38 & 0.19 \\ \hline
        5000 & 1.72 & 1.09 & 0.13 & 2.50 & 1.59 & 0.18 \\ \hline
        6000 & 1.91 & 1.23 & 0.12 & 2.72 & 1.73 & 0.19 \\ \hline
        7000 & 2.11 & 1.34 & 0.12 & 2.91 & 1.89 & 0.19 \\ \hline
        8000 & 2.30 & 1.42 & 0.12 & 3.16 & 2.08 & 0.18 \\ \hline
        9000 & 2.43 & 1.52 & 0.13 & 3.37 & 2.22 & 0.21 \\ \hline
        10000 & 2.58 & 1.59 & 0.13 & 3.55 & 2.35 & 0.18 \\ \hline
    \end{tabular}
\end{table}

}%
\fi

\end{document}